\newcommand{\insrt}[1]{{\color{red}#1}}
\newcommand{\norm}[1]{\Vert#1\Vert}
\newcommand{\supp}[1]{\mathrm{supp}(#1)}
\newcommand{\rank}{\mathrm{rank}}
\newcommand{\id}{\text{\rm id}}
\newcommand{\idm}{\id}
\newcommand{\idvr}{I_{n_1}}
\newcommand{\idvc}{I_{n_2}}
\newcommand{\R}{\mathcal{R}}
\newcommand{\A}{\mathcal{A}}
\newcommand{\tnorm}[1]{{\left\vert\kern-0.25ex\left\vert\kern-0.25ex\left\vert #1
    \right\vert\kern-0.25ex\right\vert\kern-0.25ex\right\vert}}
\newtheorem{theorem}{Theorem}[section]
\newtheorem{lemma}[theorem]{Lemma}
\newtheorem{proposition}[theorem]{Proposition}
\newtheorem{corollary}[theorem]{Corollary}
\theoremstyle{definition}
\newtheorem{definition}[theorem]{Definition}
\newtheorem{remark}[theorem]{Remark}
\newcommand{\quant}[1]{{\left\langle #1\right\rangle}}
\newcommand{\var}{\mathsf{var}}
\newcommand{\floor}[1]{{\left\lfloor {#1} \right \rfloor}}
\newcommand{\ceil}[1]{{\left\lceil {#1} \right \rceil}}
\newcommand{\Real}{\mathrm{Re}}
\newcommand{\fnorm}[1]{\|#1\|_{\rm F}}
\newcommand{\iprod}[2]{\left \langle #1, #2 \right\rangle}
\newcommand{\Iprod}[2]{\langle #1, #2 \rangle}
\newcommand{\eg}{e.g.\xspace}
\newcommand{\ie}{i.e.\xspace}
\newcommand{\iid}{i.i.d.\xspace}
\newcommand{\argmin}{\mathop{\rm argmin}}
\newcommand{\argmax}{\mathop{\rm argmax}}
\newcommand{\Expect}{\mathbb{E}}
\newcommand{\expect}[1]{\mathbb{E}\left[ #1 \right]}
\newcommand{\complex}{\mathbb{C}}
\newcommand{\cz}{\mathbb{C}}
\newcommand{\naturals}{{\mathbb{N}}}
\newcommand{\transpose}{\top}
\newcommand{\vect}{\mathrm{vec}}
\newcommand{\eexp}{{\rm e}}
\newcommand{\pth}[1]{\left( #1 \right)}
\newcommand{\tu}{{\tilde{u}}}
\newcommand{\tR}{{\tilde{R}}}
\newcommand{\tT}{{\tilde{T}}}
\newcommand{\tU}{{\tilde{U}}}
\newcommand{\tV}{{\tilde{V}}}
\newcommand{\tW}{{\tilde{W}}}
\newcommand{\calA}{{\mathcal{A}}}
\newcommand{\calC}{{\mathcal{C}}}
\newcommand{\calF}{{\mathcal{F}}}
\newcommand{\calG}{{\mathcal{G}}}
\newcommand{\calH}{{\mathcal{H}}}
\newcommand{\calJ}{{\mathcal{J}}}
\newcommand{\calN}{{\mathcal{N}}}
\newcommand{\calP}{{\mathcal{P}}}
\newcommand{\calS}{{\mathcal{S}}}
\newcommand{\calCN}{{\mathcal{CN}}}
\newcommand{\hR}{{\hat{R}}}
\newcommand{\hT}{{\hat{T}}}
\newcommand{\hU}{{\hat{U}}}
\newcommand{\hV}{{\hat{V}}}
\newcommand{\hW}{{\hat{W}}}
\newcommand{\hX}{{\hat{X}}}
\newcommand{\hY}{{\hat{Y}}}
\newcommand{\bbC}{{\mathbb{C}}}
\newcommand{\bbN}{{\mathbb{N}}}
\newcommand{\bbS}{{\mathbb{S}}}
\newcommand{\sfV}{{\mathsf{V}}}
\newcommand{\vzstar}{v_0^\text{\rm opt}}
\newcommand{\vzth}{v_0^\text{\rm th}}
\tikzstyle{int}=[draw, fill=blue!20, minimum size=2em]
\tikzstyle{dot}=[circle, draw, fill=blue!20, minimum size=2em]
\tikzstyle{init} = [pin edge={to-,thin,black}]
\title{Near Optimal Compressed Sensing \\ of a Class of Sparse Low-Rank Matrices \\ via Sparse Power Factorization}
\author{Kiryung Lee, Yihong Wu, and Yoram Bresler
\thanks{K. Lee is with the School of Electrical and Computer Engineering, Georgia Institute of Technology, Atlanta, GA 30332. Email: kiryung@ece.gatech.edu.
Y. Wu and Y. Bresler are with Coordinated Science Laboratory and Department of Electrical and Computer Engineering, University of Illinois at Urbana-Champaign, Urbana, IL 61801. Emails: \{yihongwu,ybresler\}@illinois.edu.
This work was supported in part by the National Science Foundation (NSF) under Grant IIS 14-47879.
This paper was presented in part at ITW 2014 \cite{LeeWB2014}.}
}
\begin{document}

\maketitle

\doublespacing

\begin{abstract}
  Compressed sensing of simultaneously sparse and low-rank matrices enables recovery of sparse signals from a few linear measurements of their bilinear form. One important question is how many measurements are needed for a stable reconstruction in the presence of measurement noise. Unlike conventional compressed sensing for sparse vectors, where convex relaxation via the $\ell_1$-norm achieves near optimal performance, for compressed sensing of sparse low-rank matrices, it has been shown recently \cite{oymak2015simultaneously} that convex programmings using the nuclear norm and the mixed norm are highly suboptimal even in the noise-free scenario.

We propose an alternating minimization algorithm called sparse power factorization (SPF) for compressed sensing of sparse rank-one matrices. For a class of signals whose sparse representation coefficients are fast-decaying, SPF achieves stable recovery of the rank-1 matrix formed by their outer product and requires number of measurements within a logarithmic factor of the information-theoretic fundamental limit. For the recovery of general sparse low-rank matrices, we propose subspace-concatenated SPF (SCSPF), which has analogous near optimal performance guarantees to SPF in the rank-1 case. Numerical results show that SPF and SCSPF empirically outperform convex programmings using the best known combinations of mixed norm and nuclear norm.

\end{abstract}


\section{Introduction}
\subsection{Problem statement}

Let $X \in \bbC^{n_1 \times n_2}$ be an unknown rank-$r$ matrix
whose singular value decomposition is written as $X = U \Lambda V^*$,
where $\Lambda \in \mathbb{R}^{r \times r}$ is a strictly positive diagonal matrix, $U \in \bbC^{n_1 \times r}$, and $V \in \bbC^{n_2 \times r}$ satisfy $U^* U = V^* V = I_r$.
We further assume that $X$ is sparse in the following senses:
i) either $U$ or $V$ is row-$s$-sparse, that is, has at most $s$ nonzero rows;\footnote{Without loss of generality, we assume that $U$ is row-$s$-sparse.} or ii) $U$ and $V$ are row-$s_1$-sparse and row-$s_2$-sparse, respectively.

Suppose that the measurement vector $b \in \mathbb{C}^m$ of $X$ is obtained
using a known linear operator $\A: \bbC^{n_1 \times n_2} \to \mathbb{C}^m$ as
\begin{equation}
b = \A(X) + z,
    \label{eq:model}
\end{equation}
where $z \in \bbC^m$ denotes additive noise.

We study the problem of stable reconstruction of the unknown \emph{simultaneously sparse and low-rank} $X$ from the noisy linear measurements $b$.
Our goal is to find a good estimate $\hX$ of $X$ from a minimal number of measurements using a computationally efficient algorithm, which satisfies the following stability criterion:
\begin{equation}
\frac{\fnorm{\hat{X} - X}}{\fnorm{X}} \leq C \cdot \frac{\norm{z}_2}{\norm{\A(X)}_2}
	\label{eq:stability}
\end{equation}
for all $z \in \complex^m$ and some absolute (dimension-independent) constant $C$.
The condition in (\ref{eq:stability}) implies that the normalized reconstruction error is at most a constant factor of the noise-to-signal ratio in the measurements, which, in the absence of noise, automatically implies the perfect reconstruction of $X$.

\subsection{Motivating applications: sparse bilinear inverse problems}

Bilinear inverse problems arise ubiquitously in a variety of areas.
For example, \emph{blind deconvolution} (cf. \cite{ahmed2014blind} and references therein) factors
two input signals given their convolution, which is a bilinear function.
In general, bilinear inverse problems involve various ambiguities and do not admit a unique solution. For example, any bilinear inverse problem suffers from scaling ambiguity and the best result one can get is to identify the solution up to a scalar factor. Besides the fundamental ambiguities that cannot be overcome by any method, it is still challenging to identify the solution up to an appropriate equivalence class. To overcome this difficulty, various sparsity models were introduced and the resulting \emph{sparse} bilinear inverse problem has been shown empirically to admit good solutions in various real-world applications.

In blind deconvolution, signals of interest admit sparse representations \cite{mallat2008wavelet} and these sparse signal models have been exploited in denoising, compression, compressed sensing, etc. The impulse responses of convolution systems in applications have sparse representations too. For example, high definition television (HDTV) channels, hilly terrain delay profiles, and underwater acoustic or reverberant room channels, all have sparse channel coefficients (see \cite{AisAbe2008blind} and references therein). These sparsity models were employed to solve blind deconvolution problems, e.g., in the context of blind echo cancellation \cite{AisAbe2008blind,SudAG2010double}.

In these applications, sparsity models narrow down the solution set, which, in a bilinear inverse problem, is a product of subspaces, by replacing the subspaces by unions of low-dimensional subspaces.
This makes robust reconstruction possible even when subsampling is present, which is often desired in applications such as calibration-free parallel imaging.

Recently, it has been proposed to reformulate bilinear inverse problems as the recovery of a low-rank matrix from its linear measurements, through the so-called ``lifting'' procedure \cite{ahmed2014blind,ChoMit2013}. Ahmed et al. \cite{ahmed2014blind} first introduced this idea to solve the blind deconvolution problem as a matrix-valued linear inverse problem, and further showed that nuclear-norm minimization is nearly optimal for certain random linear operators. On the other hand, Choudhary et al. \cite{ChoMit2013} showed negative results in the setup where both the input and linear operator are deterministic. In a nutshell, in the lifted formulation, one obtains a solution to a bilinear system $f_i(x,y) = b_i$, $i = 1, \cdots, m$ from a low-rank solution to a linear system $\A(X) = b$ in the matrix-valued unknown $X$. In the lifted formulation of blind deconvolution, the unknown matrix $X$ is rank-one. On the other hand, in MIMO channel identification \cite{ding2001blind}, the measurements are given as superpositions of convolutions and therefore the solution to the lifted formulation is low-rank, where the rank is determined as the number of the input channels.

In this paper, we consider the lifted linear inverse problem, although we adopt a \emph{nonconvex} approach. Then, the scaling ambiguity is absorbed into the factorization of the matrix-valued solution to the lifted formulation. In the lifted formulation of the sparse bilinear inverse problem, the unknown $X$ has a sparsity model corresponding to those imposed on the unknowns of the original bilinear problem.

The product of two compatible matrices is also bilinear in the individual matrices;
hence, \emph{matrix factorization} is another bilinear inverse problem (cf. \cite{AhaEB2006ksvd}).
Sparsity models also arise in certain matrix factorization problems.
For example, dictionary learning aims to find a good sparse representation for a given data set, which can be formulated as a matrix factorization problem with a sparsity prior \cite{AhaEB2006ksvd}.
Learning a sparsifying dictionary or transform from compressive measurements \cite{RavBre2011dlmri,GleEld2011blind,ravishankar2015efficient} has a similar flavor, but is a more difficult problem, since fewer equations are available to determine the unknowns. Compressive blind source separation \cite{WuCC2010cbss} is yet another matrix factorization problem that exploits a sparsity prior.
These applications can be naturally formulated as the recovery of a sparse and low-rank matrix from its linear measurements.

\subsection{Related work}

The recovery of a \emph{sparse and low-rank} matrix from its minimal incoherent linear measurements
is a special case of compressed sensing of general low-rank matrices.
Compressed sensing of low-rank matrices without sparsity constraints has been well studied as an extension of compressed sensing of sparse vectors.
Recht et al.~\cite{RecFP2010} presented the analogy between the two problems
and showed that the minimum nuclear norm solution to the linear system given by the measurements
is guaranteed to recover the unknown low-rank matrix under the rank-restricted isometry property.
Greedy recovery algorithms for compressed sensing of sparse vectors and their performance guarantees under the restricted isometry property (RIP) have been also extended to analogous algorithms (e.g., ADMiRA \cite{LeeBre2010admira}, SVP \cite{JaiMekDhi2010SVP}) with corresponding guarantees for compressed sensing of low-rank matrices under the rank-restricted isometry property.
An alternating minimization algorithm called \textit{power factorization} (PF) has been proposed
as a computationally efficient heuristic \cite{HalHer2009pf} for the recovery of general low-rank matrices, and its performance guarantee in terms of the rank-restricted isometry property was presented recently \cite{JaiNS2012}. In particular, for a certain class of sensing systems, it has shown \cite{RecFP2010,CanPla2011oracle,liu2011universal} that $O(n r)$ or slightly more, by a logarithmic factor, measurements suffice for stable recovery of an $n \times n$ matrix of rank-$r$.

When the unknown matrix is not only low-rank but also sparse, the number of compressed sensing measurements required for its recovery is further reduced. Suppose that the unknown $n \times n$ matrix is of rank $r$ and has up to $s$ nonzero rows and up to $s$ nonzero columns, and its noise-free measurements are obtained as the inner products with i.i.d. Gaussian matrices. Oymak et al. \cite{oymak2015simultaneously} showed that by solving a combinatorial optimization problem, exact recovery is guaranteed with $O(\max\{rs,s\log (en/s)\})$ measurements. However, they also showed the following negative result: Combining convex surrogates for multiple nonconvex priors does not improve the recovery performance compared to the case of using just one of the priors via a convex surrogate. More precisely, exact recovery using combinations of the nuclear norm and the $\ell_{1,2}$ norm requires $\Omega(\min\{rn,sn\})$ measurements. This is significantly worse than the sample complexity that can be obtained by solving a combinatorial optimization problem.

One might attempt to modify ADMiRA or SVP to exploit the low-rank and sparsity priors simultaneously.
Unfortunately, the key procedure in these algorithms is to project a given matrix onto a set of low-rank and sparse matrices, which is another challenging open problem, in the sense that there is no algorithm with a performance guarantee for this problem.

A closely-related statistical problem involving both sparsity and low-rankness is \emph{sparse principal component analysis} (SPCA) \cite{JohnstoneLu09,Ma11,Birnbaum12,Vu12,CMW12,CMW13}, which deals with estimating the principal subspaces of a covariance matrix when the singular vectors are sparse. For instance, in the rank-one case, one observes independent samples from $\calN(0,I_p+\lambda vv^*)$ and estimates the leading singular vector $v$ which is known to be sparse a priori. The minimax estimation error of SPCA has been characterized within constant factors in \cite{Birnbaum12,CMW12,Vu12,CMW13}.
However, recently it has been shown \cite{Berthet13} that attaining the minimax rate of SPCA can be reduced to planted clique, an open problem that is believed to be hard. Another related problem is \emph{submatrix detection} or \emph{biclustering} \cite{Shabalin09,BI12}, where $X=\lambda uv^*+ Z$ is observed with $Z$ i.i.d. Gaussian and $u,v$ sparse binary vectors. The goal is to decide whether or not $\lambda=0$. It has been shown that attaining the optimal rate for this problem is computationally hard in a similar sense \cite{MW13b}. The main distinction between these problems and our sparse inverse problem \prettyref{eq:model} is that in the former the low-rank and sparse signals (either the covariance or the mean matrix) are observed directly from noisy samples, whereas in the latter the signal is only observed indirectly, through linear measurements that mix the components. Therefore, our sparse inverse problem \prettyref{eq:model} is harder than the other two problems. However, the relative difficulty arising from indirect access to the measurement through noisy linear measurements can be overcome if the linear operator satisfies the restrict isometry property. All the aforementioned problems are difficult in general. In our problem, under an additional ``peakiness'' assumption, we manage to solve the problem with a provably near optimal performance guarantee.

\subsection{Main contributions}

As discussed earlier, in the existing theory for compressed sensing of simultaneously sparse and low-rank matrices, the best known performance guarantee on the sample complexity for polynomial-time algorithms is significantly worse than the theoretical optimum.\footnote{When we posted the first draft of this work to arXiv \cite{LeeWB2013}, there were no algorithms that achieve a near optimal sample complexity for recovering simultaneously sparse and low-rank matrices. Subsequently, it has been shown that for a special measurement scheme with nested structures, a two-step approach achieves near optimal sample complexity \cite{bahmani2015near,iwen2015robust}. However, designing such nested structured sensing mechanisms are impossible in many applications such as blind deconvolution. Very recently, we have learned that the linear operator arising in blind deconvolution satisfies the RIP \cite{ahmed2016rip}. Therefore, the performance guarantee for SPF in this paper applies not only to compressed sensing with i.i.d. Gaussian measurements but also to blind deconvolution.}
Toward closing this gap, in this paper, we propose a set of recovery algorithms that provide near optimal performance guarantees at low computational cost. Motivated by the lifted formulation of various sparse bilinear inverse problems, we first focus on the rank-one case and will later extend both the algorithms and the analysis to the rank-$r$ case.

We propose an alternating minimization algorithm called \textit{sparse power factorization} (SPF),
which reconstructs the unknown sparse rank-one matrix from its linear measurements. SPF is obtained by modifying the updates of estimates of $u$ and $v$ in PF (see \prettyref{sec:PF-review} for a summary) to exploit their sparsity priors. In principle, any algorithm for recovery of sparse vectors from linear measurements can be employed for these steps. In this paper, we focus on a specific procedure called hard thresholding pursuit (HTP) \cite{Fou2011htp}. For recovering sparse vectors and under RIP assumptions, HTP provides performance guarantees for both estimation error and convergence rate, which can be further generalized in the presence of noisy measurements.
Exploiting these guarantees for HTP, we show that the iterative updates in SPF converge linearly under RIP assumption.

Like most alternating minimization methods, the empirical performance of PF and SPF depends crucially on the initial values. Furthermore, to obtain a provable guarantee, it is important to design the initialization procedure carefully. Let $\A^*$ denote the adjoint operator of $\A$. Jain et al. \cite{JaiNS2012} showed that PF initialized by the leading right singular vector of the proxy matrix $\A^*(b)$ \cite{NeeTro09} provides stable recovery of a rank-$r$ matrix under the rank-$2r$ RIP. In particular, if the unknown $n_1 \times n_2$ matrix is rank-one, then their guarantee holds with $O(\max\{n_1,n_2\})$ i.i.d. Gaussian measurements.

When the unknown $n_1 \times n_2$ matrix is row-$s$-sparse, the initialization needs to be modified accordingly. We propose to initialize the SPF algorithm by the leading right singular vector of a submatrix of $\A^*(b)$ whose rows are restricted to an estimated row-support of the left singular vector $u$, SPF initialized with a good approximation on either the left or the right singular vector provides near-optimal performance guarantee whenever the linear operator satisfies RIP for rank-2 and row-$3s$-sparse matrices, which holds with $O((s+n_2) \log (e n_1/s))$ i.i.d. Gaussian measurements.
In particular, when the entries of $u$ are fast-decaying, which is often satisfied by signal models in practice, we show that a simple thresholding algorithm provides such a good initialization.

In the case when the unknown $n \times n$ matrix is doubly-$s$-sparse (both row and column sparse), similarly to the previous case, SPF initialized with a good approximation on either $u$ or $v$ has a performance guarantee under the rank-2 and $(3s,3s)$-sparse RIP, which holds with $O(s \log (e n/s))$ i.i.d. Gaussian measurements, and significantly improves on the guarantee for PF. In particular, under extra decay conditions on the nonzero entries of $u$ and $v$, we show that a simple thresholding algorithm produces a desired good initialization for SPF.

Next, for the sparse and rank-$r$ matrices with $r>1$, we extend the SPF algorithm and its performance guarantees accordingly. The generalization is non-trivial in the sense that it is unclear whether the straightforward rank-$r$ extension of the SPF algorithm can be guaranteed to recover the unknown rank-$r$ and doubly-$s$-sparse matrix from $O(rs \log n)$ measurements. More specifically, the number of measurements for a performance guarantee depends on a power of the rank $r$ rather than linearly on $r$. To fix this, we considered a variation of SPF called \emph{subspace-concatenated SPF} (SCSPF) which provably achieves a sample complexity that scales linearly in the rank $r$. Again, similarly to the rank-one cases, the success of cheap initialization requires extra technical conditions that are analogous to the fast decay properties.

Even in the absence of sparsity where simple initialization works, our results improve the state of the art for recovering low-rank matrices using alternating minimization. Specifically, for rank-$r$ matrices with conditioning number at most $\kappa$, we show that SCSPF succeeds with
$m = O(\kappa^2 r n)$ measurements, which significantly improves on the previous result of $m = O(\kappa^4 r^3 n)$ \cite{JaiNS2012}.

How close is the performance of the SPF algorithm to optimality? We show that stable recovery of sparse rank-one matrices in the sense of \prettyref{eq:stability} requires at least $(s_1+s_2-3r/2)r$ measurements, where $s_1$ and $s_2$ are the row-sparsity levels of the left and right factor $U$ and $V$, respectively. Note that this lower bound coincides with the number of degrees of freedom in the singular vectors. While the parameter-counting argument is heuristic, our converse is obtained via information-theoretic arguments, which provide necessary conditions for stable recovery by any reconstruction method from any measurement mechanism -- linear or not. It follows that our performance guarantees for SPF are \emph{near-optimal} in the sense that SPF achieves robust reconstruction with a number of measurements that is within at most a logarithmic factor of the fundamental limit. Similar near-optimal guarantees for the structured (rather than just random Gaussian) measurements that arise in practical applications are presented in a companion paper \cite{LeeLJB2015}.

In addition to its near-optimal theoretic guarantees,
SPF also outperforms competing convex approaches in the following practical aspects:
\begin{itemize}
  \item SPF requires vastly less memory, since it solves the bilinear formulation with an explicit rank-$r$ factorization of the unknown with $r(n_1+n_2)$ variables. In contrast, the linear formulation solves an optimization problem with $n_1 n_2$ variables.
  \item SPF has lower computational cost. The SPF algorithm converges \emph{superlinearly} fast and each of the sparse recovery steps (inner iteration using HTP) converges in $O(s)$ iterations. Furthermore, each iteration is fast because it only updates $r n_1$ or $r n_2$ variables instead of $n_1 n_2$ variables. Also note that these guarantees are derived for the initialization method that only involves simple thresholding on the row and column norms of the $n_1 \times n_2$ matrix $\A^*(b)$ and the {truncated} singular value decomposition of a reduced $s_1 \times s_2$ matrix up to the first $r$ factors.
  \item As demonstrated in Section~\ref{sec:numres}, the empirical performance of SPF is significantly better than that of convex approaches. In fact, extensive numerical experiments suggest that the performance guarantee of SPF/SCSPF continues to hold even in the absence of the technical assumptions (e.g., sufficiently high SNR and fast decaying magnitudes). Therefore, we suspect that these technical conditions are just artifacts in the proofs.
\end{itemize}

\subsection{Organization}
	\label{sec:org}

The sparse power factorization algorithms are described in detail in Section~\ref{sec:alg},
followed by their performance guarantees in Section~\ref{sec:ub}.
The extension of both algorithms and performance guarantees to the general rank-$r$ case is presented in Section~\ref{sec:rankr}.
An information-theoretic lower bound on the number of measurements for stable recovery of sparse rank-one matrices is given in \prettyref{sec:lb}.
After reporting on the empirical performance of sparse power factorization algorithms in Section~\ref{sec:numres},
we conclude the paper in Section~\ref{sec:conclusion}.
Proofs of the main results are given in \prettyref{sec:pf}, with proofs of several technical lemmas deferred to the appendix.

\subsection{Notations}
	\label{sec:notation}

Let $\bbN=\{1,2,\cdots\}$ denote the set of natural numbers and $[n] \triangleq \{1, \ldots, n\}$ for $n \in \bbN$.
For a complex vector $x \in \bbC^n$, its $k$th element is denoted by $[x]_k$
and the element-wise complex conjugate of $x$ is denoted by $\overline{x}$.
The identity operator on $\bbC^{n_1 \times n_2}$ is denoted as ``$\idm$''.
The Frobenius norm, the spectral norm, and the Hermitian transpose of $X \in \bbC^{n_1 \times n_2}$ are denoted by $\fnorm{X}$, $\norm{X}$, and $X^*$, respectively.
The matrix inner product is defined by $\langle A, B\rangle = \text{trace}(A^*B)$.
For a linear operator $\A$ between two vector spaces, the range space is denoted by $\R(\A)$ and the adjoint operator of $\A$ is denoted by $\A^*$ such that $\iprod{\calA x}{y}=\iprod{x}{\calA^*y}$ for all $x$ and $y$.

For a subspace $\mathcal{S}$ of $\bbC^n$, let $P_{\mathcal{S}} \in \bbC^{n \times n}$ denote the orthogonal projection onto $\mathcal{S}$.
The coordinate projection $\Pi_J \in \bbC^{n \times n}$ is defined by
\begin{equation}
[\Pi_J x]_k =
\begin{cases}
[x]_k & \text{if $k \in J$} \\
0 & \text{else}
\end{cases}
\label{eq:def:coorproj}
\end{equation}
for $J \subset [n]$.
Then, $\Pi_J^\perp \in \bbC^{n \times n}$ is defined as $I_n - \Pi_J$ where $I_n$ is the $n \times n$ identity matrix.

\section{Sparse Power Factorization Algorithms}
\label{sec:alg}
In this section, we present alternating minimization algorithms for compressed sensing of sparse rank-one matrices. To describe these algorithms, we first introduce linear operators that
describe the restrictions of the linear operator $\A : \bbC^{n_1 \times n_2} \to \bbC^m$
acting on rank-one matrix $xy^* \in \bbC^{n_1 \times n_2}$ when either $x$ or $y$ are fixed.

In linear sensing schemes each measurement amounts to a matrix inner product. Indeed, there exist matrices $(M_\ell)_{\ell=1}^m \subset \bbC^{n_1 \times n_2}$ that describe the action of $\A$ on $Z \in \bbC^{n_1 \times n_2}$ and that of its adjoint $\A^*$ on $z = [z_1,\dots,z_m]^\transpose \in \bbC^m$ by
\begin{equation}
\label{eq:defcalA}
\A(Z) = [\langle M_1, Z \rangle, \ldots, \langle M_m, Z \rangle]^\transpose
\end{equation}
and
\begin{equation}
\A^*(z) = \sum_{\ell=1}^m z_\ell M_\ell,	
	\label{eq:Aadjoint}
\end{equation}
respectively. Using $(M_\ell)_{\ell=1}^m$, we define linear operators $F: \bbC^{n_2} \to \bbC^{m \times n_1}$ and $G: \bbC^{n_1} \to \bbC^{m \times n_2}$ by
\begin{equation}
F(y) \triangleq \begin{bmatrix} y^* M_1^* \\ y^* M_2^* \\ \vdots \\ y^* M_m^* \end{bmatrix}
\quad \text{and} \quad
G(x) \triangleq \begin{bmatrix} x^* M_1 \\ x^* M_2 \\ \vdots \\ x^* M_m \end{bmatrix},
\label{eq:defFnG}
\end{equation}
respectively, for $y \in \bbC^{n_2}$ and $x \in \bbC^{n_1}$.
Then, since $\A(xy^*)$ is sesqui-linear in $(x,y)$, $F$ and $G$ satisfy
\[
\A(x y^*) = [F(y)] x = \overline{[G(x)] y}.
\]

\subsection{Review of power factorization}
\label{sec:PF-review}

Power factorization (PF) \cite{HalHer2009pf} is an alternating minimization algorithm that estimates a rank-$r$ matrix $X \in \bbC^{n_1 \times n_2}$ from its linear measurements $b = \A(X) + z$.
In this section, we specialize PF to the rank-one case. Let $t \geq 0$ denote the iteration index. With a certain initialization $v_0$, PF iteratively updates estimates $X_t = u_t v_t^*$ by alternating between the following procedures:
\begin{itemize}
  \item For fixed $v_{t-1}$, update $u_t$ by
  \begin{equation}
  u_t = \argmin_{\tilde{u}} \norm{b - \A(\tilde{u} v_{t-1}^*)}_2^2.
  \label{eq:lsupdate_u}
  \end{equation}
  \item For fixed $u_t$, update $v_t$ by
  \begin{equation}
  v_t = \argmin_{\tilde{v}} \norm{b - \A(u_t \tilde{v}^*)}_2^2.
  \label{eq:lsupdate_v}
  \end{equation}
\end{itemize}

Using $F$ and $G$ defined in (\ref{eq:defFnG}),
the update rules in (\ref{eq:lsupdate_u}) and (\ref{eq:lsupdate_v}) can be rewritten respectively as
\begin{align}
u_t {} & = \argmin_{\tilde{u}} \norm{b - [F(v_{t-1})] \tilde{u}}_2^2 \label{eq:lsupdate2_u}
\intertext{and}
v_t {} & = \argmin_{\tilde{v}} \norm{\overline{b} - [G(u_t)] \tilde{v}}_2^2. \label{eq:lsupdate2_v}
\end{align}

\subsection{Sparse power factorization (SPF)}
\label{sec:alg-spf}

We propose an alternating minimization algorithm, called \textit{sparse power factorization} (SPF),
which recovers a row-sparse rank-one matrix $X = \lambda u v^* \in \bbC^{n_1 \times n_2}$
with $s_1$-sparse left singular vector $u$ and $s_2$-sparse right singular vector $v$.
SPF is obtained by modifying the updates of $u_t$ and $v_t$ in PF as follows.

Note that the measurement vector $b$ of $X$ can be expressed as
\[
b = \A(\lambda u v^*) + z = [F(v)] (\lambda u) + z.
\]
For fixed $v$, alternatively, $b$ can be understood as the measurement vector of the $s$-sparse vector $\lambda u$ using the sensing matrix $F(v)$. When $v_{t-1}$, normalized in the $\ell_2$ norm, corresponds to an estimate of the right singular vector $v$, the matrix $F(v_{t-1})$ can be interpreted as an estimate of the unknown sensing matrix $F(v)$. In the PF algorithm, the update of $u_t$ in (\ref{eq:lsupdate2_u}) corresponds to the least squares solution to the linear system consisting of the perturbed sensing matrix $F(v_{t-1})$ and the measurement vector $b$. In contrast, SPF exploits the sparsity of $u$ (when $s_1 < n_1$) and updates the left factor $u_t$ by an $s_1$-sparse estimate of $\lambda u$ from $b$ using the perturbed sensing matrix $F(v_{t-1})$. Existing sparse recovery algorithms such as CoSaMP \cite{NeeTro09}, subspace pursuit \cite{DaiMil09}, and hard thresholding pursuit (HTP) \cite{Fou2011htp} provide good estimates of $\lambda u$ at low computational cost.
Under certain conditions on the original and perturbed sensing matrices, these algorithms are guaranteed to have small estimation error. In this paper, we focus on a particular instance of SPF that updates $u_t$ using HTP, which is summarized in Alg.~\ref{alg:spf}. (For completeness, the HTP algorithm is detailed in Alg.~\ref{alg:htp}.) However, the results in this paper readily extend to instances of SPF employing other sparse recovery algorithms with a similar performance guarantee to that of HTP.
The step size $\gamma > 0$ in HTP depends on the scaling of the sensing matrix $F(v_t)$; hence, to fix the step size as $\gamma = 1$,\footnote{This step size leads to a guarantee using the sparsity-restricted isometry property of $F(v_{t-1})$ \cite{Fou2011htp}.} we normalize $v_{t-1}$ in the $\ell_2$ norm before the HTP step. Likewise, in the presence of column sparsity ($s_2 < n_2$),
the update of $v_t$ from $u_t$ is modified to exploit the sparsity prior on $v$ by using HTP.

\begin{algorithm}
\LinesNumbered
\SetAlgoNoLine
\caption{$\hat{X} = \texttt{SPF\_HTP}(\A,b,n_1,n_2,s_1,s_2,v_0)$}
\label{alg:spf}
\While{stop condition not satisfied}{
$t \leftarrow t+1$\;
$v_{t-1} \leftarrow \displaystyle \frac{v_{t-1}}{\norm{v_{t-1}}_2}$\;
\eIf{$s_1 < n_1$}{$u_t \leftarrow \texttt{HTP}(F(v_{t-1}),b,s_1)$\;}
{$u_t \leftarrow \displaystyle \argmin_x \norm{b - [F(v_{t-1})] x}_2^2$\;}
$u_t \leftarrow \displaystyle \frac{u_t}{\norm{u_t}_2}$\;
\eIf{$s_2 < n_2$}{$v_t \leftarrow \texttt{HTP}(G(u_t),\overline{b},s_2)$\;}
{$v_t \leftarrow \displaystyle \argmin_y \norm{\overline{b} - [G(u_t)] y}_2^2$\;}
}
\Return $\hat{X} \leftarrow u_t v_t^*$\;
\end{algorithm}

\begin{algorithm}
\LinesNumbered
\SetAlgoNoLine
\caption{$\hat{x} = \texttt{HTP}(\Phi,b,s)$}
\label{alg:htp}
\While{stop condition not satisfied}{
    $t \leftarrow t+1$\;
    $J \leftarrow \supp{H_s[x_{t-1} + \gamma \Phi^*(b - \Phi x_{t-1})]}$\;
    $\displaystyle x_t \leftarrow \argmin_x \left\{ \norm{b - \Phi x}_2 : \supp{x} \subset J \right\}$\;
}
\Return $\hat{x} \leftarrow x_t$\;
\end{algorithm}

The performance of alternating minimization algorithms usually depends critically on the initialization. A typical heuristic (cf. \cite{HalHer2009pf}) is to select the best solution $\widehat{X}$ that minimizes $\norm{b - \A(\widehat{X})}_2^2$ among solutions obtained by multiple random initializations. However, no theoretical guarantee has been shown for this heuristic. Instead, Jain et al. \cite{JaiNS2012} proposed to set $v_0$ to the leading right singular vector of the proxy matrix $\A^*(b) \in \bbC^{n_1 \times n_2}$, and provided a performance guarantee of PF with this initialization under the rank-restricted isometry property of $\A$. However, when applied to the sparse rank-one matrix recovery problem, this procedure does not exploit the sparsity of the eigenvectors, leading to highly suboptimal performance in the sparse regime.

\begin{algorithm}
\LinesNumbered
\SetAlgoNoLine
\DontPrintSemicolon
\caption{$\vzth = \texttt{thres\_init}(\A,b,n_1,n_2,s_1,s_2)$}
\label{alg:thres_proj}
$M \leftarrow \A^*(b)$\;
\For{$k=1,\dots,n_1$}{
$\zeta_k \leftarrow \text{$\ell_2$ norm of the $s_2$-sparse approx. of the $k$th row of $M$}$\;
}
$\widehat{J}_1 \leftarrow \text{indices of the $s_1$ entries of $\zeta$ with the largest magnitude}$\;
$\widehat{J}_2 \leftarrow \text{indices of the $s_2$ columns of $\Pi_{\widehat{J}_1} M$ with the largest $\ell_2$ norm}$\;
$\vzth \leftarrow \text{the first dominant right singular vector of $\Pi_{\widehat{J}_1} M \Pi_{\widehat{J}_2}$}$\;
\Return $\vzth$\;
\end{algorithm}

To achieve near optimal recovery of sparse rank-one matrices, we propose a simple initialization method that exploits the sparsity structure, which is summarized in Algorithm~\ref{alg:thres_proj}.
Although the initialization $v_0^\text{\rm Th}$ is practical thanks to its low computational cost,
the success of Algorithm~\ref{alg:thres_proj} requires an extra condition on the unknown singular vectors. It is of interest to design a more sophisticated initialization with a better performance at an increased computational cost. For example, similarly to the initialization for PF by Jain et al. \cite{JaiNS2012}, if the best sparse and rank-one approximation of the matrix $\A^*(b)$ is available, then one can compute a good initialization as follows. Compute estimates $\widehat{J}_1$ on the support $J_1$ of $u$ and $\widehat{J}_2$ on the support $J_2$ of $v$ by solving
\begin{equation}
(\widehat{J}_1,\widehat{J}_2) \triangleq \argmax_{|\widetilde{J}_1| = s_1, |\widetilde{J}_2| = s_2} \norm{\Pi_{\widetilde{J}_1} [\A^*(b)] \Pi_{\widetilde{J}_2}}.
\label{eq:initbydspca}
\end{equation}
The leading right singular of $\Pi_{\widehat{J}_1} [\A^*(b)] \Pi_{\widehat{J}_2}$, denoted by $\vzstar$, is used as the initialization for SPF. We refer to this procedure as \emph{optimal} initialization. Solving (\ref{eq:initbydspca}) involves searching over all possible support sets, which can be computationally demanding in high-dimensional settings.
Iterative algorithms developed for sparse principal component analysis (e.g., \cite{yuan2013truncated}) might be employed to get a good approximate solution to \eqref{eq:initbydspca}.
In this paper, we will focus only on the simple thresholding initialization $\vzth$ by Algorithm~\ref{alg:thres_proj} and the optimal initialization $\vzstar$.
Performance guarantees of the SPF algorithms equipped with these initialization schemes are presented in the next section.

\section{Rank-1 Recovery Guarantees}
\label{sec:ub}
In this section we provide upper bounds on the number of linear measurements
that guarantee the stable recovery of sparse and rank-one matrices by SPF with high probability. We consider Gaussian sensing schemes with the linear operator $\A: \bbC^{n_1 \times n_2} \to \bbC^m$ given by
\[
\A(Z) = [\langle M_1, Z \rangle, \ldots, \langle M_m, Z \rangle]^\transpose,
\]
where $M_\ell \in \bbC^{n_1 \times n_2}$ has i.i.d. $\calC \calN (0,1/m)$ entries. We call such an $\calA$ an \emph{i.i.d. Gaussian measurement operator}.

Recall the two initialization schemes defined in \prettyref{sec:alg-spf}. Our main results are stated in the following theorem.
\begin{theorem}
\label{thm:pgrip_init_ds_iidG}
Let $\A: \bbC^{n_1 \times n_2} \to \bbC^m$ be an i.i.d. Gaussian measurement operator.
There exist absolute constants $c_1$, $c_2$, $c_3$, $c_4$, and $C$ such that
for all $s_1 \in [n_1],s_2 \in [n_2]$,
the following statement holds with probability at least $1 - \exp(-c_2 m)$.
If $m \geq c_1 (s_1+s_2) \log(\max\{e n_1/s_1, e n_2/s_2\})$, then when initialized by $v_0^\text{\rm Th}$, SPF outputs $\widehat{X}$ that satisfies
\begin{equation}
\label{eq:thm:pgrip_init_ds_iidG:res}
\frac{\fnorm{\hat{X} - X}}{\fnorm{X}} \leq C \frac{\norm{z}_2}{\norm{\A(X)}_2}
\end{equation}
for all $(s_1,s_2)$-sparse and rank-one $X=\lambda uv^*$ with $\norm{u}_2 = \norm{v}_2 = 1$ and $\min(\norm{u}_{\infty}, \norm{v}_{\infty}) \geq c_4$, and for all $z$ with $\norm{z}_2 \leq c_3 \norm{\A(X)}_2$. In the special cases of $s_2=n_2$ (row sparsity), the ``peakiness'' condition $\norm{u}_{\infty} \norm{v}_{\infty} \geq c_4$ is replaced by $\norm{u}_{\infty} \geq c_4$.
\end{theorem}

The probability in Theorem~\ref{thm:pgrip_init_ds_iidG} is with respect to the selection of an i.i.d. Gaussian measurement operator, and the guarantee applies uniformly to the set of all matrices following the underlying model. In particular, this result achieves (to within a logarithmic factor) the fundamental limit on the number $m$ of measurements, in comparison to the corresponding necessary condition in Section~\ref{sec:lb}.

Theorem~\ref{thm:pgrip_init_ds_iidG} claims that SPF initialized by $v_0^\text{\rm Th}$ provides stable reconstruction when the singular vectors $u$ and $v$ of the unknown matrix $X$ are heavily peaked in the sense that both $\norm{u}_\infty$ and $\norm{v}_\infty$ are larger than an absolute constant. Intuitively, in the presence of a few dominant components in $u$ and $v$, the simple thresholding heuristic in Algorithm~\ref{alg:thres_proj} can capture the location of these peaks although it might not identify the entire support sets. This peakiness property is satisfied by certain classes of ``fast-decaying'' signals. Let $u^{(k)}$ denote the $k$th largest magnitudes of $u$. For example, if $u^{(k)} \leq ck^{-\alpha}$ for $\alpha > \frac{1}{2}$ or $u^{(k)} \leq c \beta^k$ for $\beta \in (0,1)$, then $\norm{u}_\infty$ is larger than an absolute constant. These fast-decaying-magnitudes models on the sparse vector $u$ are often relevant to practical applications. For example, the magnitudes of the wavelet coefficients of piecewise smooth signals decay geometrically across the scales of the wavelet tree \cite{mallat2008wavelet}.

\begin{proposition}
\label{prop:without_peakedness}
In the setup of Theorem~\ref{thm:pgrip_init_ds_iidG}, SPF initialized by $\vzth$ provides the same recovery guarantee from $m = O(s_1 s_2 \log(\max\{e n_1/s_1, e n_2/s_2\}))$ measurements without requiring the peakiness condition.
\end{proposition}

\begin{remark}
\label{rem:without_peakedness}
The performance guarantee in Proposition~\ref{prop:without_peakedness} is only as good as those for other recovery algorithms with provable guarantees, which ignore the rank-one prior in the matrix structure and only exploit the sparsity prior (e.g., basis pursuit).
We included Proposition~\ref{prop:without_peakedness} to demonstrate that SPF initialized by $\vzth$ is as good as existing guaranteed algorithms even when the peakiness condition is not satisfied. Furthermore, SPF is still preferable to other methods that do not exploit the rank-one prior because it solves the un-lifted formulation and has much lower computational cost.
\end{remark}

As we show in Section~\ref{subsec:pgrip_spf}, given a good initialization, the convergence of the subsequent iterations is shown without the heavily-peakedness condition. The following proposition demonstrates that the initialization $\vzstar$ from the exact solution to \eqref{eq:initbydspca} enables the performance guarantee for SPF without the heavily-peakedness condition. Recall that the computation of $\vzstar$ involves exhaustive search over all support sets of cardinality $s_1$ and $s_2$. In fact, with this enumeration, by applying guaranteed algorithms for low-rank matrix recovery for each choice of the support, one can get the same sample complexity result as in Proposition~\ref{prop:opt_init} easily. Nonetheless, the success of SPF initialized by $\vzstar$ opens up the possibility of finding better initialization schemes using a practical approximate algorithm to solve \eqref{eq:initbydspca}.

\begin{proposition}
\label{prop:opt_init}
In the setup of Theorem~\ref{thm:pgrip_init_ds_iidG}, SPF initialized by $\vzstar$ provides the same recovery guarantee from $m = O((s_1+s_2) \log(\max\{e n_1/s_1, e n_2/s_2\}))$ measurements without requiring the peakiness condition.
\end{proposition}

The rest of this section is devoted to proving \prettyref{thm:pgrip_init_ds_iidG}, \prettyref{prop:without_peakedness}, and \prettyref{prop:opt_init}. The outline of the proof is the following:
\begin{enumerate}
  \item \prettyref{thm:conv_spf} gives a deterministic guarantee for SPF under the condition that the linear operator satisfies certain RIP conditions and the initial value is reasonably close to the true singular vector.
  \item \prettyref{thm:ripiidG} shows that the Gaussian measurement operator satisfies the desired RIP if the number of measurements is lower bounded accordingly.
  \item The sufficiency of the initialization methods $\vzstar$ and $\vzth$, defined in \prettyref{sec:alg-spf}, to satisfy the conditions in \prettyref{thm:conv_spf}, is established
      under respective conditions.
\end{enumerate}

\subsection{Restricted isometry properties}

A sufficient condition for stable recovery of SPF is that the linear operator $\A$ satisfies
certain \emph{restricted isometry property} (RIP) conditions.
The original version of RIP \cite{CanTao2005decoding}, denoted by $s$-sparse RIP in this paper, refers to
a linear operator being a near isometry when restricted to the set of $s$-sparse vectors.
This notion has been extended to a similar near isometry property restricted to the set of rank-$r$ matrices \cite{RecFP2010}.
Here, the relevant RIP condition to the analysis of SPF is the near isometry on the set of rank-$r$ matrices with at most $s_1$ nonzero rows and at most $s_2$ nonzero columns.

\begin{definition}[Rank-$r$ and $(s_1,s_2)$-sparse RIP]
A linear operator $\A: \bbC^{n_1 \times n_2} \to \bbC^m$ satisfies the \textit{rank-$r$ and doubly $(s_1,s_2)$-sparse RIP}
with isometry constant $\delta$ if
\[
(1-\delta) \fnorm{Z}^2 \leq \norm{\A(Z)}_2^2 \leq (1+\delta) \fnorm{Z}^2
\]
for all $Z \in \bbC^{n_1 \times n_2}$ such that $\rank(Z) \leq r$, $\norm{Z}_{0,2} \leq s_1$, and $\norm{Z^*}_{0,2} \leq s_2$.
\label{def:rip}
\end{definition}

\begin{remark}
The special case of the rank-$r$ and $(s_1,s_2)$-sparse RIP with $s_2 = n_2$ (resp. $s_1 = n_1$) is called
the rank-$r$ and row-$s_1$-sparse RIP (resp. the rank-$r$ and column-$s_2$-sparse RIP).
\end{remark}

The following result gives a sufficient condition for the Gaussian measurement operator to satisfy the RIP condition defined in Definition~\ref{def:rip}.

\begin{theorem}
\label{thm:ripiidG}
Let $\A: \bbC^{n_1 \times n_2} \to \bbC^m$ be an i.i.d. Gaussian measurement operator. If
\[
m \geq c_1 r (s_1+s_2) \log\left( \max\left\{ \frac{e n_1}{s_1}, \frac{e n_2}{s_2} \right\} \right),
\]
then $\A$ satisfies the rank-$r$ and $(s_1,s_2)$-sparse RIP with isometry constant $\delta$
with probability at least $1-\exp(-c_2 \delta^2 m)$, where $c_1,c_2$ are absolute constants.
\end{theorem}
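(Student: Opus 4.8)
The plan is to establish the rank-$r$ and $(s_1,s_2)$-sparse RIP by the standard net-plus-union-bound recipe, exploiting the fact that the relevant set of matrices---call it $\calM_{r,s_1,s_2}$, the set of $Z$ with $\rank(Z)\le r$, $\norm{Z}_{0,2}\le s_1$, $\norm{Z^*}_{0,2}\le s_2$, and $\fnorm{Z}=1$---decomposes as a finite union of unit spheres in low-dimensional subspaces. Concretely, fix a row support $J_1\subset[n_1]$ with $|J_1|=s_1$ and a column support $J_2\subset[n_2]$ with $|J_2|=s_2$; the matrices in $\calM_{r,s_1,s_2}$ supported on $J_1\times J_2$ form a subset of the rank-$\le r$ matrices in $\bbC^{s_1\times s_2}$, which lies in a union (over the Grassmannian) of subspaces of real dimension $O(r(s_1+s_2))$. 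For a single such subspace, a volumetric argument gives an $\epsilon$-net $\calN$ (in Frobenius norm) of cardinality $(C/\epsilon)^{O(r(s_1+s_2))}$; taking a union over the $r$-dimensional row- and column-spaces, and then over the $\binom{n_1}{s_1}\binom{n_2}{s_2}$ choices of support, yields a net for all of $\calM_{r,s_1,s_2}$ of log-cardinality $O\!\big(r(s_1+s_2)\log(1/\epsilon) + s_1\log(n_1/s_1) + s_2\log(n_2/s_2)\big)$, which is $O\!\big(r(s_1+s_2)\log(\max\{n_1/s_1,n_2/s_2\})\big)$ for a fixed small $\epsilon$.

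Next I would handle the pointwise concentration. For a fixed $Z$ with $\fnorm{Z}=1$, the quantity $\norm{\calA(Z)}_2^2 = \sum_{\ell=1}^m |\iprod{M_\ell}{Z}|^2$ is (since $\iprod{M_\ell}{Z}\sim\calC\calN(0,1/m)$ independently) distributed as $\frac{1}{2m}$ times a chi-squared with $2m$ degrees of freedom, so $\Expect\norm{\calA(Z)}_2^2 = 1$ and standard sub-exponential (Bernstein / Laurent--Massart) tail bounds give $\prob{|\norm{\calA(Z)}_2^2 - 1| \ge t} \le 2\exp(-c m t^2)$ for $t\in(0,1)$. Applying this at $t=\delta/2$ for every point of the net and union-bounding costs a factor $|\calN|$ in the failure probability; choosing $m \ge c_1 r(s_1+s_2)\log(\max\{n_1/s_1,n_2/s_2\})$ with $c_1$ large enough absorbs $\log|\calN|$ into the exponent and leaves failure probability $\le \exp(-c_2\delta^2 m)$.

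The final step is the usual net-to-everywhere argument: one shows that if $\norm{\calA(\cdot)}^2$ is within $\delta/2$ of the identity on the net, then it is within $\delta$ on all of $\calM_{r,s_1,s_2}$. The subtlety is that $\calM_{r,s_1,s_2}$ is \emph{not} a union of subspaces (the rank-$\le r$ constraint is nonconvex and not closed under addition), so the textbook ``approximate $Z$ by a net point, bound the residual, recurse'' argument must be adapted: I would either work subspace-by-subspace within each fixed support pair---on each individual $O(r(s_1+s_2))$-dimensional subspace the difference of two unit-norm elements stays in that subspace, so the standard recursion applies and gives the bound uniformly over that subspace's unit sphere---and then take the maximum over the finitely many subspaces and support pairs; or invoke the fact that the difference of two rank-$r$ matrices has rank $\le 2r$ and run the argument at level $2r$ (which only changes constants). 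This comparison step, together with getting the net cardinality bookkeeping clean enough that $\log|\calN| = O(r(s_1+s_2)\log(\max\{n_1/s_1,n_2/s_2\}))$ rather than, say, $O(r(s_1+s_2)\log(n_1 n_2))$, is the main technical obstacle; the concentration and union bound are routine. (Alternatively, one could cite a Gordon-type comparison inequality or an existing low-rank-plus-sparse RIP result and verify the dimension counting, but the $\epsilon$-net approach is self-contained and makes the $\log(n_i/s_i)$ factors transparent.)
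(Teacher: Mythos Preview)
Your proposal is correct and follows essentially the same route as the paper: the paper also only sketches the argument, citing the standard concentration-plus-net machinery of Baraniuk--Davenport--DeVore--Wakin and invoking the Cand\`es--Plan bound $|\calS_\epsilon|\le(9/\epsilon)^{r(n_1+n_2+1)}$ for an $\epsilon$-net of unit-Frobenius-norm rank-$r$ matrices (applied on each $s_1\times s_2$ support block, then union-bounded over supports). Your treatment of the net-to-everywhere step is in fact more careful than the paper's sketch; the Cand\`es--Plan net construction handles the nonconvexity by netting $U$, $\Sigma$, $V$ separately so that $X-\hat X$ splits as a sum of three rank-$\le r$ pieces, which is a third variant of the two fixes you outline.
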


The proof of Theorem~\ref{thm:ripiidG} is rather straightforward using standard mathematical tools in the literature \cite{BarDdVW2008simple,RecFP2010,CanPla2011oracle}; hence, we only provide a sketch. It follows from the standard volume argument and the exponential concentration of the i.i.d. Gaussian measurement operator \cite{BarDdVW2008simple}. The only difference from the derivation of the standard $s$-sparse RIP \cite{BarDdVW2008simple} is to use the $\epsilon$-net for all unit-norm rank-$r$ matrices, the cardinality of which is bounded according to the following lemma.

\begin{lemma}[Size of $\epsilon$-net of rank-$r$ matrices \cite{CanPla2011oracle}]
Let $\mathcal{S} = \{ X \in \bbC^{n_1 \times n_2} :~ \text{\rm rank}(X) \leq r,~ \fnorm{X} = 1 \}$.
There exists a subset $\mathcal{S}_\epsilon$ of $\mathcal{S}$ such that
\[
\sup_{X \in \mathcal{S}} \inf_{\hat{X} \in \mathcal{S}_\epsilon} \fnorm{X - \hat{X}} \leq \epsilon
\]
and
\[
|\mathcal{S}_\epsilon| \leq \left(\frac{9}{\epsilon}\right)^{r(n_1+n_2+1)}.
\]
\end{lemma}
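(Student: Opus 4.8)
The plan is to construct $\mathcal{S}_\epsilon$ as a product of three covering nets, one for each factor of the singular value decomposition. Write any $X\in\mathcal{S}$ as $X=U\Sigma V^*$ with $U\in\bbC^{n_1\times r}$ and $V\in\bbC^{n_2\times r}$ having orthonormal columns and $\Sigma=\mathrm{diag}(\sigma_1,\dots,\sigma_r)$, $\sigma_i\ge 0$; the constraint $\fnorm{X}=1$ is equivalent to $\sum_i\sigma_i^2=1$, i.e.\ the vector of singular values lies on the unit sphere of $\bbR^r$. The point is that perturbing each of $U$, $\Sigma$, $V$ slightly perturbs $X$ by a controlled amount, so it suffices to discretize the three factors separately.

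First I would fix three nets, each at resolution $\epsilon/3$: a net $\mathcal{D}$, in Frobenius norm, of $\{\mathrm{diag}(\sigma):\|\sigma\|_2=1\}$; a net $\mathcal{U}$, in operator norm, of the operator-norm unit ball $\{U\in\bbC^{n_1\times r}:\opnorm{U}\le 1\}$; and likewise a net $\mathcal{V}$ of $\{V\in\bbC^{n_2\times r}:\opnorm{V}\le 1\}$. The usual volumetric estimate — a maximal $(\epsilon/3)$-separated subset of a symmetric convex body $K$ has cardinality at most $\mathrm{vol}(K+\tfrac{\epsilon}{6}K)/\mathrm{vol}(\tfrac{\epsilon}{6}K)=(1+6/\epsilon)^{\dim K}\le(9/\epsilon)^{\dim K}$ for $\epsilon\le 3$, with the analogous thin-annulus computation for the sphere, and noting that a maximal separated set is a net — gives $|\mathcal{D}|\le(9/\epsilon)^{r}$, $|\mathcal{U}|\le(9/\epsilon)^{n_1 r}$, $|\mathcal{V}|\le(9/\epsilon)^{n_2 r}$ (counting real degrees of freedom as in the cited real-matrix version; over $\bbC$ the exponents are at most doubled, which is immaterial since this lemma only feeds the universal-constant bound of \prettyref{thm:ripiidG}). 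Setting $\mathcal{S}_\epsilon\triangleq\{\bar U\bar\Sigma\bar V^*:\bar U\in\mathcal{U},\ \bar\Sigma\in\mathcal{D},\ \bar V\in\mathcal{V}\}$ then yields $|\mathcal{S}_\epsilon|\le(9/\epsilon)^{r}(9/\epsilon)^{n_1 r}(9/\epsilon)^{n_2 r}=(9/\epsilon)^{(n_1+n_2+1)r}$.

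To verify the covering property, given $X=U\Sigma V^*\in\mathcal{S}$ choose $\bar U,\bar\Sigma,\bar V$ within $\epsilon/3$ of $U,\Sigma,V$ respectively, and telescope
\[
X-\bar U\bar\Sigma\bar V^* = (U-\bar U)\Sigma V^* + \bar U(\Sigma-\bar\Sigma)V^* + \bar U\bar\Sigma(V-\bar V)^*.
\]
Bounding each term with the one-sided submultiplicativity relations $\fnorm{AB}\le\opnorm{A}\fnorm{B}$ and $\fnorm{AB}\le\fnorm{A}\opnorm{B}$, together with $\opnorm{U}=\opnorm{V}=1$, $\fnorm{\Sigma}=1$, $\opnorm{\bar U},\opnorm{\bar V}\le 1$, and $\fnorm{\bar\Sigma}\le 1$, gives $\fnorm{X-\bar U\bar\Sigma\bar V^*}\le 3\cdot(\epsilon/3)=\epsilon$. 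If one insists that $\mathcal{S}_\epsilon\subseteq\mathcal{S}$ rather than merely covering $\mathcal{S}$, one either takes the component nets on the two Stiefel manifolds and on the sphere (so each product already has rank $\le r$ and Frobenius norm exactly $1$, the same volumetric bound applying up to a harmless constant), or normalizes each product to unit Frobenius norm and absorbs the extra $O(\epsilon)$ by using resolution $\epsilon/4$ instead; neither change alters the form of the bound.

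This is a routine covering argument, so there is no serious obstacle; the only points requiring care are (i) the mixed-norm bookkeeping in the telescoping step — each difference must be paired with exactly the factor it perturbs and the correct one-sided version of $\fnorm{AB}\le\opnorm{A}\fnorm{B}$ applied — and (ii) pinning the multiplicative constant at $9$, which is what dictates the choice of resolution $\epsilon/3$ for each of the three nets and restricts attention to $\epsilon\le 1$ (the regime in which the lemma is used). The minor nuisance of forcing $\mathcal{S}_\epsilon\subseteq\mathcal{S}$ is the other thing to watch, but it costs nothing in the exponent.
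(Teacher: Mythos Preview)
Your proposal is correct and follows exactly the standard argument from the cited reference \cite{CanPla2011oracle}: discretize the three SVD factors separately at resolution $\epsilon/3$, bound each net by the volumetric estimate, and telescope the difference using the mixed $\opnorm{\cdot}$--$\fnorm{\cdot}$ submultiplicativity. The paper itself does not supply a proof of this lemma --- it simply quotes the result from \cite{CanPla2011oracle} --- so there is nothing further to compare; your remark about the real-versus-complex exponent is also apt and, as you note, immaterial for the downstream RIP bound.
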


\subsection{RIP-based recovery guarantees for SPF}
\label{subsec:pgrip_spf}

Performance guarantees for recovery by SPF are derived using the rank-2 and $(3s_1,3s_2)$-sparse RIP of $\A$. The next theorem shows that given a good initialization, SPF provides stable recovery.

\begin{theorem}[RIP-guarantee for SPF with good initialization]
\label{thm:conv_spf}
Suppose the followings:
\begin{enumerate}
  \item $X = \lambda u v^*$ satisfies $\norm{u}_0 \leq s_1$ and $\norm{v}_0 \leq s_2$.
  \item $\A$ satisfies the rank-2 and $(3s_1,3s_2)$-sparse RIP with isometry constant $\delta = 0.08$.
  \item $b = \A(X) + z$ where $z$ and $\A(X)$ satisfy
  \begin{equation}
  \label{eq:defnu}
  \frac{\norm{z}_2}{\norm{\A(X)}_2} \leq \nu
  \end{equation}
  with $\nu = 0.08$.
  \item The initialization $v_0$ of SPF satisfies
  \begin{equation}
  \norm{P_{\R(v)^\perp} P_{\R(v_0)}} < 0.85.
  \label{eq:conv_spf_initcond}
  \end{equation}
\end{enumerate}
Then, the output $(X_t)_{t \in \bbN}$ of SPF satisfies
\begin{equation}
\limsup_{t \to \infty} \frac{\fnorm{X_t - X}}{\fnorm{X}} \leq 8.3 \frac{\norm{z}_2}{\norm{\A(X)}_2}.
\label{eq:conv_spf_res}
\end{equation}
Moreover, the convergence in (\ref{eq:conv_spf_res}) is superlinear,
i.e., for any $\epsilon > 0$, there exists $t_0 = O(\log(1/\epsilon))$ that satisfies
\begin{equation}
\frac{\fnorm{X_{t_0} - X}}{\fnorm{X}} \leq 8.3 \frac{\norm{z}_2}{\norm{\A(X)}_2} + \epsilon.
\label{eq:conv_spf_res2}
\end{equation}
\end{theorem}

\begin{proof}
See Section~\ref{subsec:pgspf}.
\end{proof}

Theorem~\ref{thm:conv_spf} implies that under the rank-2 and $(3s_1,3s_2)$-sparse RIP assumption on $\A$, With a good initialization $v_0$, which is close to the unknown $v$ in the principal angle, SPF converges superlinearly to a robust reconstruction of $X$ in the sense of \prettyref{eq:stability}. In particular, in the noiseless case ($z = 0$), SPF recovers $X$ perfectly. Note that the performance guarantee in Theorem~\ref{thm:conv_spf} is obtained under the conservative assumption that the noise variance is below a certain constant threshold. However, empirically, SPF still provides stable recovery of $X$ even when the additive noise is stronger than the threshold in Theorem~\ref{thm:conv_spf} (See Section~\ref{sec:numres}).

Next, we address the question of finding a good initialization. The performance guarantee for SPF in Theorem~\ref{thm:conv_spf} holds subject to the condition that the initialization satisfies \eqref{eq:conv_spf_initcond}. We study the performance of the two initialization methods proposed in Section~\ref{sec:alg} and present the corresponding performance guarantees below.

\begin{theorem}[RIP-guarantees: doubly sparse case]
\label{thm:pgrip_init_ds}
Suppose that $X = \lambda u v^*$ satisfies $\norm{u}_0 \leq s_1$ and $\norm{v}_0 \leq s_2$.
\begin{enumerate}
  \item Suppose that $\A$ satisfies the rank-2 and $(3s_1,3s_2)$-sparse RIP with isometry constant $\delta = 0.04$, and that the SNR condition in (\ref{eq:defnu}) holds with $\nu = 0.04$. Then, SPF initialized by $\vzstar$ provides a performance guarantee as in Theorem~\ref{thm:conv_spf}.
  \item Suppose that either one of the following conditions is satisfied:
  \begin{enumerate}
    \item $\A$ satisfies the rank-2 and $(3s_1,3s_2)$-sparse RIP with isometry constant $\delta = 0.04$, the SNR condition in (\ref{eq:defnu}) holds with $\nu = 0.04$, $\norm{u}_\infty \geq 0.78 \norm{u}_2$, and $\norm{v}_\infty \geq 0.78 \norm{v}_2$.
    \item $\A$ satisfies the RIP with isometry constant $\delta = 0.02$, when restricted to the set of matrices with up to $9 s_1 s_2$ nonzero entries, and the SNR condition in (\ref{eq:defnu}) holds with $\nu = 0.02$.
  \end{enumerate}
  Then, SPF initialized by $v_0^\text{\rm Th}$ provides a performance guarantee as in Theorem~\ref{thm:conv_spf}.
\end{enumerate}
\end{theorem}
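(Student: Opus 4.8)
The plan is to reduce the statement to verifying the initialization hypothesis \prettyref{eq:conv_spf_initcond} of \prettyref{thm:conv_spf}: once we show $\norm{P_{\R(v)^\perp}P_{\R(v_0)}}=\sin\angle(v,v_0)<\sin\omega_{\sup}$ for the relevant $v_0\in\{v_0^\star,v_0^{\text{\rm Th}}\}$ under the stated bounds on $\delta$ and $\nu$, \prettyref{thm:conv_spf} gives the claimed guarantee verbatim; since $\omega_{\sup}$ is monotone non-increasing in $\delta,\nu$ and $\sin\omega_{\sup}\ge0.85$ already at $\delta=\nu_0=0.08$, the smaller values $\delta,\nu\le0.04$ (resp.\ $\le0.05$) in the hypotheses leave a comfortable margin. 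The workhorse is the following ``restricted proxy'' estimate. Write $\eta\triangleq\delta+(1+\delta)\nu$ and $W\triangleq\A^*(b)=\A^*\A(X)+\A^*(z)$. For index sets $J_1\subseteq[n_1]$, $J_2\subseteq[n_2]$ with $|J_1|\le s_1$, $|J_2|\le s_2$ and any rank-one $H$ supported on $J_1\times J_2$, the polarization form of the rank-$2$ $(3s_1,3s_2)$-sparse RIP (applied to the rank-$\le2$ doubly $(2s_1,2s_2)$-sparse matrices $H\pm X$, $H\pm\mathrm{i}X$) gives $|\iprod{\A H}{\A X}-\iprod{H}{X}|\le\delta\fnorm{H}\lambda$, while $|\iprod{\A H}{z}|\le\sqrt{1+\delta}\,\fnorm{H}\norm{z}_2\le(1+\delta)\nu\fnorm{H}\lambda$; taking the supremum over such $H$ yields
\begin{equation}
\norm{\Pi_{J_1}W\Pi_{J_2}-\lambda(\Pi_{J_1}u)(\Pi_{J_2}v)^*}\le\eta\lambda,
\label{eq:sketch:proxy}
\end{equation}
since $\Pi_{J_1}X\Pi_{J_2}=\lambda(\Pi_{J_1}u)(\Pi_{J_2}v)^*$ is rank one.

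For the optimal initialization (part 1), note that $(\supp{u},\supp{v})$, padded to cardinalities $(s_1,s_2)$, is feasible in \prettyref{eq:initbydspca}, so \prettyref{eq:sketch:proxy} gives $\norm{\Pi_{\widehat J_1}W\Pi_{\widehat J_2}}\ge\norm{\Pi_{\supp{u}}W\Pi_{\supp{v}}}\ge\lambda(1-\eta)$; on the other hand \prettyref{eq:sketch:proxy} also gives $\norm{\Pi_{\widehat J_1}W\Pi_{\widehat J_2}}\le\lambda\norm{\Pi_{\widehat J_1}u}_2\norm{\Pi_{\widehat J_2}v}_2+\eta\lambda$. Hence $\norm{\Pi_{\widehat J_1}u}_2\norm{\Pi_{\widehat J_2}v}_2\ge1-2\eta$, so in particular $\norm{\Pi_{\widehat J_2}v}_2\ge1-2\eta$, i.e.\ $\widehat J_2$ retains all but an $O(\eta)$ fraction of the energy of $v$. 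Since $v_0^\star$ is the leading right singular vector of $\Pi_{\widehat J_1}W\Pi_{\widehat J_2}$, which by \prettyref{eq:sketch:proxy} lies within spectral distance $\eta\lambda$ of the rank-one matrix $\lambda(\Pi_{\widehat J_1}u)(\Pi_{\widehat J_2}v)^*$ with top singular value $\ge\lambda(1-2\eta)$, a Davis--Kahan/Wedin bound gives $\sin\angle\big(v_0^\star,\Pi_{\widehat J_2}v/\norm{\Pi_{\widehat J_2}v}_2\big)\le\eta/(1-3\eta)$; combining with $\norm{\Pi_{\widehat J_2}v}_2\ge1-2\eta$ via $\cos\angle(v,v_0^\star)=\norm{\Pi_{\widehat J_2}v}_2\cos\angle\big(v_0^\star,\Pi_{\widehat J_2}v/\norm{\Pi_{\widehat J_2}v}_2\big)$ yields $\sin\angle(v,v_0^\star)<\sin\omega_{\sup}$ for $\delta,\nu\le0.04$, which is \prettyref{eq:conv_spf_initcond}.

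For the thresholding initialization under peakiness (part 2(a)(i)) the extra work is to control the greedy selection. By \prettyref{eq:sketch:proxy} with $|J_1|=1$, the $s_2$-sparse norm of row $j$ of $W$ lies in $[(|[u]_j|-\eta)\lambda,(|[u]_j|+\eta)\lambda]$; with $|J_2|=1$, every column of $W$ off $\supp{v}$ has $s_1$-sparse norm $\le\eta\lambda$, hence every column of $\calP_\calS[W]$ off $\supp{v}$ has $\ell_2$ norm $\le\eta\lambda$. For the dominant row $j^\star=\argmax_j|[u]_j|$ with $|[u]_{j^\star}|=\norm{u}_\infty\ge c_4$, \prettyref{eq:sketch:proxy} gives $|[W]_{j^\star,k}|\ge\lambda\norm{u}_\infty|[v]_k|-\eta\lambda>\eta\lambda$ whenever $|[v]_k|>2\eta/c_4$; since off-support entries of row $j^\star$ are $\le\eta\lambda$, each such $k$ survives the per-row $s_2$-sparsification of row $j^\star$, and then (off-support columns of $\calP_\calS[W]$ having norm $\le\eta\lambda$) each such $k$ lands in $\widehat J_2$. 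Thus $\widehat J_2\supseteq\{k\in\supp{v}:|[v]_k|>2\eta/c_4\}$, so (using $\norm{v}_\infty\ge c_4$ and $c_4^2>2\eta$) $\norm{\Pi_{\widehat J_2}^\perp v}_2^2\le\sum_{k:\,|[v]_k|\le2\eta/c_4}|[v]_k|^2\le1-\norm{v}_\infty^2\le1-c_4^2$, i.e.\ $\norm{\Pi_{\widehat J_2}v}_2\ge c_4=0.78$. The argument of part 1 then applies to the doubly-sparse matrix $\calP_\calS[W]\Pi_{\widehat J_2}$, provided one also bounds the effect of the per-row $s_2$-sparsification so that $\calP_\calS[W]\Pi_{\widehat J_2}$ stays a small spectral perturbation of $\lambda(\Pi_{\widehat J_1}u)(\Pi_{\widehat J_2}v)^*$; Wedin gives $\cos\angle\big(v_0^{\text{\rm Th}},\Pi_{\widehat J_2}v/\norm{\Pi_{\widehat J_2}v}_2\big)$ close to $1$, and multiplying by $\norm{\Pi_{\widehat J_2}v}_2\ge0.78$ keeps $\cos\angle(v,v_0^{\text{\rm Th}})$ above $\cos\omega_{\sup}$. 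When $s_2=n_2$ there is no column step and $\calP_\calS[W]=\Pi_{\widehat J_1}W$, so only $\norm{u}_\infty\ge c_4$ (controlling the row step) is needed, which is why the hypothesis degenerates accordingly. For part 2(a)(ii), the stronger RIP on matrices with $\le9s_1s_2$ nonzero entries dispenses with peakiness: since $\calP_\calS$ is a genuine best approximation over $\calS$ and $\calP_\calS[W]$ has $\le s_1s_2$ nonzero entries, $\fnorm{\calP_\calS[W]-X}\le2\fnorm{\Pi_\Omega(W-X)}\le2\eta\lambda$ now in Frobenius norm (the sparse RIP controls $(\A^*\A-\id)(X)$ and $\A^*(z)$ on the whole support $\Omega$ of $\calP_\calS[W]\cup X$), whence $\norm{\Pi_{\widehat J_2}^\perp v}_2=O(\eta)$ and $\norm{\calP_\calS[W]\Pi_{\widehat J_2}-\lambda u(\Pi_{\widehat J_2}v)^*}=O(\eta)\lambda$, and Wedin finishes as before.

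The main obstacle is part 2(a)(i): every other case is routine perturbation theory layered on top of \prettyref{eq:sketch:proxy}, but there the greedy selection recovers only those coordinates of $u$ and $v$ that exceed an $O(\eta)$ threshold, and the per-row $s_2$-sparsification interacts with the column-selection step in a way that must be tracked carefully, because individual rows of $W$ need not be small in $\ell_2$. The peakiness condition $\norm{u}_\infty,\norm{v}_\infty\ge c_4$ is exactly what forces the discarded ``small'' coordinates to carry only a constant fraction of the energy \emph{and} keeps the dominant singular value of the restricted proxy matrix bounded below by a universal constant times $\lambda$. Isolating the precise deterministic lemma that makes this work -- and checking that the numerical constants close, so that the resulting $\sin\angle(v,v_0^{\text{\rm Th}})$ stays strictly below $\sin\omega_{\sup}$ -- is the delicate step.
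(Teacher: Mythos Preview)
Your framework matches the paper's: verify the initialization hypothesis \prettyref{eq:conv_spf_initcond} and invoke \prettyref{thm:conv_spf}. Parts~1 and~2(b) are essentially correct and parallel the paper. For Part~1 your argument is in fact cleaner than the paper's: the paper proves a trigonometric lower bound on $\norm{\Pi_{\widehat J_1}u}_2\norm{\Pi_{\widehat J_2}v}_2$ (Lemma~\ref{lemma:est_supp_dspca}) and then minimizes over the two factors, whereas your two applications of \prettyref{eq:sketch:proxy} give $\norm{\Pi_{\widehat J_1}u}_2\norm{\Pi_{\widehat J_2}v}_2\ge 1-2\eta$ directly. For Part~2(b) both you and the paper use the Frobenius proxy bound on $\calP_\calS[W]-X$ followed by Wedin; no difference of substance.

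The real gap is Part~2(a), and you have correctly located it: you need spectral control of $\calP_\calS[W]\Pi_{\widehat J_2}$, but \prettyref{eq:sketch:proxy} does not apply because each row of $\calP_\calS[W]$ has its own $s_2$-sparse support, so $\calP_\calS[W]$ is not of the form $\Pi_{J_1}W\Pi_{J_2}$. The paper does \emph{not} fill this gap; it avoids it. The paper's Wedin step (Lemma~\ref{lemma:init_w_good_est_supp2}) takes as target $M=\Pi_{\widehat J_1}X$ (right singular vector $v$ itself), so the only inputs needed are the two scalars $\norm{\Pi_{\widehat J_1}u}_2$ and $\norm{\Pi_{\widehat J_2}^\perp v}_2$; no spectral estimate on $\calP_\calS[W]$ is required. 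To produce these scalars the paper uses a chain of \emph{column} and \emph{entry} norms that is immune to the per-row sparsification: with $j_0=\argmax_j|[u]_j|$ one first shows $j_0\in\widehat J_1$ (your row argument, which is Lemma~\ref{lemma:est_supp_dthres}), and then, letting $k_1$ be the column of $\calP_\calS[W]$ with largest $\ell_2$ norm (so $k_1\in\widehat J_2$) and $k_0=\argmax_k|[v]_k|$,
\[
\lambda\norm{\Pi_{\widehat J_1}u}_2\norm{\Pi_{\widehat J_2}v}_2+\eta\lambda
\;\ge\;\norm{\Pi_{\widehat J_1}We_{k_1}}_2
\;\ge\;\norm{\calP_\calS[W]e_{k_1}}_2
\;\ge\;|[W]_{j_0,k_0}|
\;\ge\;\lambda\norm{u}_\infty\norm{v}_\infty-\eta\lambda,
\]
the third inequality using only that the per-row sparsification of row $j_0$ keeps its dominant entries. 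This yields $\norm{\Pi_{\widehat J_1}u}_2\norm{\Pi_{\widehat J_2}v}_2\ge\norm{u}_\infty\norm{v}_\infty-2\eta$, which plugged into \prettyref{eq:pgrip_spf_init_good_supp2:cond} closes numerically at $\delta,\nu\le0.04$, $\norm{u}_\infty,\norm{v}_\infty\ge0.78$. So the ``precise deterministic lemma'' you are hunting for is not needed: trade the spectral perturbation of $\calP_\calS[W]$ for scalar control of $\norm{\Pi_{\widehat J_1}u}_2$ and $\norm{\Pi_{\widehat J_2}v}_2$ via single-column and single-entry comparisons.
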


\begin{proof}
See Section~\ref{subsec:initds}.
\end{proof}

\begin{remark}
It is noteworthy that the two different performance guarantees in Part 2 of Theorem~\ref{thm:pgrip_init_ds} are achieved by a single algorithm. In fact, when $\A$ satisfies the $(3s_1,3s_2)$-sparse RIP, nuclear norm minimization achieves a performance guarantee, which applies to all $(s_1,s_2)$-sparse matrices (not necessarily of rank-1). Part 2-(b) of Theorem~\ref{thm:pgrip_init_ds} asserts that SPF with the thresholding initialization provides a comparable performance guarantee in this scenario (only by the $(3s_1,3s_2)$-sparse RIP without any further condition on $u$ and $v$). The performance guarantee in Part 2-(b) of Theorem~\ref{thm:pgrip_init_ds} only recovers existing results. However, by Part 2-(a) of Theorem~\ref{thm:pgrip_init_ds}, when $u$ and $v$ have large entries, unlike the nuclear norm minimization that discards the rank-1 constraint, the same algorithm (SPF with the thresholding initialization) achieves a better performance guarantee by a weaker RIP.
\end{remark}

It is straightforward to check that the performance guarantees in Theorem~\ref{thm:pgrip_init_ds}
apply to the row-sparse (resp. the column sparse case) by letting $s_2 = n_2$ (resp. $s_1 = n_1$). However, in the row-sparse case (resp. the column-sparse case), the near optimal performance guarantee for SPF initialized by $v_0^\text{\rm Th}$ requires only the additional condition on $\norm{u}_\infty$ (resp. $\norm{v}_\infty$), as stated in the following result.

\begin{theorem}[RIP-guarantee: row-sparse case]
\label{thm:pgrip_init_rs}
Suppose the followings:
\begin{enumerate}
  \item $X = \lambda u v^*$ satisfies $\norm{u}_0 \leq s_1$ and $\norm{u}_\infty \geq 0.4 \norm{u}_2$.
  \item $\A$ satisfies the rank-2 and row-$3s_1$-sparse RIP with isometry constant $\delta = 0.08$.
  \item The SNR condition in (\ref{eq:defnu}) holds with $\nu = 0.04$.
\end{enumerate}
Then, SPF initialized by $v_0^\text{\rm Th}$ provides a performance guarantee as in Theorem~\ref{thm:conv_spf}.
\end{theorem}

\begin{proof}
See Section~\ref{subsec:initrs}.
\end{proof}

\begin{remark}
The constants in Theorems~\ref{thm:pgrip_init_ds} and \ref{thm:pgrip_init_rs} are not optimized,
but rather were chosen conservatively to simplify the proofs and the statement of the results.
\end{remark}

We conclude this section with the proofs of Theorem~\ref{thm:pgrip_init_ds_iidG}, Proposition~\ref{prop:without_peakedness}, and Proposition~\ref{prop:opt_init}.
\begin{proof}
In view of Theorem~\ref{thm:ripiidG}, the RIP conditions in Theorem~\ref{thm:pgrip_init_ds}
are satisfied by corresponding conditions on the number of i.i.d. Gaussian measurements as follows.
First, the performance guarantee for SPF initialized by $\vzstar$ is given by the rank-2 and $(3s_1,3s_2)$-sparse RIP;
hence, it holds with with high probability for $m = O((s_1 + s_2) \log(\max\{e n_1/s_1, e n_2/s_2\}))$ i.i.d. Gaussian measurements.
Thus, Proposition~\ref{prop:opt_init} follows from Part~1 of Theorem~\ref{thm:pgrip_init_ds}.
Next, the RIP conditions in Part 2-(a) of Theorem~\ref{thm:pgrip_init_ds}
and in Theorem~\ref{thm:pgrip_init_rs} are similarly expressed as conditions on $m$.
This proves Theorem~~\ref{thm:pgrip_init_ds_iidG}.
Finally, noting that the RIP condition in Part 2-(b) of Theorem~\ref{thm:pgrip_init_ds} holds for an i.i.d. Gaussian measurement operator with $m = O(s_1 s_2 \log(\max\{e n_1/s_1, e n_2/s_2\}))$
proves Proposition~\ref{prop:without_peakedness}.
\end{proof}

\section{Extension to the Rank-$r$ Case}
\label{sec:rankr}
In this section, we extend the results in Section~\ref{sec:ub} from the rank-1 case to a more general rank-$r$ case.

\subsection{Algorithms for the rank-$r$ case}

First, we generalize the definition of $F(\cdot)$ and $G(\cdot)$ in \eqref{eq:defFnG} to the rank-$r$ case. Recall that there exist matrices $M_1,M_2,\ldots,M_m \in \cz^{n_1 \times n_2}$ such that
\[
\A(X) = [\langle M_1, X \rangle, \ldots, \langle M_m, X \rangle]^\transpose.
\]
For $V \in \cz^{n_2 \times r}$, a linear operator $\calF(V): \cz^{n_1 \times r} \to \cz^m$ parameterized by $V$ is defined by
\begin{equation}
\label{eq:def_calF}
[\calF(V)](U) := [\langle M_1 V, U \rangle, \ldots, \langle M_m V, U \rangle]^\transpose,
\quad \forall U \in \cz^{n_1 \times r}.
\end{equation}
For $U \in \cz^{n_1 \times r}$, a linear operator $\calG(U): \cz^{n_2 \times r} \to \cz^m$ parameterized by $U$ is defined by
\begin{equation}
\label{eq:def_calG}
[\calG(U)](V) := [\langle M_1^* U, V \rangle, \ldots, \langle M_m^* U, V \rangle]^\transpose,
\quad \forall V \in \cz^{n_2 \times r}.
\end{equation}
Then,
\[
\A(U V^*) = [\calF(V)](U) = \overline{[\calG(U)](V)}.
\]
When $r=1$, the above definitions reduce to the corresponding part in Section~\ref{sec:alg}.

With $\calF$ and $\calG$ defined respectively in \eqref{eq:def_calF} and \eqref{eq:def_calG},
similarly to the PF algorithm \cite{HalHer2009pf}, SPF in Algorithm~\ref{alg:spf} extends naturally to the rank-$r$ case, which is summarized in Algorithm~\ref{alg:rspf}. We also extend the thresholding initialization in Section~\ref{sec:alg} to the rank-$r$ case for both the initial estimates $U_0$ and $V_0$. Theses algorithms are summarized in Algorithms~\ref{alg:initV0} and \ref{alg:initU0}, respectively.

\insrt{
\begin{algorithm}
\LinesNumbered
\SetAlgoNoLine
\DontPrintSemicolon
\caption{$\hat{X} = \texttt{rSPF\_HTP}(\A,b,n_1,n_2,r,s_1,s_2,U_0,V_0)$}
\label{alg:rspf}
\While{stop condition not satisfied}{
    $t \leftarrow t+1$\;
    $V_{t-1} \leftarrow \texttt{orth}(V_{t-1})$\tcp*{ortho-basis for $\mathcal{R}(V_{t-1})$}
    \eIf{$s_1 < n_1$}{$U_t \leftarrow \texttt{B-HTP}(\calF(V_{t-1}),b,s_1)$\;}{$U_t \leftarrow \displaystyle \argmin_{U'} \norm{b - [\calF(V_{t-1})] (U')}_2^2$\;}
    $U_t \leftarrow \texttt{orth}(U_t)$\;
    \eIf{$s_2 < n_2$}{$V_t \leftarrow \texttt{B-HTP}(\calG(U_t),\overline{b},s_2)$\;}{$V_t \leftarrow \displaystyle \argmin_{V'} \norm{\overline{b} - [\calG(U_t)] (V')}_2^2$\;}
}
\Return $\hat{X} \leftarrow U_t V_t^*$\;
\end{algorithm}
}

\begin{algorithm}
\LinesNumbered
\SetAlgoNoLine
\caption{$\hat{x} = \texttt{B-HTP}(\Phi,b,s)$}
\label{alg:bhtp}
\While{stop condition not satisfied}{
    $t \leftarrow t+1$\;
    $\widetilde{X} \leftarrow X_{t-1} + \gamma \Phi^*(b - \Phi (X_{t-1}))$\;
    $J \leftarrow \mbox{indices of the $s$ rows of $\widetilde{X}$ with the largest $\ell_2$ norm}$\;
    $\displaystyle X_t \leftarrow \argmin_{X'} \left\{ \norm{b - \Phi(X')}_2 : \Pi_J X' = X' \right\}$\;
}
\Return $\widehat{X} \leftarrow X_t$\;
\end{algorithm}

\begin{algorithm}
\LinesNumbered
\SetAlgoNoLine
\caption{$V_0^\text{\rm Th} = \texttt{INIT\_SC\_SPF\_V}(\A,b,n_1,n_2,r,s_1,s_2)$}
\label{alg:initV0}
$M \leftarrow \A^*(b)$\;
\For{$k=1,\dots,n_1$}{
$\zeta_k \leftarrow \text{$\ell_2$ norm of the $s_2$-sparse approx. of the $k$th row of $M$}$\;
}
$\widehat{J}_1 \leftarrow \text{indices of the $s_1$ entries of $\zeta$ with the largest magnitude}$\;
$\widehat{J}_2 \leftarrow \text{indices of the $s_2$ columns of $\Pi_{\widehat{J}_1} M$ with the largest $\ell_2$ norm}$\;
$V_0^\text{\rm Th} \leftarrow \text{$r$ leading right singular vectors of $\Pi_{\widehat{J}_1} M \Pi_{\widehat{J}_2}$}$\;
\end{algorithm}

\begin{algorithm}
\LinesNumbered
\SetAlgoNoLine
\caption{$U_0^\text{\rm Th} = \texttt{INIT\_SC\_SPF\_U}(\A,b,n_1,n_2,r,s_1,s_2)$}
\label{alg:initU0}
$M \leftarrow \A^*(b)$\;
\For{$k=1,\dots,n_2$}{
$\zeta_k \leftarrow \text{$\ell_2$ norm of the $s_1$-sparse approx. of the $k$th column of $M$}$\;
}
$\widehat{J}_2 \leftarrow \text{indices of the $s_2$ entries of $\zeta$ with the largest magnitude}$\;
$\widehat{J}_1 \leftarrow \text{indices of the $s_1$ rows of $M \Pi_{\widehat{J}_2}$ with the largest $\ell_2$ norm}$\;
$V_0^\text{\rm Th} \leftarrow \text{$r$ leading left singular vectors of $\Pi_{\widehat{J}_1} M \Pi_{\widehat{J}_2}$}$\;
\end{algorithm}

As demonstrated in Section~\ref{subsec:numres:lrds}, empirically, the natural extension of SPF (Algorithm~\ref{alg:rspf}) outperformed the convex method. However, in our attempt to extend Theorem~\ref{thm:pgrip_init_ds_iidG} to the rank-$r$ case, instead of the linear dependence, the sample complexity for performance guarantees had higher order dependence on the rank $r$. This is suboptimal in order compared to the matching lower bound. To overcome this limitation, we modify the natural rank-$r$ sparse power factorization (Algorithm~\ref{alg:rspf}) into the subspace-concatenated sparse power factorization (SC-SPF), summarized in Algorithm~\ref{alg:scspf}. The most important difference between SC-SPF and SPF is that in every iteration of SC-SPF, the initial estimate is used in concatenation with the estimate from the previous iteration. As we show in the next section, with this subspace concatenation, the sample complexity for recovering a low-rank and sparse matrix scales linearly in the rank, which is optimal.

\begin{algorithm}
\LinesNumbered
\SetAlgoNoLine
\caption{$\hat{X} = \texttt{SC\_SPF\_HTP}(\A,b,n_1,n_2,r,s_1,s_2,U_0,V_0)$}
\label{alg:scspf}
\While{stop condition not satisfied}{
    $t \leftarrow t+1$\;
    $\widetilde{V} \leftarrow \texttt{orth}([V_{t-1}, V_0])$\;
    \eIf{$s_1 < n_1$}{$\widetilde{U} \leftarrow \texttt{B-HTP}(\calF(\widetilde{V}),b,s_1)$\;}{$\widetilde{U} \leftarrow \displaystyle \argmin_{U'} \norm{b - [\calF(\widetilde{V})] (U')}_2^2$\;}
    $U_t \leftarrow \mbox{(the best rank-$r$ approximation of $\widetilde{U}$)}$\;
    $\widetilde{U} \leftarrow \texttt{orth}([U_t, U_0])$\;
    \eIf{$s_2 < n_2$}{$\widetilde{V} \leftarrow \texttt{B-HTP}(\calG(\widetilde{U}),\overline{b},s_2)$\;}{$\widetilde{V} \leftarrow \displaystyle \argmin_{V'} \norm{\overline{b} - [\calG(\widetilde{U})] (V')}_2^2$\;}
    $V_t \leftarrow \mbox{(the best rank-$r$ approximation of $\widetilde{V}$)}$\;
}
\Return $\hat{X} \leftarrow U_t V_t^*$\;
\end{algorithm}

\subsection{Performance guarantees}

Similar to the performance guarantee in Section~\ref{sec:ub}, we derive a sufficient condition
for sparse recovery by SC-SPF from i.i.d. Gaussian measurements.

\begin{theorem}
\label{thm:pgrip_ds_iidG_rankr}
Let $\A: \bbC^{n_1 \times n_2} \to \bbC^m$ be an i.i.d. Gaussian measurement operator.
There exist absolute constants $c_1,\ldots,c_5$, and $C$ such that the following statement holds.
Let $X \in \mathbb{C}^{n_1 \times n_2}$ be a fixed matrix of rank-$r$,
where $X = U \Lambda V^*$ denotes the singular value decomposition of $X$.
Suppose that the following conditions hold:
\begin{enumerate}
  \item The condition number of $X$ is no greater than $\kappa \leq c_1$.
  \item $U \in \mathbb{C}^{n_1 \times r}$ and $V \in \mathbb{C}^{n_2 \times r}$ are row-$s_1$-sparse and row-$s_2$-sparse, respectively.
  \item $\sigma_r(\Pi_{\widetilde{J}_1} U) \geq c_2$ and $\sigma_r(\Pi_{\widetilde{J}_2} V) \geq c_2$, where $\widetilde{J}_1$ and $\widetilde{J}_2$ are respectively defined by
  \begin{equation}
  \label{eq:def_widetildeJs}
  \widetilde{J}_1 \triangleq \argmax_{\Upsilon \subset [n_1]: |\Upsilon| = r} \fnorm{\Pi_{\Upsilon} U} \quad \text{and} \quad
  \widetilde{J}_2 \triangleq \argmax_{\Upsilon \subset [n_2]: |\Upsilon| = r} \fnorm{\Pi_{\Upsilon} V}.
  \end{equation}
  \item $b = \A(X) + z$ where $\norm{z}_2 \leq c_3 (\norm{X}/\fnorm{X}) \norm{\A(X)}_2$.
  \item $m \geq c_4 \kappa^2 r (s_1+s_2) \log(\max\{e n_1/s_1, e n_2/s_2\})$.
\end{enumerate}
Then initialized by $(U_0^\text{\rm Th},V_0^\text{\rm Th})$, SC-SPF outputs $\widehat{X}$ that satisfies
\begin{equation}
\label{eq:tildeJ_rankr}
\frac{\fnorm{\hat{X} - X}}{\fnorm{X}} \leq C \kappa^2 \left( \frac{\norm{z}_2}{\norm{\A(X)}_2} \right).
\end{equation}
with probability at least $1 - \exp(-c_5 m)$.
\end{theorem}

\begin{remark}
Assumption 3 in Theorem~\ref{thm:pgrip_ds_iidG_rankr} generalizes the peakiness assumption, $\norm{u}_\infty \geq c_2$ and $\norm{v}_\infty \geq c_2$, of Theorem~\ref{thm:pgrip_init_ds_iidG}. In the rank-1 case, $\widetilde{J}_1$ (resp. $\widetilde{J}_2$) in \eqref{eq:def_widetildeJs} reduces to the index of the largest entry of $u$ (resp. $v$) in magnitude. Therefore, Assumption 3 in Theorem~\ref{thm:pgrip_ds_iidG_rankr} reduces to the corresponding peakiness assumption in Theorem~\ref{thm:pgrip_init_ds_iidG}.
\end{remark}

\begin{remark}
The dependence of the sample complexity on the condition number $\kappa$ is due to the estimation of subspaces. We can always apply the algorithm with parameter $r'<r$ to decrease $\kappa$ and the reconstruction error will depends on the best rank-$r'$ approximation of $X$ (which is now absorbed into the measurement error $z$).
\end{remark}

When the unknown rank-$r$ matrix $X$ is not sparse ($s_1 = n_1$ and $s_2 = n_2$), there is no need to estimate the support sets and the initialization $(U_0^\text{\rm Th},V_0^\text{\rm Th})$ by Algorithms~\ref{alg:initV0} and \ref{alg:initU0} is trivially obtained as the singular vectors of the rank-$r$ approximation of $\A^*(b)$. Furthermore, the iterative updates in Algorithm~\ref{alg:scspf} are done by solving least squares problems. In this scenario, the guarantee in Theorem~\ref{thm:pgrip_ds_iidG_rankr} holds only with Assumption 4 (sufficiently high SNR) as shown in the next corollary.

\begin{corollary}[Non-sparse case]
\label{cor:pgrip_ds_iidG_rankr}
Let $\A: \bbC^{n_1 \times n_2} \to \bbC^m$ be an i.i.d. Gaussian measurement operator.
There exist absolute constants $c_3,c_4,c_5$, and $C$ such that the following statement holds.
Let $X \in \mathbb{C}^{n_1 \times n_2}$ be a fixed matrix of rank-$r$, where $X = U \Lambda V^*$ denotes the singular value decomposition of $X$. Suppose that Assumption 4 in Theorem~\ref{thm:pgrip_ds_iidG_rankr} holds.
If $m \geq c_2 \kappa^2 r (n_1+n_2)$, then initialized by $(U_0^\text{\rm Th},V_0^\text{\rm Th})$, SC-SPF outputs $\widehat{X}$ that satisfies \eqref{eq:tildeJ_rankr} with probability at least $1 - \exp(-c_5 m)$.
\end{corollary}

Corollary~\ref{cor:pgrip_ds_iidG_rankr} implies that the recovery of an $n$-by-$n$ matrix of rank $r$ is guaranteed from $m = O(\kappa^2 r n)$ measurements\footnote{The logarithmic term comes from the unknown support and disappears in the non-sparse case.}, which significantly improves on the previous result $m = O(\kappa^4 r^3 n)$ \cite{JaiNS2012} also using alternating minimization and i.i.d. Gaussian measurements.

In the remainder of this section we provide a proof of Theorem~\ref{thm:pgrip_ds_iidG_rankr}. To this end, we first derive a deterministic RIP-based guarantee for SC-SPF assuming good initialization, then a RIP-based guarantee using the initialization in Algorithms~\ref{alg:initV0} and \ref{alg:initU0}, and finally the sample complexity for the relevant RIP using a Gaussian measurement operator.

\begin{theorem}[RIP-guarantee for SC-SPF with good initialization]
\label{thm:conv_scspf}
Suppose the followings:
\begin{enumerate}
  \item $X = U \Lambda V^*$ denotes the singular value decomposition of a rank-$r$ matrix $X \in \mathbb{C}^{n_1 \times n_2}$, where $U \in \mathbb{C}^{n_1 \times r}$ and $V \in \mathbb{C}^{n_2 \times r}$ are row-$s_1$-sparse and row-$s_2$-sparse, respectively.
  \item The condition number of $X$ is no greater than $\kappa$.
  \item $\A$ satisfies the rank-$2r$ and $(3s_1,3s_2)$-sparse RIP with isometry constant $\delta = \frac{0.04}{\kappa}$.
  \item $b = \A(X) + z$ where $z$ and $\A(X)$ satisfy
  \begin{equation}
  \label{eq:snrcond_rankr}
  \frac{\fnorm{X}}{\norm{X}} \cdot \frac{\norm{z}_2}{\norm{\A(X)}_2} \leq \nu
  \end{equation}
  with $\nu = \frac{0.04}{\kappa}$.
  \item The initialization $(U_0,V_0)$ of SC-SPF satisfies
  \begin{equation}
  \label{eq:conv_scspf_initcond}
  \max(\norm{P_{\R(U)^\perp} P_{\R(U_0)}}, ~ \norm{P_{\R(V)^\perp} P_{\R(V_0)}}) < 0.95.
  \end{equation}
\end{enumerate}
Then the output $(X_t)_{t \in \bbN}$ of SC-SPF satisfies
\begin{equation}
\limsup_{t \to \infty} \frac{\fnorm{X_t - X}}{\fnorm{X}} \leq (55 \kappa^2 + 3 \kappa + 3) \frac{\norm{z}_2}{\norm{\A(X)}_2}.
\label{eq:conv_scspf_res}
\end{equation}
Moreover, the convergence in \eqref{eq:conv_scspf_res} is linear,
i.e., for any $\epsilon > 0$, there exists $t_0 = O(\log \frac{1}{\epsilon} )$ that satisfies
\begin{equation}
\frac{\fnorm{X_{t_0} - X}}{\fnorm{X}} \leq (55 \kappa^2 + 3 \kappa + 3) \frac{\norm{z}_2}{\norm{\A(X)}_2} + \epsilon.
\label{eq:conv_scspf_res2}
\end{equation}
\end{theorem}

Theorem~\ref{thm:conv_scspf} implies that starting from good initial estimates $U_0$ and $V_0$, SC-SPF provide stable recovery of the unknown matrix $X$ under the rank-$2r$ and doubly $(3s_1,3s_2)$-sparse RIP of $\A$. In particular, when the unknown matrix is only of rank $r$ without sparsity ($s_1 = n_1$ and $s_2 = n_2$), one can obtain good initial estimates satisfying \eqref{eq:conv_scspf_initcond} by a single step of the singular value projection \cite{JaiNS2012}. In this case, Theorem~\ref{thm:conv_scspf}, combined with Theorem \ref{thm:ripiidG} and Lemma~\ref{lemma:rip_spectral},
shows that stable recovery of an unknown $n \times n$ matrix of rank-$r$ from $O(\kappa^2 r n)$ i.i.d. Gaussian measurements is guaranteed.
With the sparsity model, we provide a performance guarantee of SC-SPF with initial estimates using Algorithms~\ref{alg:initV0} and \ref{alg:initU0} in the following theorem.

\begin{theorem}[RIP-guarantee]
\label{thm:pgrip_init_ds_rankr}
Suppose the followings:
\begin{enumerate}
  \item $X = U \Lambda V^*$ is the singular value decomposition of a rank-$r$ matrix $X \in \mathbb{C}^{n_1 \times n_2}$, where $U \in \mathbb{C}^{n_1 \times r}$ and $V \in \mathbb{C}^{n_2 \times r}$ are row-$s_1$-sparse and row-$s_2$-sparse, respectively.
  \item $\sigma_r(\Pi_{\widetilde{J}_1} U) \geq 0.9$ and $\sigma_r(\Pi_{\widetilde{J}_2} V) \geq 0.9$, where $\widetilde{J}_1$ and $\widetilde{J}_2$ are defined in \eqref{eq:def_widetildeJs}.
  \item The condition number of $X$ is no greater than $\kappa \leq 4$.
  \item $\A$ satisfies the rank-$2r$ and $(3s_1,3s_2)$-sparse RIP with isometry constant $\delta = \frac{0.04}{\kappa}$.
  \item $\A$ and $X$ satisfy
  \[
  \sup_{|J_1| \leq s_1} \sup_{|J_2| \leq s_2} \norm{\Pi_{J_1} [(\A^*\A - \id)(X)] \Pi_{J_2}} \leq \delta \norm{X}
  \]
  for $\delta = \frac{0.04}{\kappa}$.
  \item $b = \A(X) + z$ where the SNR condition in (\ref{eq:snrcond_rankr}) holds with $\nu = \frac{0.04}{\kappa}$.
\end{enumerate}
Then Algorithm~\ref{alg:scspf} initialized by $(U_0^\text{\rm Th},V_0^\text{\rm Th})$ provides a performance guarantee as in Theorem~\ref{thm:conv_scspf}.

In particular, when the unknown rank-$r$ matrix $X$ is not sparse ($s_1 = n_1$ and $s_2 = n_2$), the same guarantee holds without Assumptions 2 and 3.
\end{theorem}

\begin{remark}
In the rank-1 case,
$\sigma_r(\Pi_{\widetilde{J}_1} U)$ and $\sigma_r(\Pi_{\widetilde{J}_2} V)$ reduce to $\norm{u}_\infty$ and $\norm{v}_\infty$, respectively.
\end{remark}

The next lemma provides an RIP-like property of an i.i.d. Gaussian operator. Theorem~\ref{thm:pgrip_ds_iidG_rankr} is then obtained by combining Theorems~\ref{thm:pgrip_init_ds_rankr}, \ref{thm:ripiidG}, and Lemma~\ref{lemma:rip_spectral}.

\begin{lemma}
\label{lemma:rip_spectral}
Let $X \in \mathbb{C}^{n_1 \times n_2}$ be an arbitrarily fixed rank-$r$ matrix.
Let $\A$ be defined in (\ref{eq:defcalA})
where $M_1,\ldots,M_m$ are mutually independent random matrices whose entries are i.i.d. following $\calC \calN(0,1/m)$.
Then, with probability $1-\epsilon$,
\[
\sup_{|J_1| \leq s_1} \sup_{|J_2| \leq s_2} \norm{\Pi_{J_1} [(\A^*\A - \id)(X)] \Pi_{J_2}} \leq \delta r^{-1/2} \fnorm{X} \leq \delta \norm{X}
\]
provided
\[
m \geq C r \delta^{-2} \max\left[ (s_1 + s_2) \log(\max\{e n_1/s_1, e n_2/s_2\}), \log(\epsilon^{-1}) \right]
\]
for an absolute constant $C > 0$.
\end{lemma}

\begin{proof}[Proof of Lemma~\ref{lemma:rip_spectral}]
Let
\[
V_Z \triangleq \frac{1}{\sqrt{m}} (I_m \otimes \vect(Z)^\transpose),
\]
where $\otimes$ denotes the Kronecker product and
\[
\xi \triangleq [\vect(\overline{M_1})^\transpose, \ldots, \vect(\overline{M_m})^\transpose]^\transpose.
\]
Note that $\xi \in \cz^{m n_1 n_2}$ is a Gaussian vector with $\mathbb{E} \xi \xi^* = I_{m n_1 n_2}$.
Then, the action of $\A$ on $Z \in \cz^{n_1 \times n_2}$ is expressed as
\[
\A(Z) = V_Z \xi.
\]

Therefore, it follows that
\[
\sup_{|J_1| \leq s_1} \sup_{|J_2| \leq s_2} \norm{\Pi_{J_1} [(\A^*\A - \id)(X)] \Pi_{J_2}}
= \sup_{V_Z \in \Xi} \big| \langle V_X \xi, V_Z \xi \rangle - \mathbb{E} \langle V_X \xi, V_Z \xi \rangle \big|,
\]
where
\[
\Xi \triangleq \{ V_Z :~ Z = u v^*, ~ \norm{u}_2 = 1, ~ \norm{u}_0 \leq s_1, ~ \norm{v}_2 = 1, ~ \norm{v}_0 \leq s_2 \}.
\]

Then, the radius of $\Xi$ in the spectral norm is
\[
d_{2 \to 2}(\Xi) = \frac{1}{\sqrt{m}}
\]
and the radius of $\Xi$ in the Frobenius norm is
\[
d_{\mathrm{F}}(\Xi) = 1.
\]
Furthermore, Talagrand's $\gamma_2$ functional \cite{talagrand2006generic} of $\Xi$ with respect to the spectral norm is upper bounded by
\[
\gamma_2(\Xi,\norm{\cdot})
\lesssim \sqrt{\frac{s_1 \log(e n_1/s_1)}{m}} + \sqrt{\frac{s_2 \log(e n_2/s_2)}{m}}.
\]
Then, the conclusion follows from \cite[Theorem~2.3]{LeeJunge2015}.
\end{proof}

\section{Fundamental Limits: Necessary Conditions for Robust Reconstruction}
\label{sec:lb}



In this section we give necessary conditions for robust reconstruction of sparse rank-one matrices by considering a Bayesian version of \prettyref{eq:model}.
Denote the $r$-dimensional complex Stiefel manifold $V(\complex^n, r) \triangleq \{V\in\complex^{n\times r}: V^*V=I_r\}$.
Following the information-theoretic approach in \cite{WV12}, we prove a non-asymptotic lower bound that holds for any non-linear measurement mechanisms (encoders) and reconstruction algorithms (decoders).
The setup is illustrated in \prettyref{fig:bayesian},
where \begin{itemize}
  \item $X=UV^* \in \complex^{n_1\times n_2}$ is a rank-$r$ random matrix, where $U\in\complex^{n_1\times r}$ and $V\in\complex^{n_2\times r}$ are independent random matrices. Furthermore, $U$ is row-sparse with row support $S=\supp{U}$ chosen uniformly at random from all subsets of $[n_1]$ of cardinality $s_1$, and the non-zero rows $U_S$ are uniformly distributed on $\sfV(\complex^{s_1},r)$. Similarly, $V$ is $s_2$-sparse with uniformly chosen support and the non-zero component is uniform on $\sfV(\complex^{s_2},r)$.
  \item $Z \sim \calC\calN(0,\sigma^2 I_m)$ denotes additive complex Gaussian noise, whose real and imaginary parts are independently distributed according to $\calN(0, \frac{\sigma^2}{2} I_m)$.
  \item The encoder $f: \complex^{n_1 \times n_2} \to \complex^m$ satisfies the average power constraint
  \begin{equation}
  \Expect \|f(X)\|^2 \leq m
  \label{eq:avg-power}
  \end{equation}
  \item The decoder $g: \complex^m\times  \complex^{n_1 \times n_2}$ outputs a rank-$r$ matrix, namely, $g(f(X)+Z) = \hU\hV^*$ with $\hU\in \sfV(\complex^{n_1},r)$ and $\hV\in \sfV(\complex^{n_2},r)$.
\end{itemize}

\begin{figure}[ht]
	\centering
\begin{tikzpicture}[scale=1.2,transform shape,node distance=2.5cm,auto,>=latex']
    \node [int] (a) {$\substack{\textrm{~~~~Encoder~~~~} \\ f:~\complex^{n_1 \times n_2} \to \complex^m}$};
    \node (b) [left of=a,node distance=2.8cm, coordinate] {};
    \node [dot, pin={[init]above:{$Z$}}] (d) [right of=a] {$+$};
    \node [int] (c) [right of=d] {$\substack{\textrm{~~~~Decoder~~~~} \\ g:~\complex^m \to \complex^{n_1 \times n_2}}$};
    \node [coordinate] (end) [right of=c, node distance=2.8cm]{};
    \path[->] (b) edge node {$X=UV^*$~} (a);
    \draw[->] (c) edge node {~$\hat X = \hat U \hat V^*$} (end);
    \path[->] (a) edge node {$Y$} (d);
    \path[->] (d) edge node {$\hat Y$} (c);
\end{tikzpicture}
	\caption{Bayesian setup for the compressed sensing problem \prettyref{eq:model}, allowing possibly non-linear measurement mechanisms.}
	\label{fig:bayesian}
\end{figure}
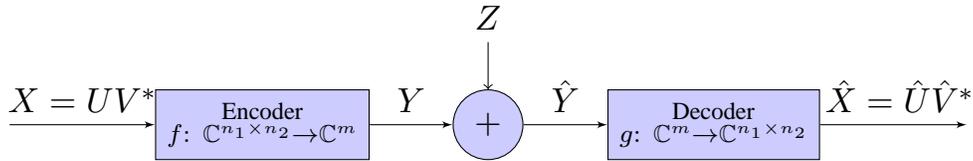

\begin{theorem}
	\label{thm:lb}
Let $f$ satisfies the average power constraint \prettyref{eq:avg-power}. Let $g$ achieve the reconstruction error
\begin{equation}
	\Expect \fnorm{\hat{X} - X}^2  \leq r D 
\label{eq:loss}
\end{equation}
where $D > 0$. Then for any $s_1,s_2,r\in\naturals$ with $r \leq \min\{s_1,s_2\}$,
\begin{equation}
m \geq  \pth{(s_1+s_2) r - \frac{3r^2}{2}}  \frac{ \log \frac{c}{ D}}{\log \pth{1+\frac{1}{\sigma^2}}}.	
	\label{eq:lb}
\end{equation}
where $c$ is a universal constant.
\end{theorem}
\begin{proof}
	\prettyref{sec:pf-lb}.
\end{proof}

As a consequence of \prettyref{thm:lb}, we conclude that the minimum number of measurements (sample complexity) for robust reconstruction of doubly sparse rank-$r$ matrices, \ie, achieving finite noise sensitivity with $D \leq C \sigma^2$ for all $\sigma^2 > 0$ and some absolute constant $C$, must satisfy
\begin{equation}
m \geq \pth{s_1+s_2  - \frac{3r}{2}} r.	
	\label{eq:lb-stable}
\end{equation}
which holds even if the recovery algorithm is allowed to depend on the noise level.
This follows from \prettyref{eq:lb} by sending $\sigma\to0$.
When $r=1$, since $m$ is an integer, we conclude that stable recovery of doubly-sparse rank-one matrices requires at least
\begin{equation}
m \geq s_1+s_2-1.	
	\label{eq:lb-stable1}
\end{equation}
number of measurements.

In view of the lower bounds \prettyref{eq:lb-stable}--\prettyref{eq:lb-stable1}, we conclude that the number of measurements in Theorems~\ref{thm:pgrip_init_ds_iidG} and \ref{thm:pgrip_ds_iidG_rankr} are optimal within logarithmic factors. To see this, we argue that the additional spikiness assumption on the matrix imposed in Theorems~\ref{thm:pgrip_init_ds_iidG} and \ref{thm:pgrip_ds_iidG_rankr} does not change the sample complexity of the problem. To adapt \prettyref{thm:lb} to this scenario, one can simply consider $X=UV^*$, where $U = \frac{1}{\sqrt{2}}\Big[\begin{smallmatrix} I_r \\ \tU \end{smallmatrix}\Big]$ and $V = \frac{1}{\sqrt{2}}\Big[\begin{smallmatrix} I_r \\ \tV \end{smallmatrix}\Big]$, and $\tU$ and $\tV$ are distributed according to \prettyref{thm:lb} with ambient dimension $n_1$ (resp.~$n_2$) replaced by $n_2-r$ (resp.~$n_1-r$). 
Then U and V satisfy the spikiness assumptions in Theorems~\ref{thm:pgrip_init_ds_iidG} and \ref{thm:pgrip_ds_iidG_rankr}. Note that since $r \leq \min\{s_1,s_2\}$ by definition, the lower bound \prettyref{eq:lb-stable} is always at least $\frac{1}{4}r(s_1+s_2)$.
Therefore as long as the rank is not too high, namely, $r \leq \min\{n_1-s_1, n_2-s_2\}$, the information-theoretic lower bounds \prettyref{eq:lb-stable}--\prettyref{eq:lb-stable1} continue to hold and we conclude that SPF algorithm achieves the fundamental limits for stably recovering sparse low-rank matrices within logarithmic factors.

\begin{remark}
Compared to \cite[Theorem 10]{WV12} for compressed sensing of sparse vectors proved under the high-dimensional scaling, the lower bound \prettyref{eq:lb-stable} is non-asymptotic. Moreover, even if we relax the stability requirement in \prettyref{eq:loss} to $\Expect \fnorm{\hat{X} - X}^2  \leq \sigma^{2\alpha}$ for some $\alpha \in (0,1)$, the number of measurements still satisfies the lower bound $m \geq \alpha(s_1+s_2-2r)r$.	
\end{remark}

\begin{remark}
The lower bound \prettyref{eq:lb-stable} can be heuristically understood by counting the number of (real) degrees of freedom in $X=UV^*$, which turns out to be $2(s_1+s_2)r-r^3$. Note that
the Stiefel manifold $\sfV(\complex^{n},r)$ has topological (real) dimension\footnote{This follows from choosing the first column $v_1$ of $V$ from the complex unit sphere which has $2n_1-1$ real variables, then the second column $v_2\perp v_1$ which gives two equations (real and imaginary parts) and leaves $2n_1-3$ free variables, etc.}
 $\sum_{i=1}^r (2n-2i+1) = 2nr-r^2$.
Since the non-zero parts of $U$ and $V$ belongs to $\sfV(\complex^{s_1},r)$ and $\sfV(\complex^{s_2},r)$ respectively,
the total number of the (real) degrees of freedom in $U$ and $V$ is $2(s_1+s_2)r-2r^2$.
However, since $UV^* = UR (VR)^*$, we need to quotient out the orthogonal group $O(r)$ in $\complex^r$, which has dimension $\sum_{i=1}^r (2r-2i+1) = r^2$. Therefore the total degrees of freedom in $X$ is $\dim \sfV(\complex^{s_1},r) + \dim \sfV(\complex^{s_1},r) - \dim O(r) = 2(s_1+s_2)r-3r^2$.
Hence, intuitively, we expect as least half of this number of \emph{complex} linear measurements for stable recovery. \prettyref{thm:lb} gives a rigorous information-theoretic justification of this heuristic.
See \prettyref{rmk:dof-vv} for more detailed discussion on counting degrees of freedom.
	\label{rmk:dof}
\end{remark}

\begin{remark}[Bayesian v.s. minimax lower bound]
The lower bound in \prettyref{thm:lb} is obtained under a Bayesian setup where the left and right singular vectors have uniformly drawn support and non-zeros. On the other hand, the upper bounds in Theorems~\ref{thm:pgrip_init_ds_iidG} and \ref{thm:pgrip_ds_iidG_rankr} are obtained under an adversarial setting where both the unknown matrix $X$ and the noise $z$ are deterministic. It is unclear whether the extra logarithmic factor for the number of Gaussian measurements is necessary in a minimax setting, where, for instance, the noise is additive Gaussian and the unknown rank-one matrix $X$ is adversarial.
	\label{rmk:bayesian-minimax}
\end{remark}

The proof of \prettyref{thm:lb} relies on a rate-distortion lower bound for random subspaces, given in \prettyref{thm:RDv} in \prettyref{sec:pf-lb}, which might be of independent interest.

\section{Numerical Results}
\label{sec:numres}
In this section, we compare the empirical performance of SPF
to those of PF and other popular recovery algorithms based on convex optimization \cite{oymak2015simultaneously}.
The simulation setup is as follows: Let
$\A$ be an i.i.d. Gaussian measurement operator.
The unknown sparse rank-one matrix is generated with support uniformly drawn at random
and the nonzero elements of the singular vectors are uniform on the complex sphere.
The recovery performance of different procedures was compared in a Monte-Carlo study,
averaging over 100 instances of signal matrices and measurement operators drawn at random.

\subsection{Row sparsity}
We compare the recovery performance of a row-sparse rank-one matrix by SPF in both noiseless and noisy cases
against PF as well as the following convex-optimization approaches:
basis pursuit with the row-sparsity prior (BP\_RS), with the low-rank prior (BP\_LR),
and with both priors combined (BP\_RSLR), \ie,
\begin{align*}
\text{\tt BP\_RS}: {} & \qquad \widehat{X} = \argmin_{\widetilde{X}} \left\{ \norm{{\widetilde{X}}}_{1,2} :~ \A({\widetilde{X}}) = b \right\} \\
\text{\tt BP\_LR}: {} & \qquad \widehat{X} = \argmin_{\widetilde{X}} \left\{ \norm{{\widetilde{X}}}_* :~ \A({\widetilde{X}}) = b \right\} \\
\text{\tt BP\_RSLR}: {} & \qquad \widehat{X} = \argmin_{\widetilde{X}} \left\{ \max\left( \frac{\norm{{\widetilde{X}}}_{1,2}}{\norm{X}_{1,2}} ,~ \frac{\norm{{\widetilde{X}}}_*}{\norm{X}_*} \right) :~ \A({\widetilde{X}}) = b \right\}
\end{align*}
where $\norm{\cdot}_*$ denotes the nuclear norm (sum of singular values), and $\norm{\cdot}_{1,2}$ denotes the mixed norm (sum of row norms).
BP\_RSLR corresponds to the optimal convex approach among all methods that minimize combinations of the nuclear norm and mixed norm \cite{oymak2015simultaneously}. The weights used in BP\_RSLR are functions of the unknown matrix $X$ and therefore BP\_RSLR is considered as an oracle method.

\begin{figure}
  \centering
  \begin{tabular}{m{0.18\textwidth}m{0.18\textwidth}m{0.18\textwidth}m{0.18\textwidth}m{0.18\textwidth}}
  \includegraphics[width=30mm]{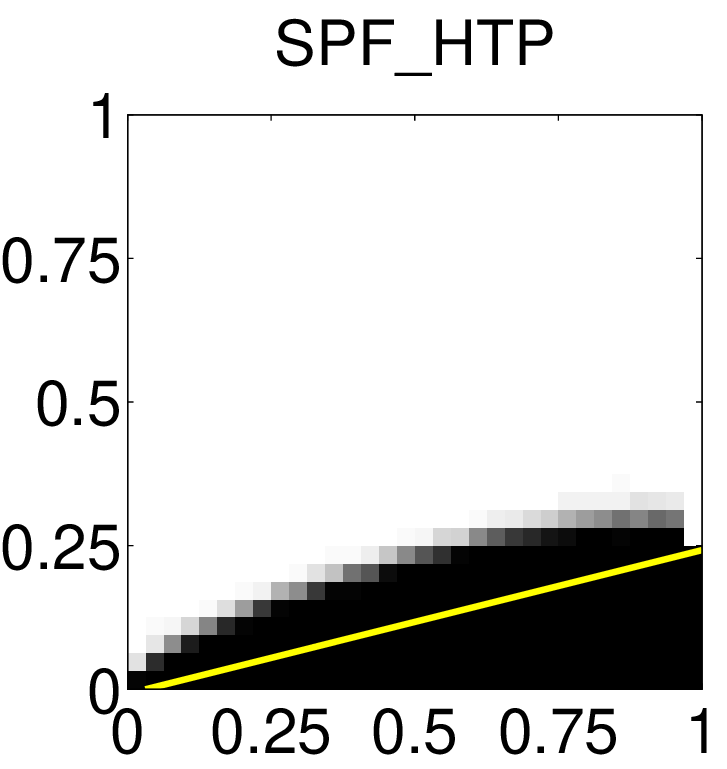} &
  \includegraphics[width=30mm]{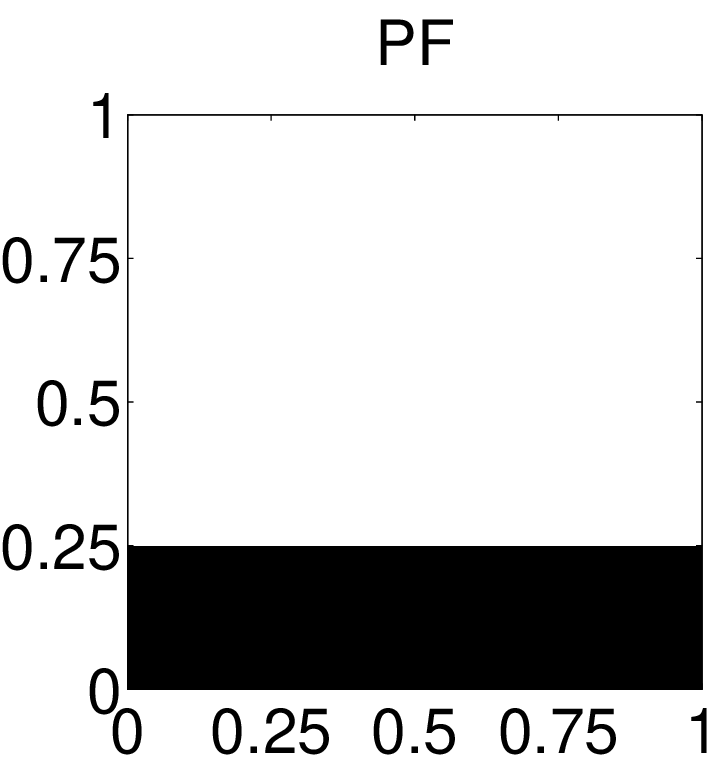} &
  \includegraphics[width=30mm]{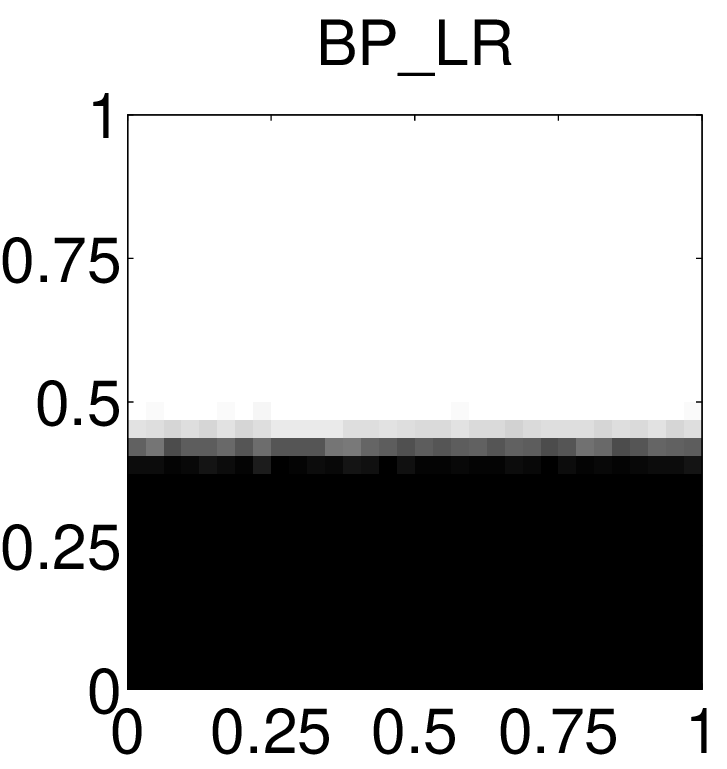} &
  \includegraphics[width=30mm]{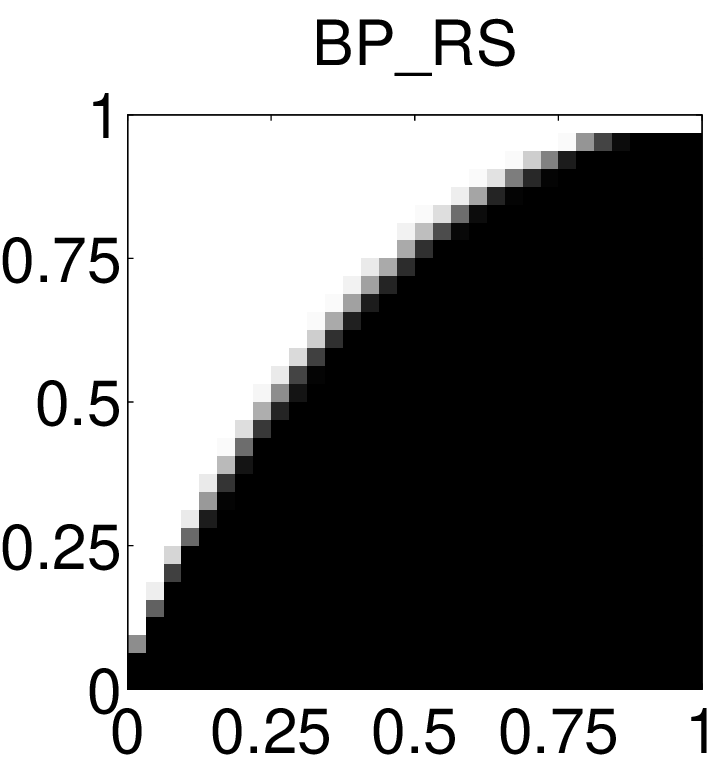} &
  \includegraphics[width=30mm]{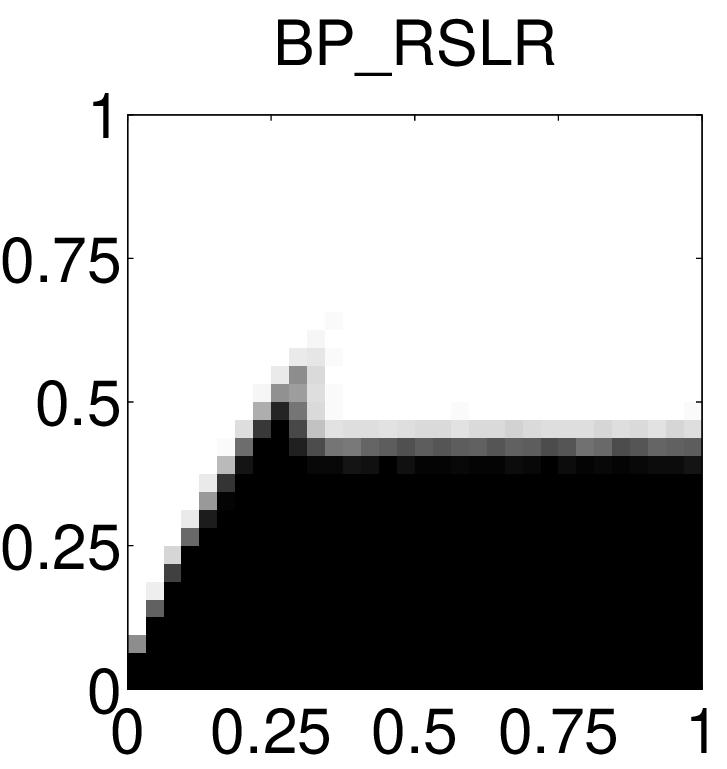} \\
  \multicolumn{5}{c}{\footnotesize (a) $n_2 = 4$}
  \end{tabular}
  \\
  \vspace{3mm}
  \begin{tabular}{m{0.18\textwidth}m{0.18\textwidth}m{0.18\textwidth}m{0.18\textwidth}m{0.18\textwidth}}
  \includegraphics[width=30mm]{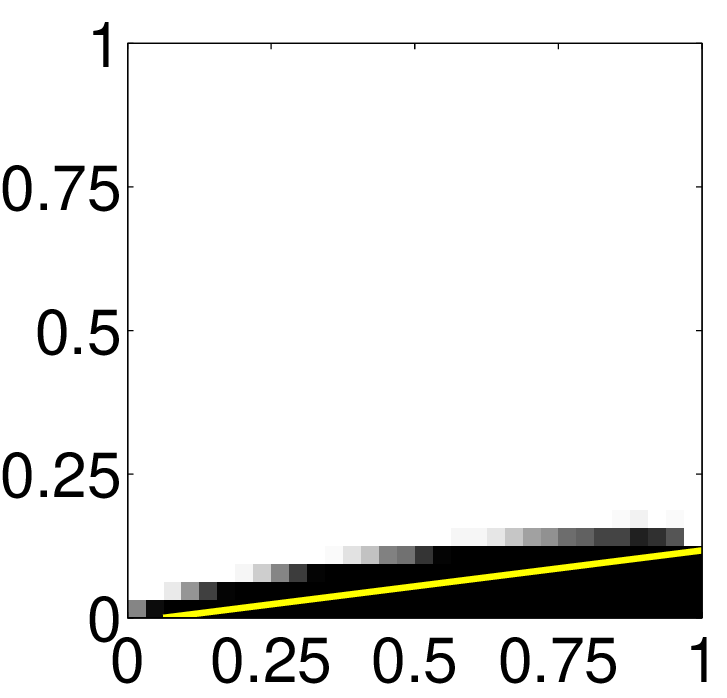} &
  \includegraphics[width=30mm]{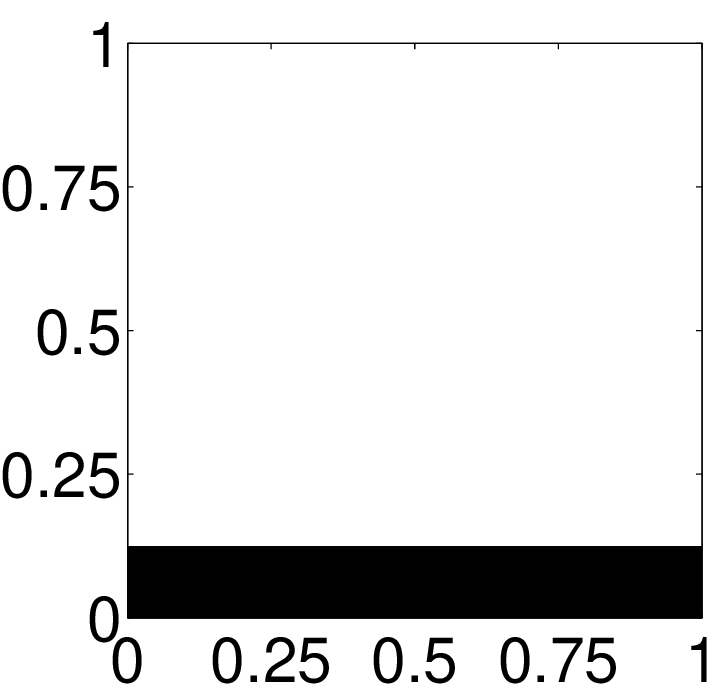} &
  \includegraphics[width=30mm]{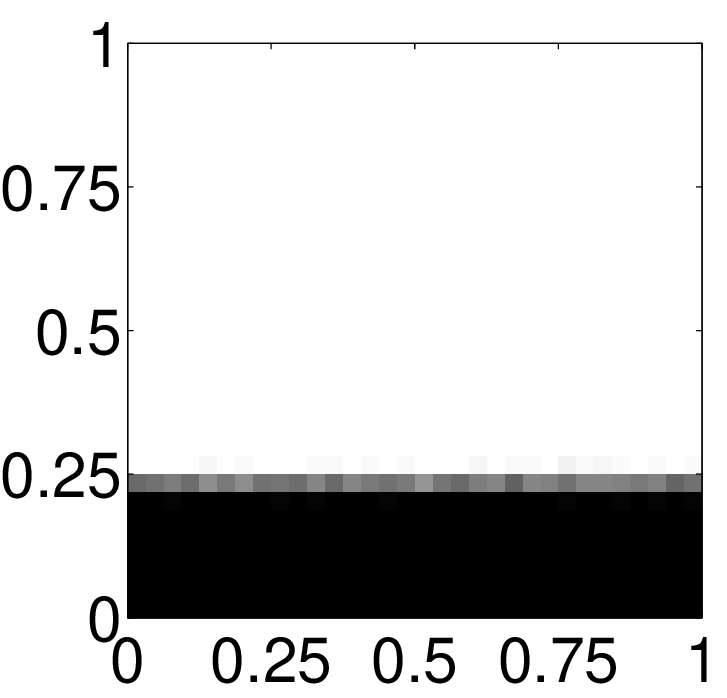} &
  \includegraphics[width=30mm]{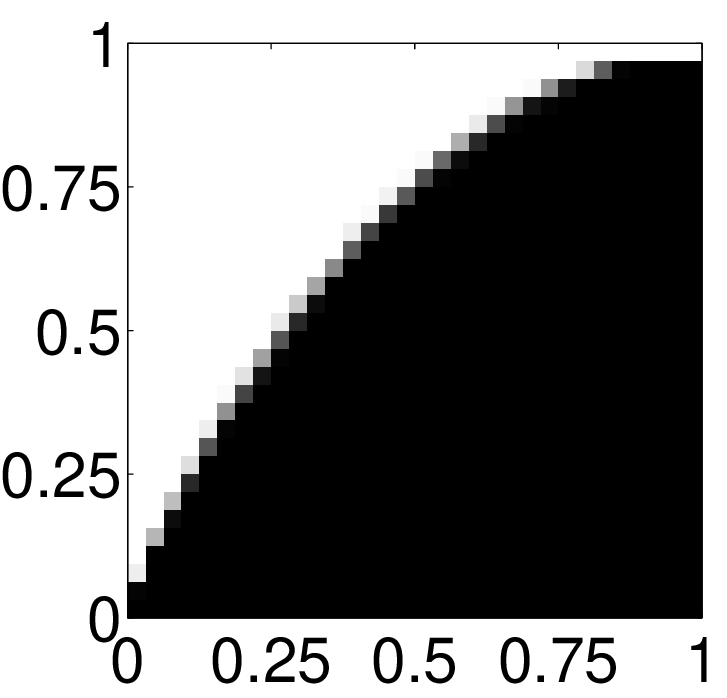} &
  \includegraphics[width=30mm]{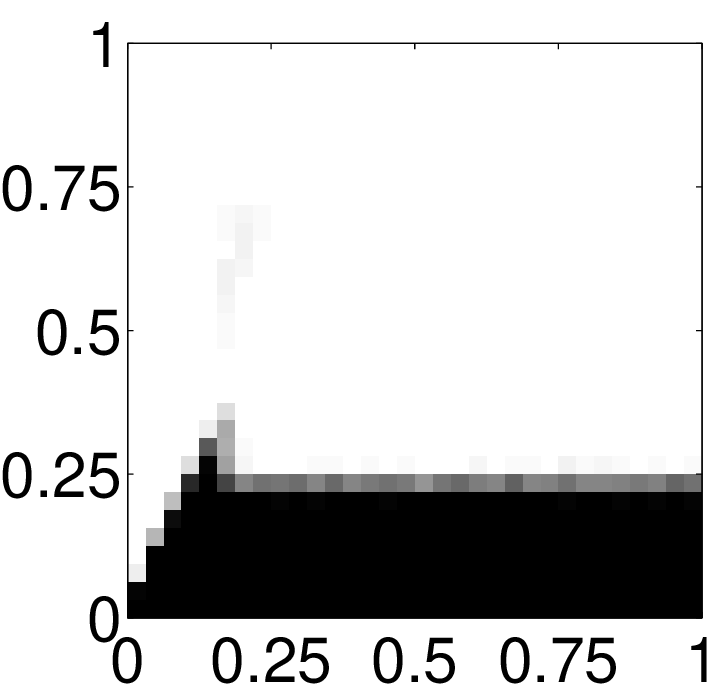} \\
  \multicolumn{5}{c}{\footnotesize (b) $n_2 = 8$}
  \end{tabular}
  \caption{Phase transition of the empirical success rates in the recovery of a rank-1 and row-$s$-sparse matrix of size $n_1 \times n_2$ ($x$-axis: $s/n_1$, $y$-axis: $m/n_1 n_2$, $n_1 = 128$). The empirical success rate is on a grayscale (black for zero and white for one). The yellow line corresponds to the fundamental limit $m \geq s+n_2-2$.}
  \label{fig:ptm_s}
\end{figure}

\begin{figure}[ht]
  \centering
  \begin{subfigure}[b]{0.45\textwidth}
  \centering
  \includegraphics[width=35mm]{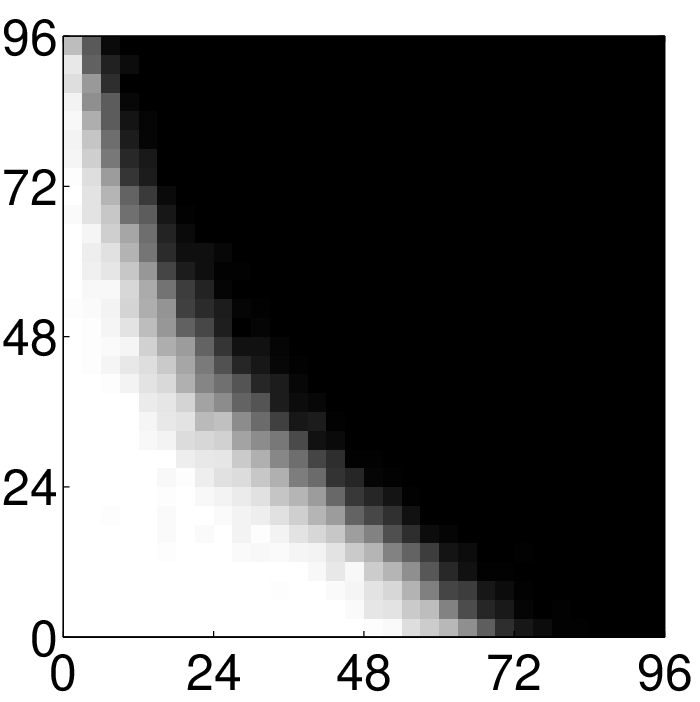}
  \caption{SPF}
  \end{subfigure}
  \begin{subfigure}[b]{0.45\textwidth}
  \centering
  \includegraphics[width=35mm]{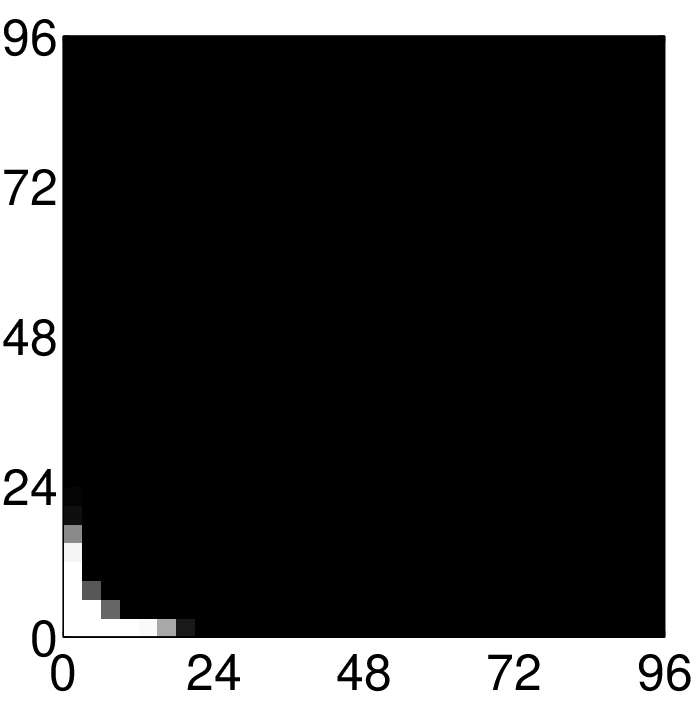}
  \caption{BP\_RSLR}
  \end{subfigure}
  \caption{Phase transition for the empirical success rate in the recovery of a rank-1 and row-$s$-sparse matrix of size $n_1 \times n_2$ when the number of measurements $m$ is fixed ($x$-axis: $s$, $y$-axis: $n_2$, $m = n_1 = 128$).} 
  \label{fig:pt2_spf}
\end{figure}


\paragraph{Noiseless case}
The empirical phase transitions of success rates of various algorithms are given in \prettyref{fig:ptm_s},
where success corresponds to achieving a reconstruction signal-to-noise-ratio (SNR) higher than 50 dB.
SPF improves the performance of PF significantly by exploiting the sparsity structure
and outperforms all three convex approach.
Moreover, empirically the phase transition boundary of SPF is close to the fundamental limit $m\geq s_1+n_2-2$ given in \prettyref{sec:lb}.
A detailed comparison of different phase transition boundaries is given below:
\begin{itemize}
	\item
Both PF and BP\_LR exploit the rank-one constraint requiring
\begin{equation}
m \geq C \max(n_1,n_2)
	\label{eq:mpf}
\end{equation}
number of measurements where $C$ denotes some generic absolute constant. The difference is that PF iterates between left and right singular vectors and BP\_LR promotes low-rankness by minimizing the nuclear norm. As the number of columns $n_2$ in the unknown matrix increases, the number of singular values $\min(n_1,n_2)$ increases; hence, the rank-one constraint becomes more informative in the sense that it further narrows down the solution set. Therefore both PF and BP\_LR benefit from a larger $n_2$. However, since neither procedure takes into account the row sparsity, their performance does not improve with smaller relative sparsity $s/n_1$. This results in a flat phase transition boundary.
On the other hand, SPF achieves a better performance by exploiting the sparsity in the left singular vectors.

\item
Similarly, BP\_RS, which minimizes the mixed norm to promote row sparsity,
provides perfect reconstruction with number of measurements
\begin{equation}
m \geq C s n_2 \log(e n_1/s)
	\label{eq:mbprs}
\end{equation}
which decreases as the relative sparsity $s/n_1$ decreases,
but its performance is indifferent to the relative low-rankness $1/\min(n_1,n_2)$.

\item
BP\_RSLR, which mixes BP\_RS and BP\_LR optimally, performs only as well as, but no better than, the best of BP\_LR and BP\_RS, achieving a success region as the union of the two.
It requires number of measurements to exceed the smallest of the right-hand sides of \prettyref{eq:mpf} and \prettyref{eq:mbprs}, which can far exceed the fundamental limit $m \geq s_1+n_2$.
The intrinsic suboptimality of BP\_RSLR stems from the following observation: Although equivalent under the rank-one constraint, row-sparsity of the matrix is excessive relaxation for the sparsity of the left singular vector, and simply promoting low-rankness seems to be not enough. By thresholding the singular vectors in the power iteration, SPF achieves near-optimal performance.
\end{itemize}

Figure~\ref{fig:pt2_spf} compares the phase transition of empirical success rates for SPF and BP\_RSLR
when the number of columns $n_2$ and the sparsity level $s$ vary while the number of measurements $m$ is fixed to be 128.
SPF outperforms BP\_RSLR by succeeding roughly under the condition $m \geq 4 (s+n_2)$.
This empirical observation is aligned with the theoretical analysis in this paper.

\paragraph{Noisy case}
Figure~\ref{fig:ptm_n} visualizes the performance of SPF in the presence of noise.
While the theoretical analysis of SPF in this paper is restricted to the conservative case where SNR exceeds an absolute constant level,
empirically the reconstruction error of SPF is robust with respect to increased noise level.
In particular, with $m \geq 3 (s + n_2)$, the noise amplification in the reconstruction remains less than 1 for all SNR great than 6 dB.
\begin{figure}[ht]
  \centering
  \begin{subfigure}[b]{0.45\textwidth}
  \centering
  \includegraphics[width=50mm]{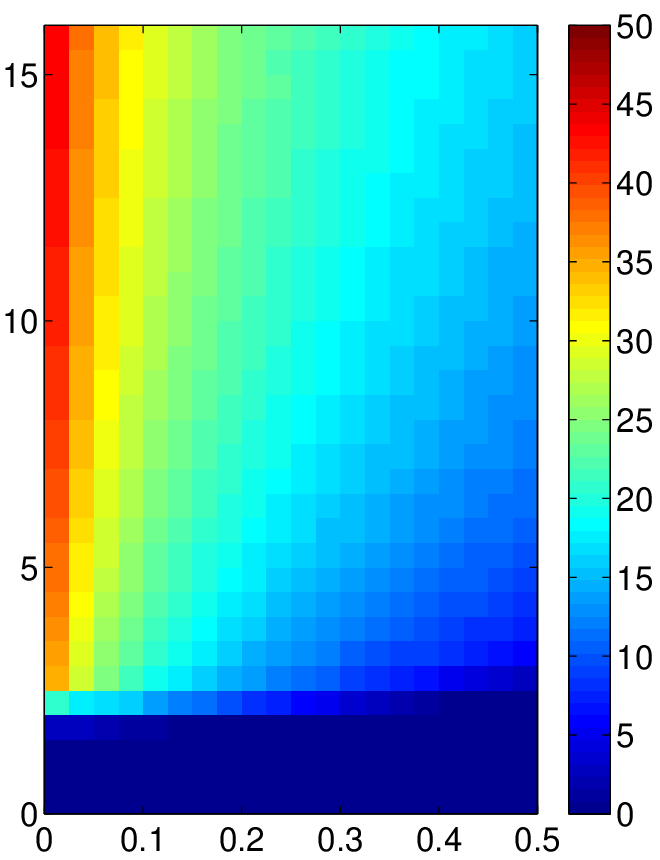}
  \caption{Reconstruction SNR}
  \end{subfigure}
  \begin{subfigure}[b]{0.45\textwidth}
  \centering
  \includegraphics[width=50mm]{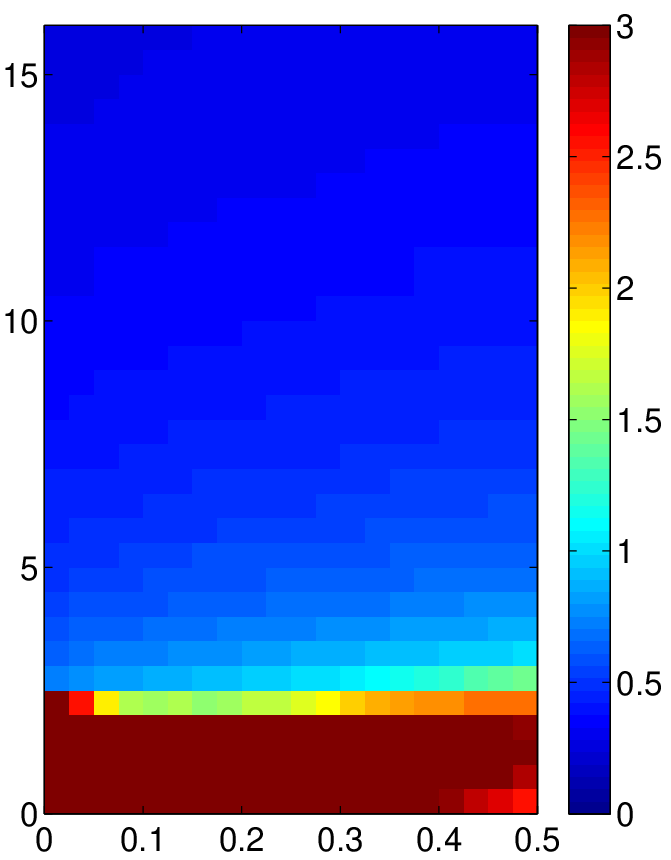}
  \caption{Noise amplification}
  \end{subfigure}
  \caption{Relative error and noise amplification constant of SPF for varying noise strength $\nu$ ($x$-axis: $\nu$, $y$-axis: $m/(s+n_2)$, $n_1 = 256$, $s_1 = 32$, $n_2 = 64$). Reconstruction SNR = $\min\big\{50, 20 \log_{10} \frac{\fnorm{X}}{\fnorm{\widehat{X} - X}} \big\}$. Noise amplification = $\min\big\{3, \log_{10} \frac{\fnorm{\widehat{X} - X}}{\nu \fnorm{X}}\big\}$.}
  \label{fig:ptm_n}
\end{figure}

\subsection{Row and column sparsity}

When the unknown rank-one matrix $X$ is both column and row-sparse,
we compare the recovery performance by SPF
to the recovery using the following convex optimization approaches:
\begin{align*}
\text{\tt BP\_DS}: {} & \qquad \widehat{X} = \arg\min_Z \left\{ \max\left( \frac{\norm{Z}_{1,2}}{\norm{X}_{1,2}} ,~ \frac{\norm{Z^*}_{1,2}}{\norm{X^*}_{1,2}} \right)  :~ \A(Z) = b \right\} \\
\text{\tt BP\_DSLR}: {} & \qquad \widehat{X} = \arg\min_Z \left\{ \max\left( \frac{\norm{Z}_{1,2}}{\norm{X}_{1,2}} ,~ \frac{\norm{Z^*}_{1,2}}{\norm{X^*}_{1,2}} ,~ \frac{\norm{Z}_*}{\norm{X}_*} \right) :~ \A(Z) = b \right\}.
\end{align*}
Both BP\_DS and BP\_DSLR use the oracle optimal weights, which are functions of the unknown matrix $X$.

\begin{figure}[ht]
  \centering
  \begin{tabular}{m{0.18\textwidth}m{0.18\textwidth}m{0.18\textwidth}m{0.18\textwidth}m{0.18\textwidth}}
  \includegraphics[width=27mm]{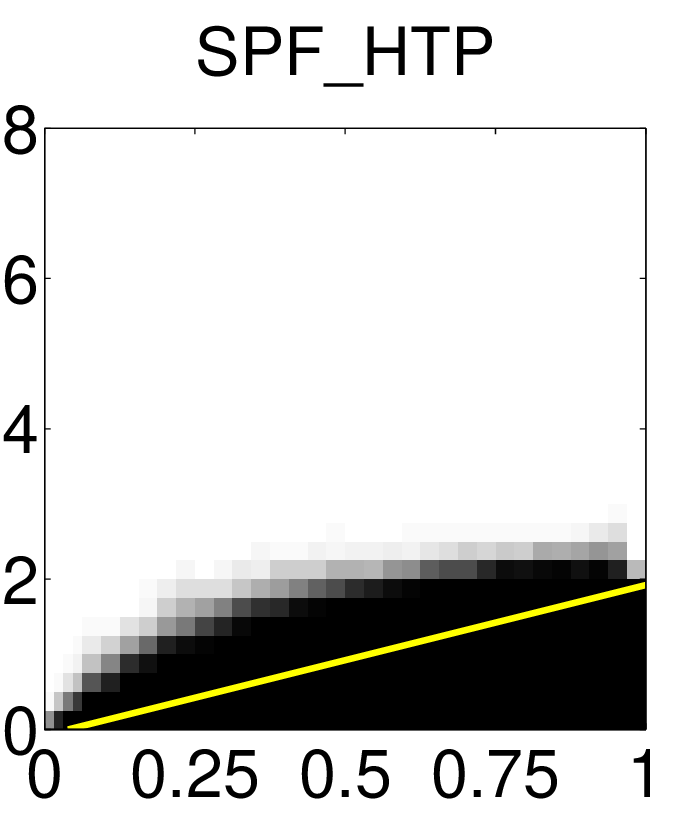} &
  \includegraphics[width=27mm]{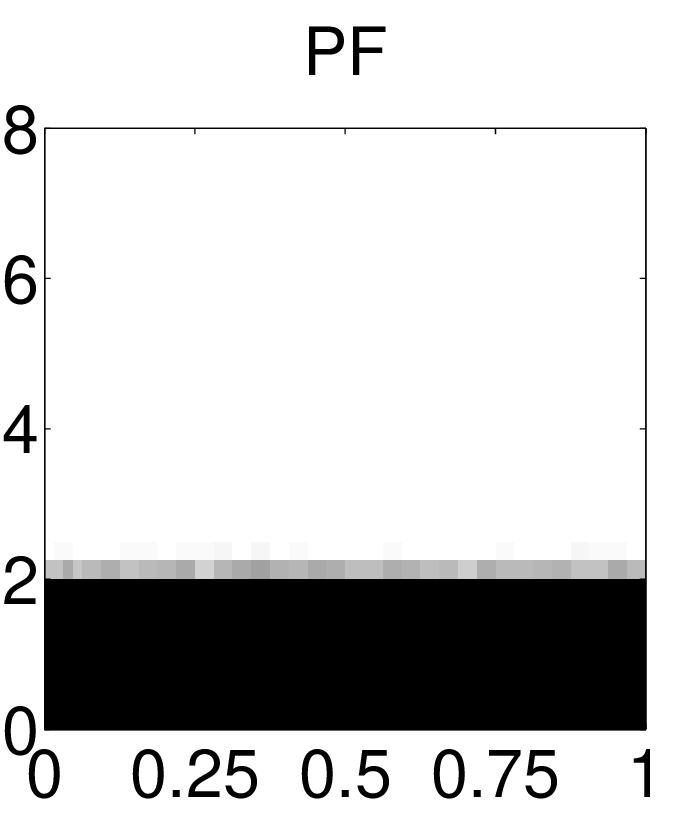} &
  \includegraphics[width=27mm]{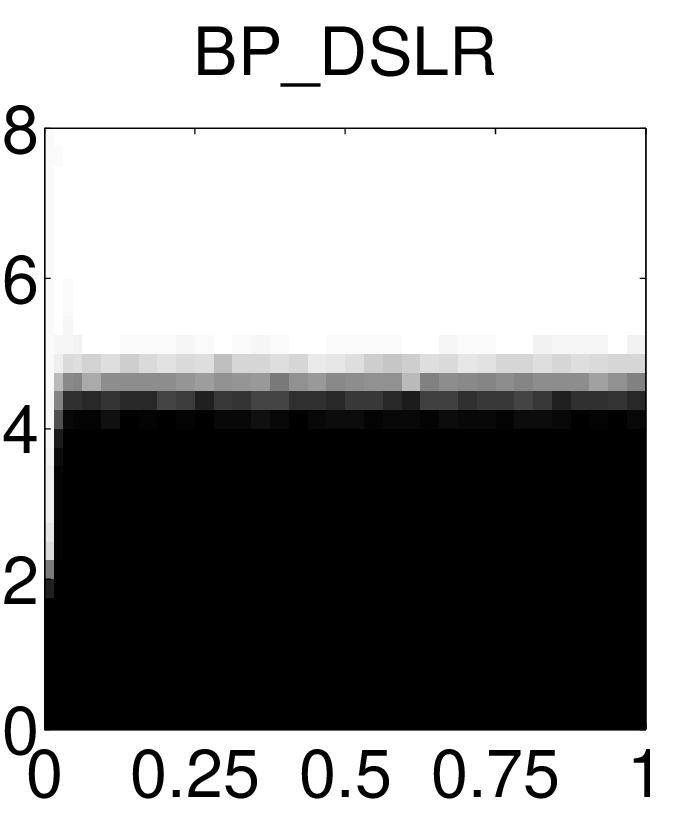} &
  \includegraphics[width=27mm]{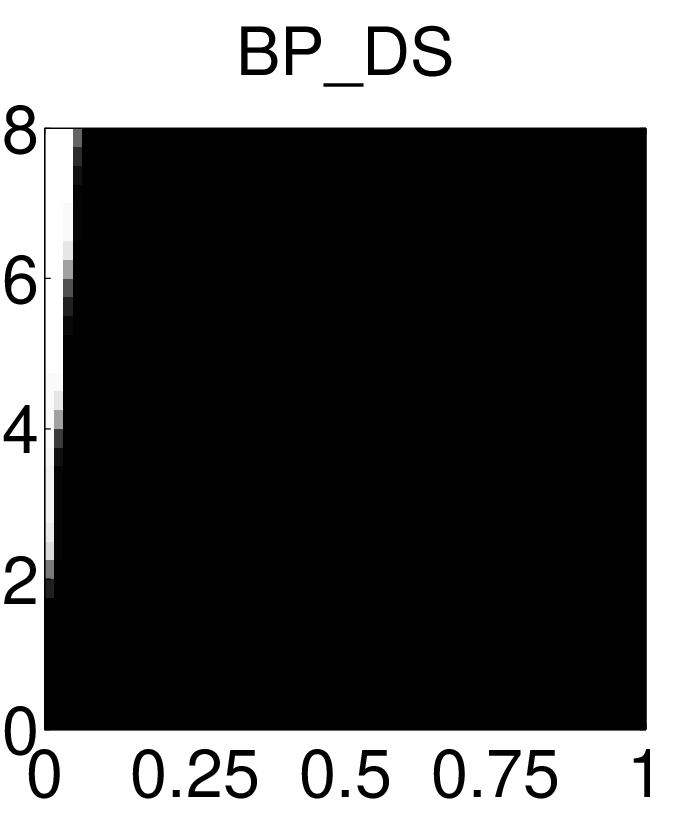} &
  \includegraphics[width=27mm]{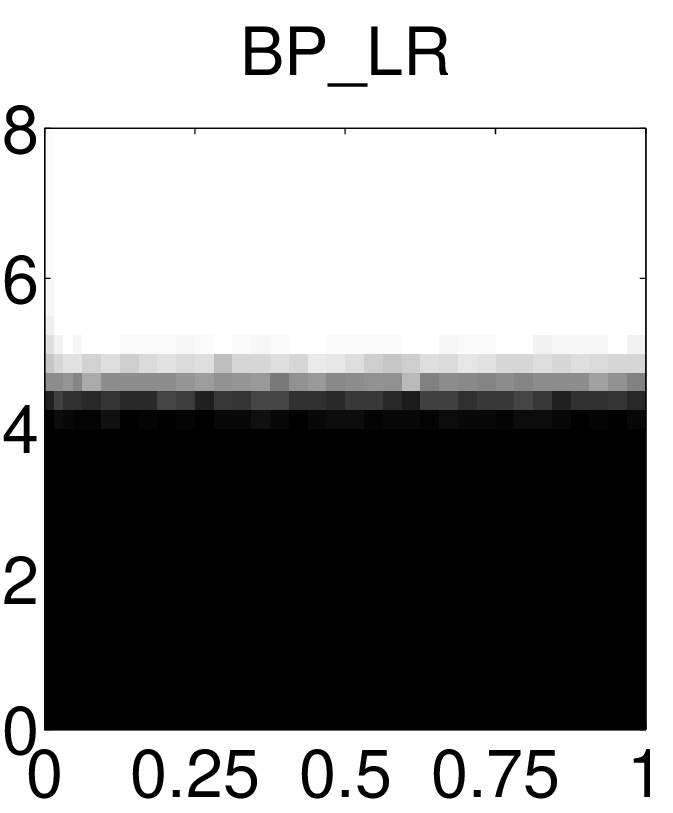}
  \end{tabular}
  \caption{Phase transition for the empirical success rate in the recovery of a rank-1 and doubly $(s,s)$-sparse matrix of size $n \times n$ ($x$-axis: $s/n$, $y$-axis: $m/n$, $n = 64$). The yellow line corresponds to the fundamental limit $m=2s$.}
  \label{fig:ptm_ds}
\end{figure}

Figure~\ref{fig:ptm_ds} plots the empirical success rates of SPF and competing convex approaches.
The unknown rank-one matrix of size $n \times n$ is assumed doubly $(s,s)$-sparse.
In this setup, BP\_DSLR performs as well as BP\_DS in the very sparse case ($s=1$).
Otherwise, the rank-one prior dominates the double sparsity prior and the performance of BP\_DSLR coincides with that of BP\_LR. On the contrary, SPF significantly improves on PF by exploiting the double sparsity prior. \prettyref{fig:pt2_dspfa} (resp. \prettyref{fig:pt2_dspfb}) shows that,
when $m = 1.5 n$ (resp. $m=2 n$), SPF empirically succeed roughly under the condition $m \geq 6 s$,
which is aligned with the performance guarantee of SPF in \prettyref{sec:ub} and the lower bound $m \geq 2s-2$ in \prettyref{sec:lb}.
As shown in \prettyref{fig:pt2_dspfc}, even with more measurements,
BP\_DSLR completely failed in this setup except for the corner case of either $s_1 = 1$ or $s_2 = 1$.

\begin{figure}[ht]
  \centering
  \begin{subfigure}[b]{0.31\textwidth}
  \centering
  \includegraphics[width=35mm]{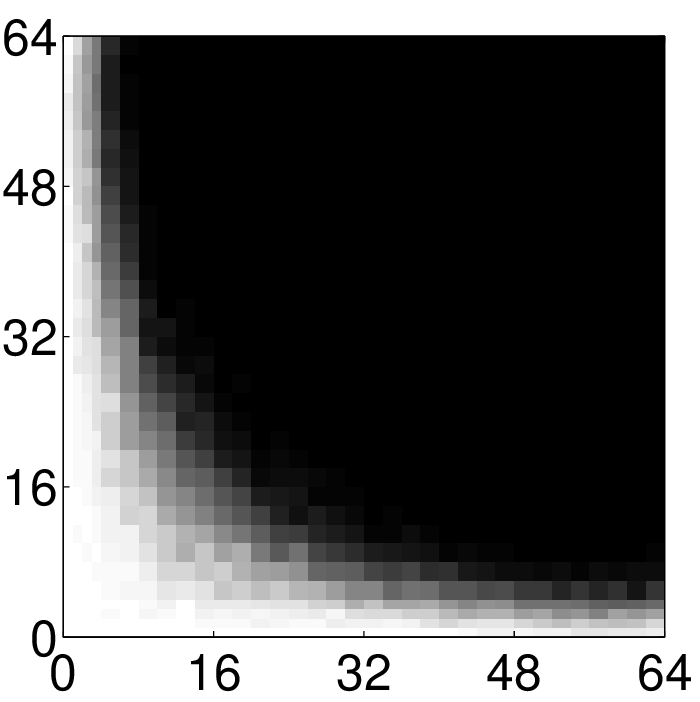}
  \caption{SPF ($m = 96$)}
  \label{fig:pt2_dspfa}
  \end{subfigure}
  \begin{subfigure}[b]{0.31\textwidth}
  \centering
  \includegraphics[width=35mm]{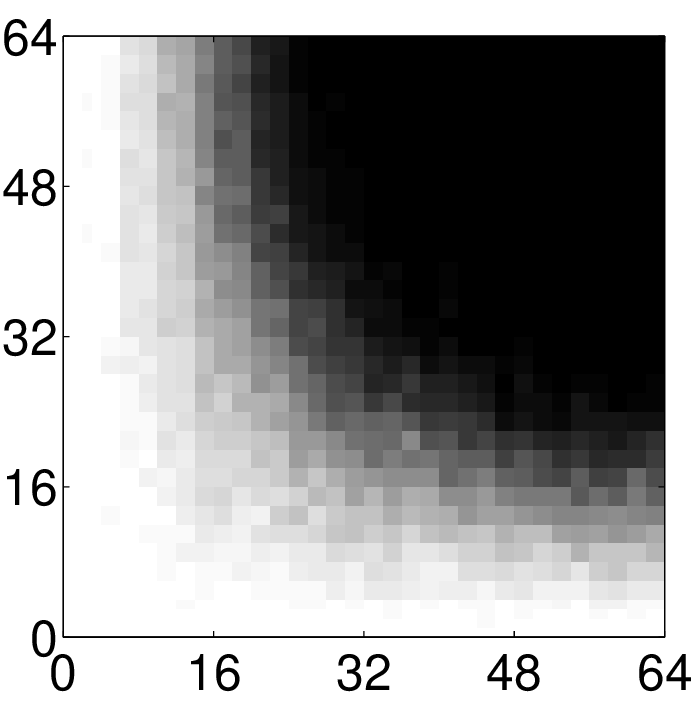}
  \caption{SPF ($m = 128$)}
  \label{fig:pt2_dspfb}
  \end{subfigure}
  \begin{subfigure}[b]{0.31\textwidth}
  \centering
  \includegraphics[width=35mm]{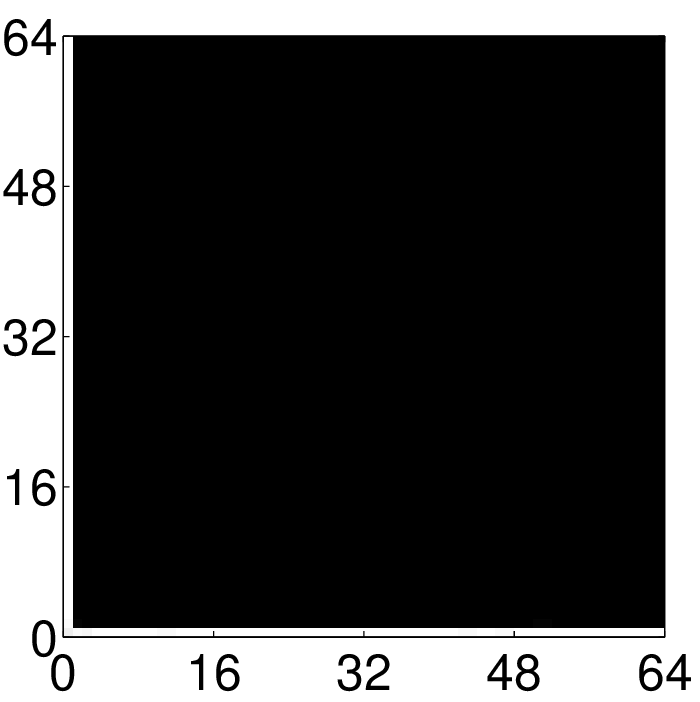}
  \caption{BP\_DSLR ($m = 192$)}
  \label{fig:pt2_dspfc}
  \end{subfigure}
  \caption{Phase transition for the empirical success rate in the recovery of a rank-1 and doubly $(s_1,s_2)$-sparse matrix of size $64 \times 64$ using SPF when the number of measurements $m$ is fixed ($x$-axis: $s_1$, $y$-axis: $s_2$).}
  \label{fig:pt2_dspf}
\end{figure}

\subsection{Sparse and low rank matrices}
\label{subsec:numres:lrds}

In the scenario where the rank of the unknown sparse matrix is low but larger than 1, we compare the recovery performance of SCSPF to the natural rank-$r$ extension of SPF without subspace concatenation (Algorithm~\ref{alg:rspf}) and to BP\_DSLR. Figure~\ref{fig:ptm_dslr_kappa1} compares the three algorithms when the condition number of the unknown matrix is ideally fixed as 1. SCSPF and SPF outperform BP\_DSLR similarly to the previous sections. The empirical phase transition occurs at a sample complexity proportional to the rank of the unknown matrix, which is aligned with the presented theory for SCSPF. Although the sample complexity for the performance guarantee for SCSPF in Theorem~\ref{thm:pgrip_ds_iidG_rankr} is proportional to the square of the condition number $\kappa$, as shown in Figure~\ref{fig:ptm_dslr_kappa5}, when $\kappa$ is small, 5 in this figure, the empirical performance of the three algorithms looked similar to the ideal case in Figure~\ref{fig:ptm_dslr_kappa1}. In both figures, in fact, the natural rank-$r$ SPF provided a better empirical performance compared to SCSPF. We suspect that the subspace concatenation in SCSPF was necessary because of artifacts in our proof techniques. It might be possible that a more careful analysis of the natural rank-$r$ SPF provide a performance guarantee at the sample complexity depending linearly on the rank.

\begin{figure}[ht]
  \centering  
  \begin{subfigure}[b]{0.31\textwidth}
  \centering
  \includegraphics[height=50mm]{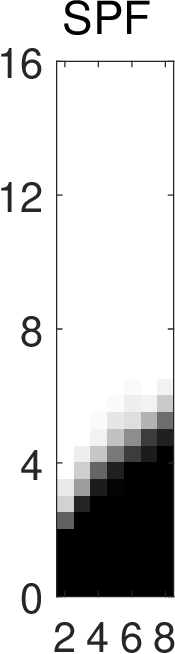}
  \end{subfigure}  
  \begin{subfigure}[b]{0.31\textwidth}
  \centering
  \includegraphics[height=50mm]{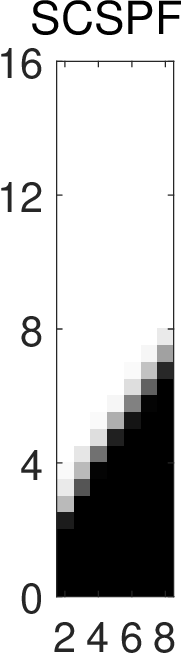}
  \end{subfigure}  
  \begin{subfigure}[b]{0.31\textwidth}
  \centering
  \includegraphics[height=50mm]{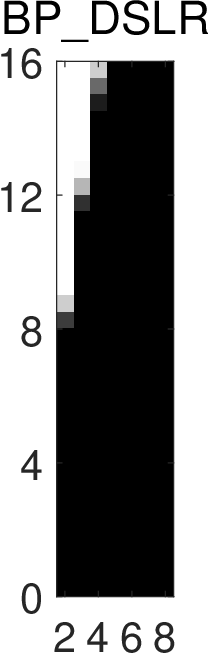}
  \end{subfigure}  
  \caption{Phase transition for the empirical success rate in the recovery of a rank-$r$ and doubly $(s,s)$-sparse matrix of size $n \times n$ ($x$-axis: $r$, $y$-axis: $m/n$, $n = 64$, $s = 16$, $\kappa = 1$).}
  \label{fig:ptm_dslr_kappa1}
\end{figure}

\begin{figure}[ht]
  \centering
  \begin{subfigure}[b]{0.31\textwidth}
  \centering
  \includegraphics[height=50mm]{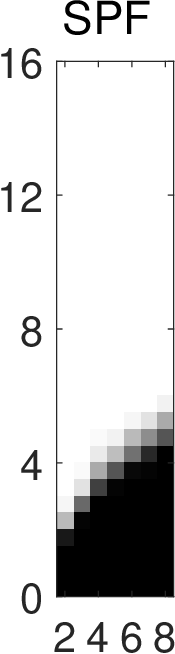}
  \end{subfigure}  
  \begin{subfigure}[b]{0.31\textwidth}
  \centering
  \includegraphics[height=50mm]{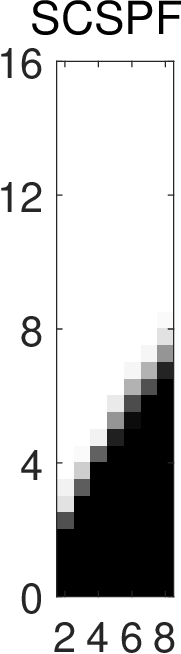}
  \end{subfigure}  
  \begin{subfigure}[b]{0.31\textwidth}
  \centering
  \includegraphics[height=50mm]{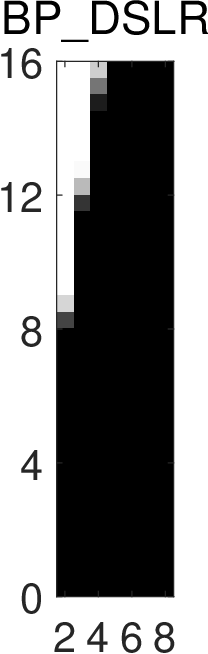}
  \end{subfigure}  
  \caption{Phase transition for the empirical success rate in the recovery of a rank-$r$ and doubly $(s,s)$-sparse matrix of size $n \times n$ ($x$-axis: $r$, $y$-axis: $m/n$, $n = 64$, $s = 16$, $\kappa = 5$).}
  \label{fig:ptm_dslr_kappa5}
\end{figure}

\section{Conclusion}
\label{sec:conclusion}
We proposed an alternating minimization algorithm called sparse power factorization (SPF)
that reconstructs sparse rank-one matrices from their linear measurements. We showed that under variants of the restricted isometry property corresponding to the underlying matrix models, SPF provides provable performance guarantees. Information-theoretic determination of the minimal number of measurements is another contribution of this paper. In particular, when the measurements are given using an i.i.d. Gaussian measurement operator, SPF with the initialization $\vzstar$ provides a near-optimal performance guarantee that holds with a number of measurements, which exceed the fundamental lower bound only by a logarithmic factor. On the other hand, when the unknown matrix has sparse singular vectors with fast-decaying magnitudes, the performance guarantee for SPF with the computationally efficient initialization $v_0^\text{\rm Th}$ holds with the same number of measurements. Similar performance guarantees in the context of blind deconvolution are presented in a companion paper \cite{LeeLJB2015}.
Furthermore, through numerical experiments, we showed that the empirical performance of SPF dominates that of competing convex approaches, which is consistent with the theoretical analysis.

We conclude the paper by discussing the computational aspect of our sensing problem. As mentioned in \prettyref{sec:ub}, the performance of SPF hinges on the initialization. The near optimal number of measurements $O((s_1+s_2)\log \max(e n_1/s_1, e n_2/s_2))$ in \prettyref{thm:pgrip_init_ds_iidG} with no additional assumption is obtained using the initial value $\vzstar$ defined in \prettyref{sec:alg-spf}, which can be expensive to compute. In view of the failure of convex programming via nuclear/mixed norms \cite{oymak2015simultaneously}, as well as the recently established hardness of sparse PCA \cite{Berthet13} and biclustering \cite{MW13b}, it is an open problem to determine whether achieving stable recovery using $O(s_1+s_2)$ or $O((s_1+s_2)\log \max(e n_1/s_1, e n_2/s_2))$ measurements is computationally intractable.

\section{Proofs}
\label{sec:pf}

\subsection{Proof of Theorem~\ref{thm:conv_spf}}
\label{subsec:pgspf}
For $t\geq 0$, denote the angle between $u$ and $u_t$ (resp. between $v$ and $v_t$)
by $\theta_t$ (resp. $\phi_t$):
\begin{align}
\theta_t \triangleq \cos^{-1} \left(\frac{|u^* u_t|}{\norm{u}_2 \norm{u_t}_2} \right)
\quad \text{and} \quad
\phi_t \triangleq \cos^{-1} \left( \frac{|v^* v_t|}{\norm{v}_2 \norm{v_t}_2} \right), \label{eq:defangles}
\end{align}
which are in $[0, \pi/2]$.

We derive recursive relations between the sequences $(\theta_t)_{t \in \bbN}$ and $(\phi_t)_{t \in \bbN}$. This roadmap is similar to that in a previous work \cite{JaiNS2012}
deriving a performance guarantee in terms of the rank-$2r$ RIP for the recovery of a rank-$r$ matrix using PF.

The updates of $u_t$ and/or $v_t$ using HTP, in the presence of the corresponding sparsity prior,
are new components in SPF compared to PF, and require a different analysis (Lemma~\ref{lemma:update_theta}). Although the least squares update of $u_t$ or $v_t$, in the absence of the corresponding sparsity prior, is common to both SPF and PF, we provide a new analysis of this step (Lemma~\ref{lemma:update_phi}) that improves on the previous analysis \cite{JaiNS2012} by sharpening the inequalities involved.

First, we analyze the update of $u_t$ using HTP. To this end, we present a modified guarantee for HTP in the following lemma.

\begin{lemma}[RIP-guarantee for HTP]
Let $b = \Phi x + z \in \bbC^m$ denote the noisy measurement vector of an $s$-sparse $x \in \bbC^n$ obtained using a sensing matrix $\Phi \in \bbC^{m \times n}$. Let $(x_k)_{k \in \bbN}$ be the iterates of HTP using a perturbed sensing matrix $\widehat{\Phi}$. Suppose that $\widehat{\Phi}$ satisfies the $3s$-sparse RIP with isometry constant $\delta < 0.5$,
and that there exists $\vartheta \in [0,1)$ such that
\begin{equation}
\norm{[\Pi_{\widetilde{J}} \widehat{\Phi}^* (\Phi - \widehat{\Phi}) \Pi_{\widetilde{J}}} \leq \vartheta \delta,
\quad \forall \widetilde{J} \subset [n],~ |\widetilde{J}| \leq 3s.
\label{eq:RBOP_AtE}
\end{equation}
Then, there exist $L_\delta^{\mathrm{HTP}} \in \bbN$ and $K_\delta^{\mathrm{HTP}} > 0$ such that
\[
\norm{x_k - x}_2 \leq C_\delta^\mathrm{HTP} (\vartheta \delta \norm{x}_2 + \sqrt{1+\delta} \norm{z}_2)
\]
for all $k \geq \lceil L_\delta^{\mathrm{HTP}} + K_\delta^{\mathrm{HTP}} s \rceil$,
where the constants $L_\delta^{\mathrm{HTP}}$, $K_\delta^{\mathrm{HTP}}$, $C_\delta^{\mathrm{HTP}}$ depend only on $\delta$.
\label{lemma:htp}
\end{lemma}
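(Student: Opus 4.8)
\emph{Proof plan.} The idea is to regard the inner HTP iteration --- which is run with the \emph{perturbed} matrix $\widehat{\Phi}$ --- as ordinary noisy $s$-sparse recovery with respect to $\widehat{\Phi}$, at the cost of an enlarged effective noise. Write
\[
b = \widehat{\Phi}x + \widetilde{z}, \qquad \widetilde{z} \triangleq z + (\Phi - \widehat{\Phi})x,
\]
so that $x$ is still $s$-sparse and, from the algorithm's point of view, $(x_k)_{k \in \bbN}$ is exactly the HTP sequence for the data $(\widehat{\Phi},b)$ with noise $\widetilde{z}$. Since $\widehat{\Phi}$ obeys the $3s$-sparse RIP with constant $\delta$, Foucart's analysis of HTP \cite{Fou2011htp} applies with sensing matrix $\widehat{\Phi}$; the plan is to run that analysis unchanged and only re-estimate the noise-dependent contributions, which is where hypothesis \eqref{eq:RBOP_AtE} enters.

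In Foucart's proof the noise affects the per-iteration error recursion only through quantities of the form $\norm{\Pi_J \widehat{\Phi}^* \widetilde{z}}_2$ with $|J| \le 3s$, so the first step is to extract his estimates in exactly this form rather than immediately passing to $\norm{\widetilde{z}}_2$. Fix such a $J$ and set $S = \supp{x}$; the coordinate sets occurring in the analysis have $|J| \le 3s$, and the ones not already containing $S$ have $|J| \le 2s$, so replacing $J$ by $J \cup S$ keeps $|J| \le 3s$ while only enlarging $\norm{\Pi_J \widehat{\Phi}^* \widetilde{z}}_2$; hence we may assume $S \subseteq J$ and $|J| \le 3s$. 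Then I would split
\[
\norm{\Pi_J \widehat{\Phi}^* \widetilde{z}}_2 \le \norm{\Pi_J \widehat{\Phi}^* z}_2 + \norm{\Pi_J \widehat{\Phi}^* (\Phi - \widehat{\Phi})x}_2 .
\]
The first summand is at most $\sqrt{1+\delta}\,\norm{z}_2$, since the $3s$-sparse RIP gives $\norm{\widehat{\Phi}\Pi_J} \le \sqrt{1+\delta}$ and hence $\norm{\Pi_J\widehat{\Phi}^*} \le \sqrt{1+\delta}$. For the second summand, since $x$ is $S$-sparse with $S \subseteq J$ we have $(\Phi - \widehat{\Phi})x = (\Phi-\widehat{\Phi})\Pi_J x$, so
\[
\norm{\Pi_J \widehat{\Phi}^* (\Phi - \widehat{\Phi})x}_2 = \norm{\bigl[\Pi_J \widehat{\Phi}^* (\Phi - \widehat{\Phi})\Pi_J\bigr] x}_2 \le \norm{\Pi_J \widehat{\Phi}^* (\Phi - \widehat{\Phi})\Pi_J}\, \norm{x}_2 \le \vartheta\delta\,\norm{x}_2
\]
by \eqref{eq:RBOP_AtE}. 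Thus every occurrence of the effective noise term is controlled by $\vartheta\delta\norm{x}_2 + \sqrt{1+\delta}\,\norm{z}_2$.

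Feeding this into Foucart's recursion --- which, for $\delta$ below the HTP threshold (so that $1 - \sqrt{2\delta^2/(1-\delta^2)} > 0$, as is implicit in \eqref{eq:defchtp}), has the contractive form $\norm{x_{k+1}-x}_2 \le \rho\,\norm{x_k-x}_2 + \beta_\delta(\vartheta\delta\norm{x}_2 + \sqrt{1+\delta}\norm{z}_2)$ with $\rho = \sqrt{2\delta^2/(1-\delta^2)} < 1$ and $\beta_\delta = \sqrt{2/(1-\delta^2)} + 1/(1-\delta)$, i.e.\ precisely the ingredients of $C_\delta^{\text{\rm HTP}}$ --- and summing the geometric series yields $\limsup_k \norm{x_k - x}_2 \le \tfrac{\beta_\delta}{1-\rho}(\vartheta\delta\norm{x}_2 + \sqrt{1+\delta}\norm{z}_2)$. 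To turn this $\limsup$ into a bound valid after finitely many steps I would invoke the finite-termination property of HTP: the support sequence $(\supp{x_k})$ is eventually constant, and by the iteration-count analysis of HTP \cite{Fou2011htp} this happens within $\lceil L_\delta^{\text{\rm HTP}} + K_\delta^{\text{\rm HTP}} s \rceil$ steps for suitable $L_\delta^{\text{\rm HTP}} \in \bbN$, $K_\delta^{\text{\rm HTP}} > 0$; once the support is frozen, $x_k$ is the exact least-squares solution on it and satisfies the fixed-point error estimate directly, with no residual $\rho^k\norm{x_0-x}_2$ term. The $1.01$ slack in $C_\delta^{\text{\rm HTP}}$ absorbs the passage from $\beta_\delta/(1-\rho)$ to $C_\delta^{\text{\rm HTP}}$, giving $\norm{x_k - x}_2 \le C_\delta^{\text{\rm HTP}}(\vartheta\delta\norm{x}_2 + \sqrt{1+\delta}\norm{z}_2)$ for all $k \ge \lceil L_\delta^{\text{\rm HTP}} + K_\delta^{\text{\rm HTP}} s \rceil$, as claimed. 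Note that the perturbation enters only the noise floor, never the contraction factor, which is why the error bound is merely shifted by the additive term $\vartheta\delta\norm{x}_2$.

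The main obstacle is the bookkeeping in the second paragraph: one must verify that the perturbation $(\Phi-\widehat{\Phi})x$ enters Foucart's argument \emph{only} through the sandwiched operator norms that \eqref{eq:RBOP_AtE} controls, and never as a bare $\norm{(\Phi-\widehat{\Phi})x}_2$, which the hypothesis is too weak to bound. Concretely this amounts to re-deriving the HTP iteration estimates of \cite{Fou2011htp} while keeping them in the $\norm{\Pi_J \widehat{\Phi}^*(\cdot)}_2$ form and tracking which coordinate sets $J$ appear, checking $|J| \le 2s$ exactly where the argument needs $|J \cup S| \le 3s$. Everything else --- the contraction factor $\rho$, the noise-amplification factor $\beta_\delta$, and the $O(s)$ iteration count --- is a direct transcription of \cite{Fou2011htp} with $\widehat{\Phi}$ in place of $\Phi$.
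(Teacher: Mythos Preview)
Your proposal is correct and follows essentially the same route as the paper: rewrite $b=\widehat{\Phi}x+Ex+z$, rerun Foucart's HTP analysis with $\widehat{\Phi}$, and use \eqref{eq:RBOP_AtE} to bound the $Ex$ contribution by $\vartheta\delta\|x\|_2$ precisely because the noise enters only through terms of the form $\Pi_J\widehat{\Phi}^*(\cdot)$. The only cosmetic difference is in how the $O(s)$ iteration count is obtained: the paper records both the contraction recursion and the least-squares (``uncaptured energy'') estimate $\|x_k-x\|_2\le\rho'\|\Pi_{J_k}^\perp x\|_2+\tau'(\cdots)$ and then invokes \cite[Lemma~2.11]{LeeBJ2013oblique} to combine them into the explicit bound $L_\delta^{\text{HTP}}+K_\delta^{\text{HTP}}s$ with the $1.01$ factor, whereas you phrase the same thing as ``support freezes after $O(s)$ steps and the frozen LS solution satisfies the fixed-point estimate'' --- which is exactly the second estimate above, so the two arguments coincide.
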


\begin{remark}
Given $\delta$, the constants $L_\delta^{\mathrm{HTP}}$, $K_\delta^{\mathrm{HTP}}$, and $C_\delta^\mathrm{HTP}$ are computed explicitly.
For example, if $\delta = 0.08$, then $L_\delta^{\mathrm{HTP}} = 3$, $K_\delta^{\mathrm{HTP}} = 3.17$, and $C_\delta^\mathrm{HTP} = 2.86$.
\end{remark}

Unlike the original performance guarantee for HTP \cite{Fou2011htp}, in Lemma~\ref{lemma:htp},
the recovery of the unknown $x$ via HTP uses a perturbed sensing matrix $\widehat{\Phi}$.
Furthermore, the error bound in Lemma~\ref{lemma:htp} applies to the estimate after $O(s)$ iterations rather than to the limit to which HTP converges to. This non-asymptotic bound is useful in the sense that the resulting final guarantee for SPF in Theorem~\ref{thm:conv_spf} applies to the case where the number of inner iterations for the HTP steps is bounded. As a result, compared to \cite{Fou2011htp}, the bound increases slightly by a constant factor 1.01, which appears in the definition of $C_\delta^{\mathrm{HTP}}$ in the proof of Lemma~\ref{lemma:htp}.

\begin{proof}[Proof of Lemma~\ref{lemma:htp}]
Let $E \triangleq \Phi - \widehat{\Phi}$. Then
\[
b = \widehat{\Phi} x + E x + z,
\]
which can be viewed as the measurement vector $\widehat{\Phi} x$ corrupted by the additive noise $E x + z$. Then, we can apply the conventional analysis of HTP \cite{Fou2011htp} to $\widehat{\Phi}$. However, due to the near bi-orthogonality between $\widehat{\Phi}_{\widetilde{J}}$ and $E_J$ in (\ref{eq:RBOP_AtE}), the noise term $E x$ and $z$ propagate with different amplification coefficients. The resulting guarantee is summarized as follows: Under the assumptions in Lemma~\ref{lemma:htp}, we have
\begin{equation}
\norm{x_k - x}_2 \leq \rho \norm{x_{k-1} - x}_2 + \tau (\vartheta \delta \norm{x}_2 + \sqrt{1+\delta} \norm{z}_2), \quad \forall k \in \bbN
\label{eq:idgp:recursion}
\end{equation}
and
\begin{equation}
\norm{x_k - x}_2 \leq \rho' \norm{\Pi_{J_k}^\perp x}_2 + \tau' (\vartheta \delta \norm{x}_2 + \sqrt{1+\delta} \norm{z}_2), \quad \forall k \in \bbN,
\label{eq:idgp:uncapenergy}
\end{equation}
where $J_k$ denotes the support of $x_k$ and the constants are given by
\[
\rho = \sqrt{\frac{2\delta^2}{1-\delta^2}}, \quad
\tau = \sqrt{\frac{2}{1-\delta^2}} + \frac{1}{1-\delta}, \quad
\rho' = \frac{1}{\sqrt{1-\delta^2}}, \quad
\tau' = \frac{1}{1-\delta}.
\]
(The modification of the conventional analysis of HTP to this version is straightforward. Hence, we omit the detail for deriving this step.)

Applying (\ref{eq:idgp:recursion}) and (\ref{eq:idgp:uncapenergy}) to \cite[Lemma~2.11]{LeeBJ2013oblique} yields that if
\begin{align*}
k > L + \frac{ s \ln\left( 1 + 2 \left( \rho' + \frac{\tau'}{\tau} \frac{1-\rho}{2} \right) \right) }
{ \ln \left(\frac{2}{1+\rho}\right) }
\end{align*}
for $L \in \bbN$, then
\[
\norm{x_k - x}_2 \leq \underbrace{ \frac{\tau [1 + \rho^L (2 \rho' - \rho)]}{1-\rho} }_{ (\star)} ~\cdot~ (\vartheta \delta \norm{x}_2 + \sqrt{1+\delta} \norm{z}_2).
\]

Set the parameter $L$ to
\[
L = \left\lceil \frac{\ln(100(2\rho'-\rho))}{\ln(\rho^{-1})} \right\rceil.
\]
Then, $(\star)$ is bounded from above by $\frac{1.01 \tau}{1-\rho}$; hence, if
\[
k >
\underbrace{ \left\lceil \frac{\ln(100(2\rho'-\rho))}{\ln(\rho^{-1})} \right\rceil }_{ L_\delta^{\mathrm{HTP}}}
+ s \cdot \underbrace{ \frac{ \ln\left[ 1 + 2 \left( \rho' + \frac{\tau'}{\tau} \frac{1-\rho}{2} \right) \right] }{ \ln \left(\frac{2}{1+\rho}\right) } }_{ K_\delta^{\mathrm{HTP}} },
\]
then
\[
\norm{x_k - x}_2 \leq
\underbrace{ \frac{1.01 \tau}{1-\rho} }_{ C_\delta^{\mathrm{HTP}} } ~\cdot~ (\vartheta \delta \norm{x}_2 + \sqrt{1+\delta} \norm{z}_2).
\]
\end{proof}

Next, using Lemma~\ref{lemma:htp}, we analyze the update of $u_t$ using HTP.

\begin{lemma}[Sparse update of $u_t$]
\label{lemma:update_theta}
Suppose the hypotheses of Theorem~\ref{thm:conv_spf} hold.
Let $u_t$ be obtained as the $\lceil L_\delta^{\mathrm{HTP}} + K_\delta^{\mathrm{HTP}} s \rceil$th iterate of HTP applied to the sensing matrix $F(v_{t-1})$ and the measurement vector $b$,
where $L_\delta^{\mathrm{HTP}}$ and $K_\delta^{\mathrm{HTP}}$ are constants determined by $\delta$ in the proof of Lemma~\ref{lemma:htp}.
Then,
\begin{align}
\sin \theta_t
{} & \leq
C_\delta^\mathrm{HTP} \left[\delta \tan \phi_{t-1} + (1+\delta) \sec \phi_{t-1} \frac{\norm{z}_2}{\norm{\A(X)}_2}\right].
\label{eq:update_theta}
\end{align}
where $\theta_t$ and $\phi_{t-1}$ are defined in (\ref{eq:defangles}).
\end{lemma}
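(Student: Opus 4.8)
The plan is to reduce the claimed bound on $\sin\theta_t$ to an application of Lemma~\ref{lemma:htp}, which requires identifying: (i) the ``true'' sensing matrix $\Phi$, (ii) the perturbed sensing matrix $\widehat\Phi$ actually used by HTP, (iii) the true $s$-sparse signal being recovered, and (iv) the effective noise. Here the HTP step computes $u_t$ from $b = \A(X)+z$ using the matrix $\widehat\Phi = F(v_{t-1})$ with $v_{t-1}$ normalized. Writing $X = \lambda u v^*$ and decomposing $v_{t-1}$ into its component along $v$ and its orthogonal component, I would set the ``true'' matrix to be $\Phi = F(\hat v_{t-1})$ where $\hat v_{t-1}$ is the normalized projection of $v_{t-1}$ onto $\R(v)$ (so $\hat v_{t-1}$ is a unit vector in the span of $v$, hence $s_2$-sparse), so that $\A(X) = \Phi\,(\lambda u \cos\phi_{t-1}\cdot e^{i\alpha})$ for a suitable phase; thus the true signal is $x = \lambda\cos\phi_{t-1}\,u$ up to phase, which is $s_1$-sparse. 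The perturbation is $E = \Phi - \widehat\Phi = F(\hat v_{t-1} - v_{t-1})$, and $\hat v_{t-1}-v_{t-1}$ lies in $\R(v)^\perp$ with norm $\sin\phi_{t-1}$.

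Next I would verify the two hypotheses of Lemma~\ref{lemma:htp} for these choices. The $3s_1$-sparse RIP of $\widehat\Phi = F(v_{t-1})$ with constant $\delta$ should follow from the rank-2 and $(3s_1,3s_2)$-sparse RIP of $\A$: for any $3s_1$-sparse $w$, $\widehat\Phi w = \A(w v_{t-1}^*)$, and $w v_{t-1}^*$ is rank-one with $\le 3s_1$ nonzero rows and $\le s_2 \le 3s_2$ nonzero columns, so $\|\widehat\Phi w\|_2^2 = \|\A(wv_{t-1}^*)\|_2^2 \in (1\pm\delta)\|wv_{t-1}^*\|_F^2 = (1\pm\delta)\|w\|_2^2$ since $\|v_{t-1}\|_2=1$. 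For the near-biorthogonality condition \eqref{eq:RBOP_AtE}, I must bound $\|\Pi_{\widetilde J}\widehat\Phi^*E\,\Pi_{\widetilde J}\|$ for $|\widetilde J|\le 3s_1$. The key algebraic identity is that $\langle \widehat\Phi w_1, E w_2\rangle = \langle \A(w_1 v_{t-1}^*), \A(w_2 (\hat v_{t-1}-v_{t-1})^*)\rangle$, and since $v_{t-1}\perp(\hat v_{t-1}-v_{t-1})$, the two rank-one matrices $w_1 v_{t-1}^*$ and $w_2(\hat v_{t-1}-v_{t-1})^*$ are orthogonal in Frobenius inner product; a standard RIP polarization argument on $\A$ restricted to rank-2, $(3s_1,2s_2)$-sparse matrices (the sum has rank $\le 2$) then gives $|\langle\widehat\Phi w_1, E w_2\rangle| \le \delta \|w_1\|_2\|w_2\|_2 \|\hat v_{t-1}-v_{t-1}\|_2 = \delta\sin\phi_{t-1}\|w_1\|_2\|w_2\|_2$. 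Hence \eqref{eq:RBOP_AtE} holds with $\vartheta = \sin\phi_{t-1}$.

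Feeding these into Lemma~\ref{lemma:htp} yields, for the iterate $u_t$ at index $\lceil L_\delta^{\text{\rm HTP}}+K_\delta^{\text{\rm HTP}}s\rceil$,
\[
\|u_t - x\|_2 \le C_\delta^{\text{\rm HTP}}\bigl(\sin\phi_{t-1}\cdot\delta\,\|x\|_2 + \sqrt{1+\delta}\,\|z\|_2\bigr),
\]
with $x = \lambda\cos\phi_{t-1}\,u$ (up to phase). Dividing by $\|x\|_2 = \lambda\cos\phi_{t-1}$ and using $\|z\|_2 = \nu\|\A(X)\|_2$ together with $\|\A(X)\|_2 \le \sqrt{1+\delta}\,\fnorm{X} = \sqrt{1+\delta}\,\lambda$ gives
\[
\frac{\|u_t-x\|_2}{\|x\|_2} \le C_\delta^{\text{\rm HTP}}\Bigl(\delta\tan\phi_{t-1} + (1+\delta)\,\nu\sec\phi_{t-1}\Bigr).
\]
Finally, the relative $\ell_2$ error of an estimate to a vector controls the sine of the angle between them (for any $x'$, $\sin\angle(x,x') \le \|x'-x\|_2/\|x\|_2$, since the angle's sine is the distance from $x'/\|x'\|_2$ to the line through $x$, which is at most the distance from any point on the ray to that line after scaling), and $u_t$ is then normalized, so $\sin\theta_t \le \|u_t-x\|_2/\|x\|_2$, which is exactly \eqref{eq:update_theta}.

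The main obstacle is the near-biorthogonality estimate \eqref{eq:RBOP_AtE}: one must carefully track that the orthogonality $v_{t-1}\perp(\hat v_{t-1}-v_{t-1})$ makes the relevant pair of rank-one matrices Frobenius-orthogonal, and then invoke the correct polarization form of the RIP (on rank-$\le 2$, $(3s_1,2s_2)$-sparse matrices) to convert this Frobenius-orthogonality into the operator-norm bound $\le\delta\sin\phi_{t-1}$ uniformly over all support sets $\widetilde J$ of size $\le 3s_1$. Everything else is bookkeeping: the normalization of $v_{t-1}$, the phase alignment between $\A(X)$ and $\Phi x$, and the passage from relative error to $\sin\theta_t$.
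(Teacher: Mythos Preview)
Your high-level strategy---verify the hypotheses of Lemma~\ref{lemma:htp} for $\widehat\Phi=F(v_{t-1})$, then convert the relative $\ell_2$ error into a bound on $\sin\theta_t$---is exactly the paper's approach, and your treatment of the $3s_1$-sparse RIP of $F(v_{t-1})$, the noise bound $\|z\|_2\le\sqrt{1+\delta}\,\nu\lambda$, and the final inequality $\sin\theta_t\le\|u_t-x\|_2/\|x\|_2$ are all fine.

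However, your decomposition is set up the wrong way round, and the three concrete claims you make about it are all false. You decompose $v_{t-1}$ along $v$ and take $\hat v_{t-1}$ to be the \emph{normalized} projection of $v_{t-1}$ onto $\R(v)$, so $\hat v_{t-1}=e^{i\gamma}v$ for some phase. Then:
(i) since $F(\hat v_{t-1})=e^{-i\gamma}F(v)$, the true signal satisfying $b=\Phi x+z$ is $x=e^{i\gamma}\lambda u$ with $\|x\|_2=\lambda$, not $\lambda\cos\phi_{t-1}\,u$;
(ii) $\hat v_{t-1}-v_{t-1}$ has a nonzero component $(1-\cos\phi_{t-1})\hat v_{t-1}$ along $v$ and norm $2\sin(\phi_{t-1}/2)$, not $\sin\phi_{t-1}$;
(iii) $\langle v_{t-1},\hat v_{t-1}-v_{t-1}\rangle=\cos\phi_{t-1}-1\neq 0$, so the orthogonality you invoke in the polarization step fails.
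Because (iii) fails, the RIP polarization argument does not give $\|\Pi_{\widetilde J}\widehat\Phi^*E\,\Pi_{\widetilde J}\|\le\delta\sin\phi_{t-1}$; you instead pick up an extra term of size $|1-\cos\phi_{t-1}|$ that is not proportional to $\delta$, so the condition~\eqref{eq:RBOP_AtE} of Lemma~\ref{lemma:htp} cannot be met in the required form.

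The remedy is to reverse the roles: decompose the \emph{true} vector $v$ orthogonally with respect to the estimate $v_{t-1}$, writing $v=(v_{t-1}^*v)\,v_{t-1}+\zeta$ with $\zeta:=P_{\R(v_{t-1})^\perp}v$. Then $\zeta\perp v_{t-1}$ and $\|\zeta\|_2=\sin\phi_{t-1}$ hold automatically, and one rewrites
\[
b=\bigl[F(v_{t-1})+(v_{t-1}^*v)^{-1}F(\zeta)\bigr]\bigl((v_{t-1}^*v)\lambda u\bigr)+z,
\]
so that $E=(v_{t-1}^*v)^{-1}F(\zeta)$ and $x=(v_{t-1}^*v)\lambda u$ with $\|x\|_2=\lambda\cos\phi_{t-1}$. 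The orthogonality $\zeta\perp v_{t-1}$ now makes $w_1v_{t-1}^*$ and $w_2\zeta^*$ Frobenius-orthogonal, and the RIP polarization gives $\|\Pi_{\widetilde J}[F(v_{t-1})]^*[F(\zeta)]\Pi_{\widetilde J}\|\le\delta\|\zeta\|_2=\delta\sin\phi_{t-1}$. This is precisely what the paper does; with this correction the rest of your argument goes through unchanged.
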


\begin{proof}
Recall that $v_{t-1}$ was normalized in the $\ell_2$ norm before the update of $u_t$.
Decomposing $v$ as
\begin{equation}
v = P_{\R(v_{t-1})} v + \underbrace{P_{\R(v_{t-1})^\perp} v}_{ \zeta}
= (v_{t-1}^* v) v_{t-1} + \zeta,
\label{eq:decomp_checkv}
\end{equation}
we have
\[
\norm{\zeta}_2^2 = 1 - |v_{t-1}^* v|^2 = \sin^2 \phi_{t-1}
\]
and
\[
\norm{\zeta}_0 \leq \norm{v}_0 + \norm{v_{t-1}}_0 \leq 2 s_2.
\]

By (\ref{eq:decomp_checkv}), we have
\begin{equation}
(v_{t-1}^* v) F(v_{t-1}) = F(v) - F(\zeta).
\label{eq:decomp_Fvt}
\end{equation}

By (\ref{eq:decomp_Fvt}), we can rewrite the measurement vector $b$ as
\begin{align*}
b {} & = \A(X) + z \\
{} & = \A(\lambda u v^*) + z \\
{} & = [F(v)] (\lambda u) + z \\
{} & = [(v_{t-1}^* v) F(v_{t-1}) + F(\zeta)](\lambda u) + z \\
{} & = [\underbrace{ F(v_{t-1}) }_{ \widehat{\Phi} } + \underbrace{ (v_{t-1}^* v)^{-1} F(\zeta) }_{ E } ] \underbrace{ ((v_{t-1}^* v) \lambda u) }_{ x } + z.
\end{align*}

Since $\norm{v_{t-1}}_2 = 1$ and $\norm{v_{t-1}}_0 \leq s_2$, the rank-2 and $(3s_1,3s_2)$-sparse RIP of $\A$ implies by Lemma~\ref{lemma:ds_RIP2FG} that
$F(v_{t-1})$ satisfies the $3s_1$-sparse RIP with isometry constant $\delta$.
Similarly, since $\langle v_{t-1}, \zeta \rangle = 0$ and $\norm{v_{t-1}}_0 + \norm{\zeta}_0 \leq 3s_2$, the rank-2 and $(3s_1,3s_2)$-sparse RIP of $\A$ implies by Lemma~\ref{lemma:ds_RIP2FG} that
\[
\norm{\Pi_{\widetilde{J}} [F(v_{t-1})]^* [F(\zeta)] \Pi_{\widetilde{J}}} \leq \delta \norm{\zeta}_2,
\quad \forall \widetilde{J} \subset [n_1],~ |\widetilde{J}| \leq 3s_1.
\]

Therefore, by Lemma~\ref{lemma:htp}, we have
\begin{align*}
\norm{u_t - (v_{t-1}^* v) \lambda u}_2
{} & \leq C_\delta^\mathrm{HTP} \delta \norm{\zeta}_2 \norm{\lambda u}_2
+ C_\delta^\mathrm{HTP} \sqrt{1+\delta} ~ \norm{z}_2 \\
{} & \leq \lambda C_\delta^\mathrm{HTP} \left( \delta \sin \phi_{t-1} + \sqrt{1+\delta} ~ \frac{\norm{z}_2}{\lambda} \right) \\
{} & = \lambda \underbrace{ C_\delta^\mathrm{HTP} \left( \delta \sin \phi_{t-1} + \sqrt{1+\delta} ~ \frac{\norm{z}_2}{\fnorm{X}} \right) }_{ \alpha }
\end{align*}
and $\sin \theta_t$ is rewritten as
\begin{equation}
\begin{aligned}
\sin \theta_t
{} & = \norm{P_{\R(u_t)^\perp} P_{\R(u)}} \\
{} & = \norm{P_{\R(u_t)^\perp} u}_2 \\
{} & = \frac{\norm{P_{\R(u_t)^\perp} (v_{t-1}^* v) \lambda u}_2}{|(v_{t-1}^* v) \lambda|} \\
{} & = \frac{\norm{P_{\R(u_t)^\perp} (v_{t-1}^* v) \lambda u}_2}{\lambda \cos \phi_{t-1}} \\
{} & \overset{\text{(c)}}{\leq} \frac{\norm{(v_{t-1}^* v) \lambda u - u_t}_2}{\lambda \cos \phi_{t-1}} \\
{} & \leq \frac{\alpha}{\cos \phi_{t-1}},
\end{aligned}
\label{eq:ubtheta}
\end{equation}
where (c) follows since
\[
\norm{P_{\R(u_t)^\perp} (v_{t-1}^* v) \lambda u}_2 = \min_{\tilde{u}} \{ \norm{(v_{t-1}^* v) \lambda u - \tilde{u}}_2 :~ \tilde{u} \in \R(u_t) \}.
\]

It remains to apply the following upper bound on $\norm{z}_2/\fnorm{X}$
\begin{equation}
\frac{\norm{z}_2}{\norm{\A(X)}_2} \geq \frac{1}{\sqrt{1+\delta}} \cdot \frac{\norm{z}_2}{\fnorm{X}}
\label{eq:nuineq}
\end{equation}
implied by the rank-2 and row-$3s$-sparse RIP of $\A$ with isometry constant $\delta$.
\end{proof}

In the doubly sparse case ($s_1 < n_1$ and $s_2 < n_2$), by the symmetry in the updates of $u_t$ and $v_t$,
we get the following corollary as a direct implication of Lemma~\ref{lemma:update_theta}.

\begin{corollary}[Sparse update of $v_t$]
\label{cor:update_phi}
Suppose the hypotheses of Theorem~\ref{thm:conv_spf} hold.
Let $v_t$ be obtained as the $\lceil L_\delta^{\mathrm{HTP}} + K_\delta^{\mathrm{HTP}} s \rceil$th iterate of HTP applied to the sensing matrix $G(u_t)$ and the measurement vector $\overline{b}$,
where $L_\delta^{\mathrm{HTP}}$ and $K_\delta^{\mathrm{HTP}}$ are constants determined by $\delta$ in the proof of Lemma~\ref{lemma:htp}.
Then,
\begin{align}
\sin \phi_t
{} & \leq
C_\delta^\mathrm{HTP} \left[\delta \tan \theta_t + (1+\delta) \sec \theta_t \frac{\norm{z}_2}{\norm{\A(X)}_2}\right].
\label{eq:update_phi_htp}
\end{align}
where $\theta_t$ and $\phi_t$ are defined in (\ref{eq:defangles}).
\end{corollary}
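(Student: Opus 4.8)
The plan is to derive \prettyref{cor:update_phi} by re-running the proof of Lemma~\ref{lemma:update_theta} with the two singular vectors interchanged: $u \leftrightarrow v$, $u_t \leftrightarrow v_t$, $F \leftrightarrow G$, $b \leftrightarrow \overline{b}$, $(s_1,n_1) \leftrightarrow (s_2,n_2)$, and $\theta \leftrightarrow \phi$. First I would use the sesqui-linear identity $\A(uy^*) = \overline{[G(u)]y}$ to write the conjugated measurement as $\overline{b} = [G(u)](\lambda v) + \overline{z}$, noting that $\norm{\overline{z}}_2 = \norm{z}_2$ and $\fnorm{X} = \lambda$. After $u_t$ has been normalized in $\ell_2$, decompose $u$ along $\R(u_t)$ as $u = (u_t^* u)\, u_t + \eta$ with $\eta \triangleq P_{\R(u_t)^\perp} u$, so that $\norm{\eta}_2 = \sin\theta_t$ and $\norm{\eta}_0 \leq \norm{u}_0 + \norm{u_t}_0 \leq 2s_1$. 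Linearity of $G$ gives $(u_t^* u)\, G(u_t) = G(u) - G(\eta)$, whence
\[
\overline{b} = \bigl[\, \underbrace{G(u_t)}_{\widehat\Phi} \;+\; \underbrace{(u_t^* u)^{-1} G(\eta)}_{E} \,\bigr]\; \underbrace{\bigl((u_t^* u)\lambda v\bigr)}_{x} \;+\; \overline{z},
\]
which is exactly the perturbed-sensing-matrix setup of Lemma~\ref{lemma:htp}.

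Next I would verify the two hypotheses of Lemma~\ref{lemma:htp} by invoking Lemma~\ref{lemma:ds_RIP2FG} in the column direction. Since $\norm{u_t}_2 = 1$ and $\norm{u_t}_0 \leq s_1$, the rank-2 and $(3s_1,3s_2)$-sparse RIP of $\A$ forces $G(u_t)$ to satisfy the $3s_2$-sparse RIP with the same constant $\delta$; and since $\langle u_t,\eta\rangle = 0$ with $\norm{u_t}_0 + \norm{\eta}_0 \leq 3s_1$, the same lemma yields $\norm{\Pi_{\widetilde J}\,[G(u_t)]^*[G(\eta)]\,\Pi_{\widetilde J}} \leq \delta\norm{\eta}_2$ for every $\widetilde J \subset [n_2]$ with $|\widetilde J| \leq 3s_2$, i.e.\ the near-biorthogonality condition \prettyref{eq:RBOP_AtE} holds with $\vartheta\delta$ replaced by $\delta\norm{\eta}_2 = \delta\sin\theta_t$ on the relevant scale. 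Applying Lemma~\ref{lemma:htp} to the $\lceil L_\delta^{\text{\rm HTP}} + K_\delta^{\text{\rm HTP}} s \rceil$th HTP iterate then gives, with $\alpha \triangleq C_\delta^{\text{\rm HTP}}\bigl(\delta\sin\theta_t + \sqrt{1+\delta}\,\norm{z}_2/\lambda\bigr)$, the bound $\norm{v_t - (u_t^* u)\lambda v}_2 \leq \lambda\alpha$.

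Finally, mirroring the last display of \prettyref{eq:ubtheta}, I would write $\sin\phi_t = \norm{P_{\R(v_t)^\perp} v}_2 = \norm{P_{\R(v_t)^\perp}(u_t^* u)\lambda v}_2 / (\lambda\cos\theta_t)$, bound the numerator by $\norm{(u_t^* u)\lambda v - v_t}_2$ since $P_{\R(v_t)^\perp}$ realizes the distance from $(u_t^* u)\lambda v$ to $\R(v_t)$, and conclude $\sin\phi_t \leq \alpha/\cos\theta_t$. Substituting the inequality $\norm{z}_2/\lambda = \norm{z}_2/\fnorm{X} \leq \sqrt{1+\delta}\,\nu$ from \prettyref{eq:nuineq} turns this into $\sin\phi_t \leq C_\delta^{\text{\rm HTP}}[\delta\tan\theta_t + (1+\delta)\nu\sec\theta_t]$, as claimed. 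I do not expect a genuine obstacle here: this is the verbatim column-analog of Lemma~\ref{lemma:update_theta}, and the only points needing a moment's attention are that Lemma~\ref{lemma:ds_RIP2FG} treats $F$ and $G$ symmetrically (so the RIP transfer works identically for $G(u_t)$), and that passing to $\overline{b}$ leaves all $\ell_2$ norms unchanged, so that one may simply cite Lemma~\ref{lemma:update_theta} under the stated substitution.
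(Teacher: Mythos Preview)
Your proposal is correct and is exactly the approach the paper takes: the paper states that the corollary follows ``by the symmetry in the updates of $u_t$ and $v_t$'' as a direct implication of Lemma~\ref{lemma:update_theta}, and you have simply spelled out that symmetric argument. One small slip: $G$ is \emph{anti}-linear rather than linear (cf.\ the proof of Lemma~\ref{lemma:update_phi}), so the scalar factor should be $\overline{(u_t^*u)}$ and correspondingly $x=\overline{(u_t^*u)}\lambda v$; since only $|u_t^*u|=\cos\theta_t$ enters the final bound, this does not affect the argument.
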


In the row-sparse case ($s_2 = n_2$), the right factor $v_t$ is updated using $u_t$ by solving the least squares problem in (\ref{eq:lsupdate2_v}).
The angle between the resulting $v_t$ and $v$ is analyzed in the following lemma.

\begin{lemma}[LS update of $v_t$]
\label{lemma:update_phi}
Suppose the hypotheses of Theorem~\ref{thm:conv_spf} hold.
Then,
\begin{align}
\sin \phi_t
{} & \leq \frac{1}{1-\delta} \left[ \delta \tan \theta_t + (1+\delta) \sec \theta_t \frac{\norm{z}_2}{\norm{\A(X)}_2}\right].
\label{eq:update_phi}
\end{align}
\end{lemma}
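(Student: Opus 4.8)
The plan is to mirror the proof of Lemma~\ref{lemma:update_theta}, replacing the HTP step by the exact least-squares update \eqref{eq:lsupdate2_v}. Note that this lemma concerns the row-sparse case $s_2=n_2$, so the rank-$2$ and $(3s_1,3s_2)$-sparse RIP hypothesis of Theorem~\ref{thm:conv_spf} is just the rank-$2$ row-$3s_1$-sparse RIP, and that $u_t$ has been normalized before the update, so $\norm{u_t}_2=1$ and $\norm{u_t}_0\le s_1$. First I would decompose the true left singular vector along $u_t$ as
\[
u = (u_t^* u)\,u_t + \eta, \qquad \eta \triangleq P_{\R(u_t)^\perp} u ,
\]
so that $\norm{\eta}_2 = \sin\theta_t$ and $\norm{\eta}_0 \le \norm{u}_0 + \norm{u_t}_0 \le 2s_1$. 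Using $\A(uv^*) = \overline{[G(u)]v}$ and the consequent identity $G(u) = (u_t^* u)\,G(u_t) + G(\eta)$, I would rewrite the conjugated data as
\[
\overline{b} = [G(u)](\lambda v) + \overline{z} = [G(u_t)]\,x + r, \qquad x \triangleq (u_t^* u)\,\lambda v,\quad r \triangleq [G(\eta)](\lambda v) + \overline{z},
\]
that is, as the measurement of $x$ by $G(u_t)$ corrupted by the noise $r$. Lemma~\ref{lemma:ds_RIP2FG}, applied to $G(u_t)$ using $\norm{u_t}_2=1$ and $\norm{u_t}_0\le s_1$, shows that $G(u_t)$ is a near-isometry on all of $\bbC^{n_2}$, i.e.\ $(1-\delta)\norm{y}_2^2 \le \norm{[G(u_t)]y}_2^2 \le (1+\delta)\norm{y}_2^2$ for every $y\in\bbC^{n_2}$; and since $u_t \perp \eta$ with $\norm{u_t}_0+\norm{\eta}_0\le 3s_1$, the same RIP gives the near-biorthogonality bound $\norm{[G(u_t)]^*[G(\eta)]} \le \delta\norm{\eta}_2$.

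Next I would solve the least-squares problem explicitly. Since $v_t$ minimizes $\norm{\overline{b} - [G(u_t)]\tilde v}_2^2 = \norm{[G(u_t)](x-\tilde v) + r}_2^2$, the normal equations yield $v_t - x = ([G(u_t)]^*[G(u_t)])^{-1}[G(u_t)]^* r$, which is well defined because the lower isometry bound makes $[G(u_t)]^*[G(u_t)]$ invertible with $\norm{([G(u_t)]^*[G(u_t)])^{-1}} \le (1-\delta)^{-1}$. Splitting $[G(u_t)]^* r$ and bounding its two terms — with the near-biorthogonality bound on $[G(u_t)]^*[G(\eta)](\lambda v)$, and with $\norm{G(u_t)}\le\sqrt{1+\delta}$ on $[G(u_t)]^*\overline{z}$ — I would obtain
\[
\norm{v_t - x}_2 \le \frac{1}{1-\delta}\bigl( \delta\,\lambda\sin\theta_t + \sqrt{1+\delta}\,\norm{z}_2 \bigr).
\]
The improvement over the analysis in \cite{JaiNS2012} is exactly here: the factor $(1-\delta)^{-1}$ is incurred only once, from the inverse Gram matrix, and the perturbation $[G(\eta)](\lambda v)$ enters with coefficient $\delta$ via near-biorthogonality rather than a cruder $\sqrt{1+\delta}$.

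Finally I would translate this into a bound on $\phi_t$. Since $x=(u_t^* u)\lambda v$ is a scalar multiple of the unit vector $v$ and $v_t\in\R(v_t)$,
\[
\sin\phi_t = \norm{P_{\R(v_t)^\perp} v}_2 = \frac{\norm{P_{\R(v_t)^\perp} x}_2}{|(u_t^* u)\,\lambda|} \le \frac{\norm{v_t - x}_2}{\lambda\cos\theta_t}
\]
(the degenerate case $\cos\theta_t=0$ makes the claimed bound vacuous). Combining with the previous display and using $\norm{z}_2/\fnorm{X}\le\sqrt{1+\delta}\,\nu$ — which follows from the rank-$2$ row-$3s_1$-sparse RIP exactly as in \eqref{eq:nuineq}, so that $\sqrt{1+\delta}\,\norm{z}_2/\lambda \le (1+\delta)\nu$ — gives
\[
\sin\phi_t \le \frac{1}{1-\delta}\Bigl( \delta\tan\theta_t + \frac{\sqrt{1+\delta}\,\norm{z}_2}{\lambda\cos\theta_t} \Bigr) \le \frac{\delta\tan\theta_t + (1+\delta)\nu\sec\theta_t}{1-\delta},
\]
which is the assertion. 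The computation is largely bookkeeping; the only real care needed is to keep the two noise contributions $[G(\eta)](\lambda v)$ and $\overline{z}$ on separate tracks, with amplification factors $\delta$ and $\sqrt{1+\delta}$ respectively, so the final constant is not inflated, and to thread the least-squares geometry correctly so as to pass from $\norm{v_t-x}_2$ to $\sin\phi_t$ by dividing by $\cos\theta_t$.
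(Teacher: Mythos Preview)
Your proof is essentially identical to the paper's: decompose $u$ along $u_t$, rewrite $\overline{b}$ as a measurement of a scalar multiple of $v$ through $G(u_t)$ plus a perturbation, invoke Lemma~\ref{lemma:ds_RIP2FG} for the isometry of $G(u_t)$ and the near-biorthogonality $\norm{[G(u_t)]^*[G(\eta)]}\le\delta\norm{\eta}_2$, apply the pseudoinverse, and finish via \eqref{eq:nuineq}. One small slip: $G$ is \emph{anti}-linear (since $G(x)$ stacks the rows $x^* M_\ell$), so from $u=(u_t^*u)u_t+\eta$ you get $G(u)=\overline{(u_t^*u)}\,G(u_t)+G(\eta)$ and hence $x=\overline{(u_t^*u)}\,\lambda v$, not $(u_t^*u)\lambda v$; this is harmless because only $|u_t^*u|=\cos\theta_t$ enters the bound.
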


\begin{proof}
Recall that $u_t$ is normalized in the $\ell_2$ norm before the update of $v_t$ and that $v_t$ is updated as
\[
v_t = [G(u_t)]^\dagger \bar{b}
\]
where $[G(u_t)]^\dagger$ denotes the pseudo-inverse of $[G(u_t)]$.

Decomposing $u$ as
\begin{align*}
u
= P_{\R(u_t)} u + \underbrace{P_{\R(u_t)^\perp} u}_{ \xi }
= (u_t^* u) u_t + \xi
\end{align*}
yields
\[
\norm{\xi}_2^2 = 1 - |u_t^* u|^2 = \sin^2 \theta_t.
\]

By the anti-linearity of $G$, we have
\[
\overline{(u_t^* u)} G(u_t) = G(u) - G(\xi).
\]

We can rewrite the complex conjugate of the measurement vector as follows:
\begin{align*}
\overline{b} {} & = \overline{\A(X)} + \overline{z} \\
{} & = \overline{\A(\lambda u v^*)} + \overline{z} \\
{} & = [G(u)](\lambda v) + \overline{z} \\
{} & = [\overline{(u_t^* u)} G(u_t) + G(\xi)](\lambda v) + \overline{z} \\
{} & = [G(u_t)](\overline{(u_t^* u)} \lambda v)
+ [G(\xi)](\lambda v) + \overline{z}.
\end{align*}

Since $\norm{u_t}_2 = 1$ and $\norm{u_t}_0 \leq s_1$, the rank-2 and $(3s_1,3s_2)$-sparse RIP of $\A$ implies by Lemma~\ref{lemma:ds_RIP2FG} that
\begin{equation}
\norm{([G(u_t)]^* [G(u_t)])^{-1}} \leq \frac{1}{1-\delta}
\label{eq:lemma:update_phi:GtG}
\end{equation}
and
\begin{equation}
\norm{[G(u_t)]^* z}_2 \leq \sqrt{1+\delta} \norm{z}_2, \quad \forall z \in \bbC^n.
\label{eq:lemma:update_phi:Gtz}
\end{equation}
Similarly, since $\langle u_t, \xi \rangle = 0$ and $\norm{u_t}_0 + \norm{\xi}_0 \leq 3s_1$,
the rank-2 and $(3s_1,3s_2)$-sparse RIP of $\A$ implies by Lemma~\ref{lemma:ds_RIP2FG} that
\begin{equation}
\norm{G(u_t)^* G(\xi)} \leq \delta \norm{\xi}_2.
\label{eq:lemma:update_phi:GtGxi}
\end{equation}

Note that $v_t$ is written as
\begin{align*}
v_t {} & = [G(u_t)]^\dagger \big([G(u_t)](\overline{(u_t^* u)} \lambda v)
+ [G(\xi)](\lambda v) + \overline{z}\big) \\
{} & = \overline{(u_t^* u)} \lambda v
+ [G(u_t)]^\dagger [G(\xi)] (\lambda v)
+ [G(u_t)]^\dagger \overline{z}.
\end{align*}

By (\ref{eq:lemma:update_phi:GtG}), (\ref{eq:lemma:update_phi:Gtz}), and (\ref{eq:lemma:update_phi:GtGxi}), it follows that
\begin{align*}
\norm{ v_t - \overline{(u_t^* u)} \lambda v }_2
{} & \leq \norm{ [G(u_t)]^\dagger [G(\xi)] (\lambda v) }_2
+ \norm{ [G(u_t)]^\dagger \overline{z} }_2 \\
{} & \leq \frac{\delta}{1-\delta} \lambda \norm{\xi}_2
+ \frac{\sqrt{1+\delta}}{1-\delta} \norm{z}_2 \\
{} & = \lambda \left( \frac{\delta}{1-\delta} \sin \theta_t + \frac{\sqrt{1+\delta}}{1-\delta} \cdot \frac{\norm{z}_2}{\lambda} \right) \\
{} & = \lambda \cdot \frac{1}{1-\delta} \left( \delta \sin \theta_t + \sqrt{1+\delta} ~ \frac{\norm{z}_2}{\fnorm{X}} \right) \\
{} & \leq \lambda \cdot \underbrace{ \frac{1}{1-\delta} \left( \delta \sin \theta_t + (1+\delta) \frac{\norm{z}_2}{\norm{\A(X)}_2} \right) }_{ \alpha }
\end{align*}
where the last step follow by (\ref{eq:nuineq}).

Then, $\sin \phi_t$ is bounded from above by
\begin{align*}
\sin \phi_t
{} & = \norm{P_{\R(v_t)^\perp} P_{\R(v)}} \\
{} & = \norm{P_{\R(v_t)^\perp} v}_2 \\
{} & = \frac{\norm{P_{\R(v_t)^\perp} \overline{(u_t^* u)} \lambda v}_2}{|(u_t^* u)| \lambda } \\
{} & = \frac{\norm{P_{\R(v_t)^\perp} \overline{(u_t^* u)} \lambda v}_2}{\lambda \cos \theta_t} \\
{} & \leq \frac{\norm{\overline{(u_t^* u)} \lambda v - v_t}_2}{\lambda \cos \theta_t} \\
{} & \leq \frac{\alpha}{\cos \theta_t}
\end{align*}
where the first inequality holds using a similar argument to (a) in the proof of Lemma~\ref{lemma:update_theta}.
\end{proof}

\begin{remark}
In fact, since $C_\delta^\mathrm{HTP} \geq (1-\delta)^{-1}$,
the upper bound on $\phi_t$ by Lemma~\ref{lemma:update_phi} is tighter than that by Corollary~\ref{cor:update_phi}.
In other words, Corollary~\ref{cor:update_phi} also applies to the row-sparse case.
By symmetry, we also conclude that Lemma~\ref{lemma:update_theta} also applies to the column-sparse case ($s_1 = n_1$).
\end{remark}

Next, Lemma~\ref{lemma:update_theta} and Corollary~\ref{cor:update_phi} provide recursive relations
that alternate between the two sequences $(\phi_t)_{t \in \bbN}$ and $(\theta_t)_{t \in \bbN}$.
From these results, we deduce a recursive relation on $(\theta_t)_{t \in \bbN}$ that leads to a convergence of SPF
in the following lemma, the proof of which is deferred to later.

\begin{lemma}
\label{lemma:convtheta}
Suppose the hypotheses of Theorem~\ref{thm:conv_spf} hold.
Define
\[
\Omega \triangleq \{\omega \in [0, \pi/2) :~ \omega \geq \sin^{-1} (C_\delta^{\mathrm{HTP}} [ \delta \tan \omega + (1+\delta) \nu \sec \omega ]) \}
\]
and let $\omega_{\sup}$ is the supremum of $\Omega$.
Suppose \begin{equation}
\norm{P_{\R(v)^\perp} P_{\R(v_0)}} < \sin \omega_{\sup}.
\label{eq:conv_spf_initcond_general}
\end{equation}
In fact, for $\delta = 0.08$ and $\nu = 0.08$ as in Theorem~\ref{thm:conv_spf},
then \eqref{eq:conv_spf_initcond_general} coincides with \eqref{eq:conv_spf_initcond}.
Then,
\[
\limsup_{t \to \infty} \sin \theta_t \leq C \frac{\norm{z}_2}{\norm{\A(X)}_2},
\]
where the constant $C$ depends only on $\delta$ and $\nu$.
Furthermore, $\max(0, \sin \theta_t - C \norm{z}_2 / \norm{\A(X)}_2)$ converges to 0 superlinearly.
\end{lemma}

The next lemma provides an upper bound on the normalized estimation error $\fnorm{X - u_t v_t^*}/\fnorm{X}$ in terms of $\sin \theta_t$.
\begin{lemma}
\label{lemma:convspfphi2X}
Suppose the hypotheses of Theorem~\ref{thm:conv_spf} hold.
Then,
\begin{equation}
\label{eq:lemma:convspfphi2X}
\frac{\fnorm{X - u_t v_t^*}}{\fnorm{X}}
\leq \sqrt{1 + 2 (\delta C_\delta^{\mathrm{HTP}})^2} ~ \sin \theta_t + \sqrt{2} (1+\delta) C_\delta^{\mathrm{HTP}} \frac{\norm{z}_2}{\norm{\A(X)}_2}.
\end{equation}
\end{lemma}

\begin{remark}
For fixed $\delta$ and $\nu$, the noise amplification in Lemma~\ref{lemma:convtheta} is explicitly bounded.
For example, if $\delta \leq 0.08$ and $\nu \leq 0.08$, then
\[
\limsup_{t \to \infty} \sin \theta_t \leq 3.75 ~ \frac{\norm{z}_2}{\norm{\A(X)}_2}
\]
By Lemma~\ref{lemma:convspfphi2X}, this implies
\[
\limsup_{t \to \infty} \frac{\norm{X - X_t}_{\mathrm{F}}}{\norm{X}_{\mathrm{F}}} \leq 8.3 ~ \frac{\norm{z}_2}{\norm{\A(X)}_2}.
\]
\end{remark}

Combining Lemmas~\ref{lemma:convtheta} and \ref{lemma:convspfphi2X} yields the proof of Theorem~\ref{thm:conv_spf}. We conclude this section with the proofs of Lemmas~\ref{lemma:convtheta} and \ref{lemma:convspfphi2X}.

\begin{proof}[Proof of Lemma~\ref{lemma:convtheta}]
Define $f: [0,\pi/2) \to [0,\pi/2)$ by
\[
f(\omega) \triangleq \sin^{-1} \left( C_\delta^{\mathrm{HTP}} \left[ \delta \tan \omega + (1+\delta) \sec \omega \frac{\norm{z}_2}{\norm{\A(X)}_2} \right] \right)
\]
and let $\Omega_f \triangleq \{\omega :~ \omega \geq f(\omega)\}$.
Let $\omega_{f,\inf}$ and $\omega_{f,\sup}$ denote the infimum and supremum of $\Omega_f$, respectively.
Since $\sin \omega$ is monotone increasing and concave on $[0,\pi/2]$,
it follows that $\sin^{-1} \upsilon$ is monotone increasing and convex on $[0,1]$.
Furthermore, both $\tan \omega$ and $\sec \omega$ are monotone increasing and convex on $[0,\pi/2)$.
Therefore, $f$ is monotone increasing and convex on $[0,\pi/2)$.
Then it follows that
\begin{equation}
\label{eq:lemma:convtheta:set}
\begin{aligned}
\omega {} & \geq f(\omega), \quad \forall \omega \in [\omega_{f,\inf},\omega_{f,\sup}], \\
\omega {} & < f(\omega), \quad \forall \omega \in [0,\omega_{f,\inf}) \cup (\omega_{f,\sup}, \pi/2).
\end{aligned}
\end{equation}

First, we show that
\begin{equation}
\label{eq:lemma:convtheta:phi0}
\phi_0 < \omega_{f,\sup}.
\end{equation}
Since $\norm{z}_2/\norm{\A(X)}_2 \leq \nu$, $\Omega$ is a subset of $\Omega_f$.
Then it follow that $\omega_{\sup} \leq \omega_{f,\sup}$.
The inequality in \eqref{eq:lemma:convtheta:phi0} holds since we assumed that $\phi_0 < \omega_{\sup}$.

In the first iteration, by Lemma~\ref{lemma:update_theta}, $\theta_1$ is upper-bounded by
\begin{equation}
\label{eq:lemma:convtheta:update_theta1}
\theta_1 \leq f(\phi_0).
\end{equation}
Then, by \eqref{eq:lemma:convtheta:set} and \eqref{eq:lemma:convtheta:update_theta1}, either of the following cases holds:
if $\phi_0 > \omega_{f,\inf}$, then $\theta_1 < \phi_0$;
if $\phi_0 \leq \omega_{f,\inf}$, then $\theta_1 \leq \omega_{f,\inf}$.

Next, in the second iteration, by Corollary~\ref{cor:update_phi}, $\phi_1$ is upper-bounded by
\begin{equation}
\label{eq:lemma:convtheta:update_phi1}
\phi_1 \leq f(\theta_1).
\end{equation}
Similarly to the previous case, since $\theta_1 < \omega_{\sup}$, \eqref{eq:lemma:convtheta:update_phi1} implies that $\phi_1$ satisfies
either $\phi_1 < \theta_1$ or $\phi_1 \leq \omega_{f,\inf}$.

By induction, both $(\phi_t)_{t \in \bbN}$ and $(\theta_t)_{t \in \bbN}$ converge to the set $[0, \omega_{\inf}]$.
The convergence is superlinear because of the convexity of $f$.

Finally, it remains to compute an upper bound on $\sin \omega_{f,\inf}$.
Define $\hat{f}: [0,\pi/2) \to [0,\pi/2)$ by
\[
\hat{f}(\omega) \triangleq \sin^{-1} \left(\frac{C_\delta^{\mathrm{HTP}}}{\cos \omega_{\sup}} \left[ \delta \sin \omega + (1+\delta) \frac{\norm{z}_2}{\norm{\A(X)}_2} \right] \right)
\]
and let $\widehat{\Omega}_f \triangleq \{\omega  \in [0, \omega_{\sup}] :~ \omega \geq \hat{f}(\omega)\}$.
Then $\widehat{\Omega}_f$ is a subset of $\Omega_f$
and it follows that $\omega_{f,\inf}$ is no greater than the infimum of $\widehat{\Omega}_f$.
Thus we have
\[
\sin \omega_{f,\inf} \leq \underbrace{ \left(\frac{1+\delta}{\cos \omega_{\sup} - \delta C_\delta^{\mathrm{HTP}}}\right) }_{ C } \frac{\norm{z}_2}{\norm{\A(X)}_2},
\]
where $C$ depends only on $\delta$ and $\nu$.
\end{proof}

\begin{proof}[Proof of Lemma~\ref{lemma:convspfphi2X}]
Without loss of generality, we may assume that $\norm{u_t}_2 = 1$.
Note that the estimation error is decomposed as
\[
\lambda u v^* - u_t v_t^*
= \lambda P_{R(u_t)} u v^* - u_t v_t^* + \lambda P_{R(u_t)^\perp} u v^*
= u_t [(u^* u_t) \lambda v - v_t]^* + \lambda P_{R(u_t)^\perp} u v^*.
\]

As shown in the proof of Lemma~\ref{lemma:update_theta}, we have
\[
\norm{v_t - (u^* u_t) \lambda v}_2
\leq \lambda C_\delta^{\mathrm{HTP}} \left[ \delta \norm{P_{R(u_t)^\perp} u}_2 + (1+\delta) \frac{\norm{z}_2}{\norm{\A(X)}_2} \right].
\]

Therefore,
\begin{align*}
\fnorm{\lambda u v^* - u_t v_t^*}^2
{} & = \fnorm{u_t [(u^* u_t) \lambda v - v_t]^*}^2 + \lambda^2 \fnorm{P_{R(u_t)^\perp} u v^*}^2 \\
{} & = \norm{(u^* u_t) \lambda v - v_t}_2^2 + \lambda^2 \norm{P_{R(u_t)^\perp} u}_2^2 \\
{} & \leq \lambda^2 \left[1 + 2 (\delta C_\delta^{\mathrm{HTP}})^2\right] \norm{P_{R(u_t)^\perp} P_{R(u)}}^2
+ 2 \left[\lambda (1+\delta) C_\delta^{\mathrm{HTP}} \frac{\norm{z}_2}{\norm{\A(X)}_2} \right]^2,
\end{align*}
which implies (\ref{eq:lemma:convspfphi2X}).
\end{proof}

\subsection{Proof of Theorem~\ref{thm:pgrip_init_ds}}
\label{subsec:initds}
The lemmas in this section assumes that $\A$ satisfies the rank-2 and $(3s_1,3s_2)$-sparse RIP with isometry constant $\delta$,
unless otherwise stated.
Through the following results, we show that the problem of finding a good initialization,
which satisfies (\ref{eq:conv_spf_initcond}), reduces to finding good support estimates of $u$ and $v$.
Once the satisfaction of (\ref{eq:conv_spf_initcond}) is shown, the performance guarantee is automatically implied by Theorem~\ref{thm:conv_spf}.

First, we present the following lemma that computes an upper bound on the angle between $v$ and $v_0$.

\begin{lemma}
\label{lemma:init_w_good_est_supp2}
Let $\widehat{J}_1 \subset [n_1]$ and $\widehat{J}_2 \subset [n_2]$ satisfy $|\widehat{J}_1| \leq s_1$ and $|\widehat{J}_1| \leq s_2$, respectively.
Let $v_0$ be obtained as the leading right singular vector of $\Pi_{\widehat{J}_1} [\A^*(b)] \Pi_{\widehat{J}_2}$.
Then,
\begin{equation}
\norm{P_{\R(v)^\perp} P_{\R(v_0)}} \leq
\frac{\norm{\Pi_{\widehat{J}_1} u}_2 \norm{\Pi_{\widehat{J}_2}^\perp v}_2 + \delta + (1+\delta) \nu}{\norm{\Pi_{\widehat{J}_1} u}_2 - \delta - (1+\delta) \nu}.
\label{eq:lemma:init_w_good_est_supp2:ub}
\end{equation}
\end{lemma}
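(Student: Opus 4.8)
The claim is a deterministic statement, so I would first reduce it, via the RIP, to a perturbation estimate for the leading right singular vector of a matrix that is spectrally close to a rank-one matrix. Write $\eta := \delta + (1+\delta)\nu$, let $M := \Pi_{\widehat{J}_1}[\A^*(b)]\Pi_{\widehat{J}_2}$, and let $M_\star := \Pi_{\widehat{J}_1} X \Pi_{\widehat{J}_2} = \lambda(\Pi_{\widehat{J}_1}u)(\Pi_{\widehat{J}_2}v)^*$, which is rank one with right singular vector $\propto \Pi_{\widehat{J}_2}v$. The first step is to show $\opnorm{M - M_\star} \le \eta\lambda$.

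\textbf{The RIP step.} I would split $\A^*(b) - X = (\A^*\A - \idm)(X) + \A^*(z)$ and, for $a \in \bbS^{n_1-1}$ supported in $\widehat{J}_1$ and $b \in \bbS^{n_2-1}$ supported in $\widehat{J}_2$, test each piece using $a^*(\cdot)b = \iprod{ab^*}{\cdot}$. For the deterministic part, $ab^* - X$ has rank at most $2$ and is $(2s_1,2s_2)$-sparse, hence lies in the class governed by the rank-$2$, $(3s_1,3s_2)$-sparse RIP; a standard polarization argument then gives $|\iprod{\A(ab^*)}{\A(X)} - \iprod{ab^*}{X}| \le \delta\,\fnorm{ab^*}\fnorm{X} = \delta\lambda$. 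For the noise part, $|\iprod{\A(ab^*)}{z}| \le \opnorm{\A(ab^*)}\norm{z}_2 \le \sqrt{1+\delta}\,\norm{z}_2$, and $\norm{z}_2 = \nu\norm{\A(X)}_2 \le \sqrt{1+\delta}\,\nu\lambda$ by the RIP, giving $(1+\delta)\nu\lambda$. Taking the supremum over $a,b$ bounds $\opnorm{M - M_\star}$ (and, applied with other unit vectors in place of $a,b$, the companion quantities used below). Here I would invoke the RIP-to-$F/G$ machinery already set up in the paper (Lemma~\ref{lemma:ds_RIP2FG}) or an analogous restricted-$\A^*\A$ estimate, so the polarization is routine.

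\textbf{The perturbation step.} With $M = M_\star + \Delta$, $\opnorm{\Delta}\le\eta\lambda$, let $u_0, v_0, \sigma_1$ be the leading left/right singular vectors and singular value of $M$, so $u_0 \in \R(\Pi_{\widehat{J}_1})$, $v_0 \in \R(\Pi_{\widehat{J}_2})$, $Mv_0 = \sigma_1 u_0$, $M^*u_0 = \sigma_1 v_0$. Expanding $M^* u_0$ via $\A^*(b) = \A^*\A(\lambda uv^*) + \A^*(z)$ yields $\sigma_1 v_0 = \lambda\iprod{\Pi_{\widehat{J}_1}u}{u_0}\,\Pi_{\widehat{J}_2}v + e$ with $\norm{e}_2 \le \eta\lambda$; using $\norm{P_{\R(v)^\perp}\Pi_{\widehat{J}_2}v}_2 = \norm{P_{\R(v)^\perp}\Pi_{\widehat{J}_2}^\perp v}_2 \le \norm{\Pi_{\widehat{J}_2}^\perp v}_2$ and $|\iprod{\Pi_{\widehat{J}_1}u}{u_0}| \le \norm{\Pi_{\widehat{J}_1}u}_2$ then gives
\[
\sigma_1\,\norm{P_{\R(v)^\perp}P_{\R(v_0)}} \le \lambda\bigl(\norm{\Pi_{\widehat{J}_1}u}_2\,\norm{\Pi_{\widehat{J}_2}^\perp v}_2 + \eta\bigr).
\]
For the denominator, expanding $Mv_0 = \sigma_1 u_0$ similarly gives $\sigma_1 u_0 = \lambda\iprod{v}{v_0}\,\Pi_{\widehat{J}_1}u + e'$ with $\norm{e'}_2 \le \eta\lambda$, whence $\sigma_1 \ge \lambda\bigl(|\iprod{v}{v_0}|\,\norm{\Pi_{\widehat{J}_1}u}_2 - \eta\bigr)$; since $|\iprod{v}{v_0}|^2 = 1 - \norm{P_{\R(v)^\perp}P_{\R(v_0)}}^2$, combining the two displays gives a self-referential inequality for $s := \norm{P_{\R(v)^\perp}P_{\R(v_0)}}$ that rearranges to \prettyref{eq:lemma:init_w_good_est_supp2:ub}.

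\textbf{Main obstacle.} The hard part will be the last step, extracting exactly the clean ratio in \prettyref{eq:lemma:init_w_good_est_supp2:ub}. A black-box $\sin\Theta$ / Weyl bound for the top singular vector of $M$ produces an extra factor $\norm{\Pi_{\widehat{J}_2}v}_2$ in the denominator (because $\sigma_1$ itself is only of order $\lambda\norm{\Pi_{\widehat{J}_1}u}_2\norm{\Pi_{\widehat{J}_2}v}_2$ when $\widehat{J}_2$ misses much of $v$); obtaining the stated $\norm{\Pi_{\widehat{J}_1}u}_2 - \eta$ requires the direct lower bound on $|\iprod{v}{v_0}|$ described above — exploiting that the error term $e$ is supported in $\widehat{J}_2$ so that Cauchy--Schwarz against $\Pi_{\widehat{J}_2}v$ costs only $\norm{\Pi_{\widehat{J}_2}v}_2$ — followed by resolving the resulting implicit inequality in $s$, keeping in mind that the bound is non-vacuous only when its right-hand side is below $1$. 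The remaining work — verifying that every auxiliary matrix appearing in the RIP step is genuinely rank $\le 2$ and $(3s_1,3s_2)$-sparse — is routine bookkeeping.
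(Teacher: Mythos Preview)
Your RIP step is fine and essentially matches what the paper does via Lemmas~\ref{lemma:ds_bndopnormrip} and~\ref{lemma:ds_bndnoise}. The gap is in the perturbation step. Writing $a:=\norm{\Pi_{\widehat{J}_1}u}_2$, $N:=a\norm{\Pi_{\widehat{J}_2}^\perp v}_2+\eta$, $s:=\norm{P_{\R(v)^\perp}P_{\R(v_0)}}$, $c:=\sqrt{1-s^2}$, your two displays give
\[
\sigma_1 s \le \lambda N,\qquad \sigma_1 \ge \lambda(ca-\eta),
\]
which combine to $s(ca-\eta)\le N$. This does \emph{not} rearrange to $s(a-\eta)\le N$: since $c<1$, the function $f(s)=s(\sqrt{1-s^2}\,a-\eta)$ satisfies $f\big(N/(a-\eta)\big)<N$, so the largest $s$ allowed by your inequality strictly exceeds $N/(a-\eta)$. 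The Cauchy--Schwarz refinement you mention (that $e$ is supported in $\widehat{J}_2$) does not remove the factor $c$ from the denominator either. So the self-referential route yields only a strictly weaker bound than \prettyref{eq:lemma:init_w_good_est_supp2:ub}.

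The paper sidesteps this entirely by a different choice of reference matrix: it applies Wedin's $\sin\theta$ theorem (Lemma~\ref{lemma:sintheta}) with $M:=\Pi_{\widehat{J}_1}X$ \emph{without} the right projection $\Pi_{\widehat{J}_2}$. Then the right singular vector of $M$ is exactly $v$ and $\norm{M}=\lambda\norm{\Pi_{\widehat{J}_1}u}_2$, so no factor $c$ or $\norm{\Pi_{\widehat{J}_2}v}_2$ ever appears in the denominator. The price is that $\Delta:=\Pi_{\widehat{J}_1}[\A^*(b)]\Pi_{\widehat{J}_2}-\Pi_{\widehat{J}_1}X$ now contains the extra term $-\Pi_{\widehat{J}_1}X\Pi_{\widehat{J}_2}^\perp$, which is exactly what produces the $\norm{\Pi_{\widehat{J}_1}u}_2\norm{\Pi_{\widehat{J}_2}^\perp v}_2$ in the numerator. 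The residual $\norm{M+\Delta-\widehat{M}}$ is bounded by comparing $\widehat{M}$ to the rank-one matrix $\Pi_{\widehat{J}_1}X\Pi_{\widehat{J}_2}$, giving $\le \eta\lambda$. This directly yields \prettyref{eq:lemma:init_w_good_est_supp2:ub} with no implicit equation to solve.
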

\begin{proof}
See Appendix~\ref{sec:proof:lemma:init_w_good_est_supp2}.
\end{proof}

By Lemma~\ref{lemma:init_w_good_est_supp2}, if the right-hand-side of (\ref{eq:lemma:init_w_good_est_supp2:ub})
is smaller than $\sin \omega_{\sup}$, then the inequality in (\ref{eq:conv_spf_initcond}) holds.
Rearranging this, we get a sufficient condition for (\ref{eq:conv_spf_initcond}) given by
\begin{equation}
\delta + (1+\delta) \nu < \frac{\norm{\Pi_{\widehat{J}_1} u}_2 \left(\sin \omega_{\sup} - \norm{\Pi_{\widehat{J}_2}^\perp v}_2\right)}{1 + \sin \omega_{\sup}}.
\label{eq:pgrip_spf_init_good_supp2:cond}
\end{equation}

Finding estimates $\widehat{J}_1$ and $\widehat{J}_2$ of the supports of $u$ and $v$ that satisfy (\ref{eq:pgrip_spf_init_good_supp2:cond})
is much easier than finding the supports of $u$ and $v$ exactly.
We show that the two support estimation schemes in the initialization of SPF, proposed in Section~\ref{sec:alg},
are guaranteed to satisfy (\ref{eq:pgrip_spf_init_good_supp2:cond}) under the RIP assumption.

\subsubsection{Proof of Part 1}

Recall that $\vzstar$ was the leading right singular vector of $\Pi_{\widehat{J}_1} [\A^*(b)] \Pi_{\widehat{J}_2}$, where $\widehat{J}_1$ and $\widehat{J}_2$ are given in (\ref{eq:initbydspca}).
In this case, the product of $\norm{\Pi_{\widehat{J}_1} u}_2$ and $\norm{\Pi_{\widehat{J}_2} v}_2$ is lower-bounded by the following lemma.

\begin{lemma}
\label{lemma:est_supp_dspca}
Let $\widehat{J}_1 \subset [n_1]$ and $\widehat{J}_2 \subset [n_2]$ be given by (\ref{eq:initbydspca}).
Then,
\begin{equation}
\norm{\Pi_{\widehat{J}_1} u}_2 \cdot \norm{\Pi_{\widehat{J}_2} v}_2
\geq \sin \left[
\sin^{-1} \left( \frac{1 - \delta - 2 (1+\delta) \nu}{\sqrt{\delta^2 + (1+\delta)^2}} \right)
- \sin^{-1} \left( \frac{\delta}{\sqrt{\delta^2 + (1+\delta)^2}} \right) \right].
\label{eq:lemma:est_supp_dspca:res}
\end{equation}
\end{lemma}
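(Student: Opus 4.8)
The plan is to play the optimality in \prettyref{eq:initbydspca} against the true supports. Let $J_1\triangleq\supp{u}$ and $J_2\triangleq\supp{v}$, padded if necessary to sets of cardinality exactly $s_1$ and $s_2$, so that $(J_1,J_2)$ is feasible in \prettyref{eq:initbydspca} and $\Pi_{J_1}u=u$, $\Pi_{J_2}v=v$; write $N\triangleq\Pi_{\widehat{J}_1}[\A^*(b)]\Pi_{\widehat{J}_2}$ and recall $\fnorm{X}=\lambda$. I will bound $\norm{N}$ from below using the test pair $(u,v)$ on $J_1\times J_2$, bound $\norm{N}$ from above in terms of $\norm{\Pi_{\widehat J_1}u}_2\,\norm{\Pi_{\widehat J_2}v}_2$ via the RIP, and match the two.

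\emph{Lower bound.} By optimality of $(\widehat J_1,\widehat J_2)$ and $\norm{u}_2=\norm{v}_2=1$,
\[
\norm{N}\;\geq\;\norm{\Pi_{J_1}[\A^*(b)]\Pi_{J_2}}\;\geq\;|u^*[\A^*(b)]v|\;=\;|\langle\A(uv^*),\,b\rangle|.
\]
Substituting $b=\A(\lambda uv^*)+z$, using $\norm{\A(uv^*)}_2^2\geq1-\delta$ (rank-$1$, $(s_1,s_2)$-sparse RIP), Cauchy--Schwarz together with $\norm{\A(uv^*)}_2\leq\sqrt{1+\delta}$ for the noise cross-term, and \prettyref{eq:nuineq} in the form $\norm{z}_2\leq\sqrt{1+\delta}\,\nu\lambda$, one gets $\norm{N}\geq\lambda\big(1-\delta-(1+\delta)\nu\big)$.

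\emph{Upper bound.} Let $a\in\bbC^{n_1}$ (supported on $\widehat J_1$) and $c\in\bbC^{n_2}$ (supported on $\widehat J_2$) be unit-norm leading left and right singular vectors of $N$, so $\norm{N}=|\langle\A(ac^*),\,b\rangle|$. The crucial step is to peel off the component of $ac^*$ along $uv^*$: writing $ac^*=\mu\,uv^*+E$ with $\mu\triangleq\langle uv^*,ac^*\rangle$, one has $\langle uv^*,E\rangle=0$, $E$ of rank at most $2$ and $(2s_1,2s_2)$-sparse, and $\fnorm{E}=\sqrt{1-|\mu|^2}$ since $\fnorm{ac^*}=1$. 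Then
\[
|\langle\A(ac^*),\A(\lambda uv^*)\rangle|\;\leq\;\lambda|\mu|\,\norm{\A(uv^*)}_2^2\;+\;\lambda\,|\langle\A(E),\A(uv^*)\rangle|,
\]
and the rank-$2$, $(3s_1,3s_2)$-sparse RIP --- applied through polarization to the rank-at-most-$2$, $(2s_1,2s_2)$-sparse matrices $uv^*\pm E$ and $uv^*\pm iE$, together with $\langle uv^*,E\rangle=0$ --- gives $|\langle\A(E),\A(uv^*)\rangle|\leq\delta\fnorm{E}=\delta\sqrt{1-|\mu|^2}$ as well as $\norm{\A(uv^*)}_2^2\leq1+\delta$. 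Bounding the noise cross-term directly (which is tighter than decomposing it) by $|\langle\A(ac^*),z\rangle|\leq\sqrt{1+\delta}\,\norm{z}_2\leq(1+\delta)\nu\lambda$ then yields
\[
\norm{N}\;\leq\;\lambda\Big[(1+\delta)|\mu|+\delta\sqrt{1-|\mu|^2}+(1+\delta)\nu\Big].
\]

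\emph{Matching and the trigonometric step.} Combining the two bounds and dividing by $\lambda$,
\[
(1+\delta)|\mu|+\delta\sqrt{1-|\mu|^2}\;\geq\;1-\delta-2(1+\delta)\nu.
\]
Since $a,c$ are unit vectors supported on $\widehat J_1,\widehat J_2$, $|\mu|=|u^*a|\,|c^*v|\leq\norm{\Pi_{\widehat J_1}u}_2\,\norm{\Pi_{\widehat J_2}v}_2$, so it suffices to lower-bound $|\mu|$. Put $|\mu|=\sin\psi$, $\psi\in[0,\pi/2]$, and rewrite $(1+\delta)\sin\psi+\delta\cos\psi=\sqrt{\delta^2+(1+\delta)^2}\,\sin(\psi+\phi_0)$ with $\phi_0\triangleq\sin^{-1}\!\big(\delta/\sqrt{\delta^2+(1+\delta)^2}\big)$; the last display then forces $\sin(\psi+\phi_0)\geq(1-\delta-2(1+\delta)\nu)/\sqrt{\delta^2+(1+\delta)^2}$, and taking the correct branch (the argument of $\sin^{-1}$ lies in $[0,1]$ and $\psi+\phi_0\leq\pi/2+\phi_0$ in the range of $\delta,\nu$ where the claim is non-trivial) gives $\psi\geq\sin^{-1}\!\big((1-\delta-2(1+\delta)\nu)/\sqrt{\delta^2+(1+\delta)^2}\big)-\phi_0$, so that $|\mu|=\sin\psi$ is at least the right-hand side of \prettyref{eq:lemma:est_supp_dspca:res}. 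The only non-routine ingredient is the decomposition $ac^*=\mu uv^*+E$, which puts the factor $\sqrt{1-|\mu|^2}$ on the RIP error (rather than a flat $\delta$) and is exactly what upgrades the elementary estimate $\norm{\Pi_{\widehat J_1}u}_2\,\norm{\Pi_{\widehat J_2}v}_2\geq1-2\delta-2(1+\delta)\nu$ to the sharp trigonometric form; the remaining work --- tracking monotonicity of $\sin$ and $\sin^{-1}$ on $[0,\pi/2]$ --- is where a little care is needed but is otherwise standard.
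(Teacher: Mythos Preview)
Your proof is correct and follows essentially the same route as the paper's: compare $\norm{\Pi_{\widehat J_1}[\A^*(b)]\Pi_{\widehat J_2}}$ against its value on the true supports via optimality, sandwich it by RIP bounds, and finish with the same trigonometric manipulation. The one substantive variation is in the upper bound: the paper decomposes the \emph{target} $X = \Pi_{\widehat J_1}X\Pi_{\widehat J_2} + (X - \Pi_{\widehat J_1}X\Pi_{\widehat J_2})$ and invokes Lemma~\ref{lemma:ds_bndopnormrip} on each piece, obtaining the inequality $(1+\delta)Q + \delta\sqrt{1-Q^2} \geq 1-\delta-2(1+\delta)\nu$ directly in terms of $Q \triangleq \norm{\Pi_{\widehat J_1}u}_2\norm{\Pi_{\widehat J_2}v}_2$; you instead decompose the \emph{test matrix} $ac^* = \mu\,uv^* + E$ and obtain the same inequality in terms of $|\mu|$, closing the gap with $Q \geq |\mu|$. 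These are dual versions of the same idea and yield identical bounds; the paper's route is marginally more direct in that it avoids the extra step $Q \geq |\mu|$, while yours makes the polarization argument a little more explicit.
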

\begin{proof}
See Appendix~\ref{sec:proof:lemma:est_supp_dspca}.
\end{proof}

Let
\[
\varphi(\delta,\nu) \triangleq \sin^{-1} \left( \frac{1 - \delta - 2 (1+\delta) \nu}{\sqrt{\delta^2 + (1+\delta)^2}} \right)
- \sin^{-1} \left( \frac{\delta}{\sqrt{\delta^2 + (1+\delta)^2}} \right).
\]
Since $\norm{\Pi_{\widehat{J}_1} u}_2 \leq 1$ and $\norm{\Pi_{\widehat{J}_2} v}_2 \leq 1$,
by Lemma~\ref{lemma:est_supp_dspca}, we have
\[
\min(\norm{\Pi_{\widehat{J}_1} u}_2, \norm{\Pi_{\widehat{J}_2} v}_2) \geq \sin \varphi(\delta,\nu),
\]
which also implies
\[
\norm{\Pi_{\widehat{J}_2}^\perp v}_2 \leq \cos \varphi(\delta,\nu).
\]
Therefore, a sufficient condition for $\vzstar$ to satisfy (\ref{eq:conv_spf_initcond}) is given by
\begin{equation}
\label{eq:thm:pgrip_dspf_dspca:cond}
\delta + (1+\delta) \nu < \frac{\sin \varphi(\delta,\nu) \left(\sin \omega_{\sup} - \cos \varphi(\delta,\nu) \right)}{1 + \sin \omega_{\sup}}.
\end{equation}

Owing to the monotonicity of components in (\ref{eq:thm:pgrip_dspf_dspca:cond}), a set of $(\delta,\nu)$ where (\ref{eq:thm:pgrip_dspf_dspca:cond}) holds is determined. For example, it holds if $\delta \leq 0.04$ and $\nu \leq 0.04$.

\subsubsection{Proof of Part 2-(a)}

Toward a performance guarantee using a computationally efficient algorithm, we analyze the performance of the initialization with $v_0^\text{\rm Th}$.

The assumptions on $\delta$, $\nu$, $\norm{u}_\infty$, and $\norm{v}_\infty$ imply that
\begin{equation}
\delta + \nu + \delta \nu \leq \frac{\norm{u}_\infty \norm{v}_\infty + \sin \omega_{\sup} - \sqrt{2}}{3 + \sin \omega_{\sup}},
\label{eq:thm:pgrip_init_ds:2a:cond_gen}
\end{equation}
where $\sin \omega_{\sup} \geq 0.97$.

Let $j_0 \in [n_1]$ denote the index of the largest entry of $u$ in magnitude, i.e.,
\[
j_0 \triangleq \argmax_{j \in [n_1]} |[u]_j|.
\]
Then, $|[u]_{j_0}| = \norm{\Pi_{\{j_0\}} u}_2 = \norm{u}_\infty$.

First, we show that $j_0 \in \widehat{J}_1$ using the following lemma.
\begin{lemma}
\label{lemma:est_supp_dthres}
Let $\widehat{J}_1 \subset [n_1]$ be the support estimate for $v_0^\text{\rm Th}$.
If there exists $\widetilde{J}_1 \subset \supp{u}$ such that
\begin{equation}
2\delta + 2(1+\delta)\nu < \min_{j \in \widetilde{J}_1} |[u]_j|,
\label{eq:lemma:est_supp_dthres:cond1}
\end{equation}
then $\widetilde{J}_1 \subset \widehat{J}_1$.
\end{lemma}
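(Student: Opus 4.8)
The plan is to compare the $s_2$-sparse norms of the rows of the proxy $\A^*(b)$. Recall that $\widehat{J}_1$ consists of the indices of the $s_1$ rows of $\A^*(b)$ with the largest $s_2$-sparse norms, so it suffices to show that for every $j \in \widetilde{J}_1$ and every $k \in [n_1]\setminus\supp{u}$, the $j$-th row of $\A^*(b)$ has strictly larger $s_2$-sparse norm than the $k$-th row. Write $X = \lambda u v^*$ with $\norm{u}_2=\norm{v}_2=1$, so $\fnorm{X}=\lambda$, and note that for each $j$ the $s_2$-sparse norm of the $j$-th row equals $\max_{|J_2|=s_2}\fnorm{\Pi_{\{j\}}\A^*(b)\Pi_{J_2}}$. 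First I would decompose $\A^*(b) = X + (\A^*\A-\id)(X) + \A^*(z)$ and bound the last two contributions uniformly over all coordinate sets $J_2$ of size $s_2$.

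For the error terms I would invoke the rank-$2$ and $(3s_1,3s_2)$-sparse RIP via the usual normalization trick (as in Lemma~\ref{lemma:ds_RIP2FG}). Fix $j$ and $J_2$ with $|J_2|=s_2$, set $W := \Pi_{\{j\}}(\A^*\A-\id)(X)\Pi_{J_2}$, and (if $W\neq 0$) $\widehat{W} := W/\fnorm{W}$; then $\fnorm{W} = \Iprod{\widehat{W}}{(\A^*\A-\id)(X)} = \Iprod{\A(\widehat{W})}{\A(X)} - \Iprod{\widehat{W}}{X}$. Since $\widehat{W}$ and $X$ are both rank $\leq 1$ with combined row support of size $\leq s_1+1\leq 3s_1$ and combined column support of size $\leq 2s_2\leq 3s_2$, the RIP gives $|\Iprod{\A(\widehat{W})}{\A(X)} - \Iprod{\widehat{W}}{X}| \leq \delta\fnorm{\widehat{W}}\fnorm{X} = \delta\lambda$, hence $\fnorm{\Pi_{\{j\}}(\A^*\A-\id)(X)\Pi_{J_2}}\leq\delta\lambda$. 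Normalizing $\Pi_{\{j\}}\A^*(z)\Pi_{J_2}$ in the same way and using the RIP upper bound on $\norm{\A(\cdot)}_2$ together with $\norm{z}_2 = \nu\norm{\A(X)}_2 \leq \nu\sqrt{1+\delta}\,\fnorm{X}$ yields $\fnorm{\Pi_{\{j\}}\A^*(z)\Pi_{J_2}} \leq \sqrt{1+\delta}\,\norm{z}_2 \leq (1+\delta)\nu\lambda$.

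Assembling these bounds: for $k\notin\supp{u}$ we have $\Pi_{\{k\}}X=0$, so for every $|J_2|=s_2$ the quantity $\fnorm{\Pi_{\{k\}}\A^*(b)\Pi_{J_2}}$ is at most $[\delta+(1+\delta)\nu]\lambda$, whence the $k$-th row has $s_2$-sparse norm $\leq[\delta+(1+\delta)\nu]\lambda$; while for $j\in\widetilde{J}_1\subset\supp{u}$, taking any $J_2\supseteq\supp{v}$ with $|J_2|=s_2$ (so that $\Pi_{\{j\}}X\Pi_{J_2}$ has Frobenius norm $\lambda|[u]_j|$) and using the triangle inequality gives
\[
\max_{|J_2|=s_2}\fnorm{\Pi_{\{j\}}\A^*(b)\Pi_{J_2}} \;\geq\; \bigl(\,|[u]_j| - \delta - (1+\delta)\nu\,\bigr)\lambda,
\]
which by hypothesis \prettyref{eq:lemma:est_supp_dthres:cond1} strictly exceeds $[\delta+(1+\delta)\nu]\lambda$. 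Finally, to conclude $\widetilde{J}_1\subset\widehat{J}_1$: for each $j\in\widetilde{J}_1$, any row whose $s_2$-sparse norm is at least that of the $j$-th row must lie in $\supp{u}$ (by the strict gap just established), and $|\supp{u}|\leq\norm{u}_0\leq s_1$, so at most $s_1-1$ rows other than $j$ can outrank it; hence $j$ is among the $s_1$ rows with the largest $s_2$-sparse norms, i.e.\ $j\in\widehat{J}_1$. The only mildly delicate points are keeping the RIP-based estimates uniform over all $J_2$ of size $s_2$ (so the maximum defining the sparse norm is controlled) and phrasing the final counting step so that it is insensitive to how ties in the selection rule are broken — the strict separation between $\supp{u}$-rows and the rest makes tie-breaking irrelevant. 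I do not anticipate a substantive obstacle beyond this bookkeeping.
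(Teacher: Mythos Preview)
Your proposal is correct and follows essentially the same approach as the paper's proof: decompose $\A^*(b)=X+(\A^*\A-\id)(X)+\A^*(z)$, bound the two error terms on each row by $\delta\lambda$ and $(1+\delta)\nu\lambda$ via the RIP, and compare rows in $\widetilde{J}_1$ against rows outside $\supp{u}$. The only cosmetic differences are that the paper cites Lemmas~\ref{lemma:ds_bndopnormrip} and~\ref{lemma:ds_bndnoise} for the RIP estimates (rather than Lemma~\ref{lemma:ds_RIP2FG}, which concerns $F$ and $G$) instead of re-deriving them inline as you do, and your final counting argument about tie-breaking is more explicit than the paper's one-line ``it suffices to show'' reduction.
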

\begin{proof}
See Appendix~\ref{sec:proof:lemma:est_supp_dthres}.
\end{proof}

We apply Lemma~\ref{lemma:est_supp_dthres} with $\widetilde{J}_1 = \{j_0\}$. Since (\ref{eq:thm:pgrip_init_ds:2a:cond_gen}) implies (\ref{eq:lemma:est_supp_dthres:cond1}),
we have shown $j_0 \in \widehat{J}_1$.

Next, we derive a lower bound on $\norm{\Pi_{\widehat{J}_1} u}_2 \norm{\Pi_{\widehat{J}_2} v}_2$. Define
\begin{align*}
k_0 {} & \triangleq \argmax_{k \in [n_2]} |[v]_k|, \\
k_1 {} & \triangleq \argmax_{k \in [n_2]} \fnorm{\Pi_{\widehat{J}_1} [\A^*(b)] \Pi_{\{k\}}}, \\
k_2 {} & \triangleq \argmax_{k \in [n_2]} \fnorm{\Pi_{\{j_0\}} [\A^*(b)] \Pi_{\{k\}}}.
\end{align*}
Then, by the selection of $\widehat{J}_2$ in computing $v_0^\text{\rm Th}$, we have $k_1 \in \widehat{J}_2$.
On the other hand, by definition of $k_1$ and $k_2$, it follow that
\begin{equation}
\label{eq:proof:thm2a:pgrip_dspf_dspca:ineqJ2}
\begin{aligned}
\underbrace{ \fnorm{\Pi_{\widehat{J}_1} [\A^*(b)] \Pi_{\{k_1\}}} }_{ \text{(a)} }
{} & \geq \fnorm{\Pi_{\widehat{J}_1} [\A^*(b)] \Pi_{\{k_2\}}} \\
{} & \geq \fnorm{\Pi_{\{j_0\}} [\A^*(b)] \Pi_{\{k_2\}}} \\
{} & \geq \underbrace{ \fnorm{\Pi_{\{j_0\}} [\A^*(b)] \Pi_{\{k_0\}}} }_{ \text{(b)} }
\end{aligned}
\end{equation}
where the second inequality holds since $j_0 \in \widehat{J}_1$.

The left-hand-side of (\ref{eq:proof:thm2a:pgrip_dspf_dspca:ineqJ2}) is further upper-bounded by
\begin{equation}
\label{eq:proof:thm2a:pgrip_dspf_dspca:ineqJ2:lhs}
\begin{aligned}
\text{(a)}
{} & \leq \fnorm{\Pi_{\widehat{J}_1} X \Pi_{\{k_1\}}} + \fnorm{\Pi_{\widehat{J}_1} (\A^*\A - \idm)(X) \Pi_{\{k_1\}}} + \fnorm{\Pi_{\widehat{J}_1} [\A^*(z)] \Pi_{\{k_1\}}} \\
{} & \leq \lambda \norm{\Pi_{\widehat{J}_1} u}_2 \norm{\Pi_{\{k_1\}} v}_2 + \delta \lambda + \sqrt{1+\delta} \norm{z}_2 \\
{} & \leq \lambda \norm{\Pi_{\widehat{J}_1} u}_2 \norm{\Pi_{\widehat{J}_2} v}_2 + \delta \lambda + \sqrt{1+\delta} \norm{z}_2,
\end{aligned}
\end{equation}
where the second inequality follows by Lemmas~\ref{lemma:ds_bndopnormrip} and \ref{lemma:ds_bndnoise},
and the last step holds since $k_1 \in \widehat{J}_2$.

The right-hand-side of (\ref{eq:proof:thm2a:pgrip_dspf_dspca:ineqJ2}) is lower-bounded by
\begin{equation}
\label{eq:proof:thm2a:pgrip_dspf_dspca:ineqJ2:rhs}
\begin{aligned}
\text{(b)}
{} & \geq \fnorm{\Pi_{\{j_0\}} X \Pi_{\{k_0\}}} - \fnorm{\Pi_{\{j_0\}} (\A^*\A - \idm)(X) \Pi_{\{k_0\}}} - \fnorm{\Pi_{\{j_0\}} [\A^*(z)] \Pi_{\{k_0\}}} \\
{} & \geq \lambda \norm{u}_\infty \norm{v}_\infty - \delta \lambda - \sqrt{1+\delta} \norm{z}_2.
\end{aligned}
\end{equation}

Applying (\ref{eq:proof:thm2a:pgrip_dspf_dspca:ineqJ2:lhs}) and (\ref{eq:proof:thm2a:pgrip_dspf_dspca:ineqJ2:rhs}) to (\ref{eq:proof:thm2a:pgrip_dspf_dspca:ineqJ2}) with (\ref{eq:nuineq}) yields
\begin{equation}
\label{eq:proof:thm2a:pgrip_dspf_dspca:bnduvnorms}
\norm{\Pi_{\widehat{J}_1} u}_2 \norm{\Pi_{\widehat{J}_2} v}_2 \geq \norm{u}_\infty \norm{v}_\infty - 2 (\delta+ \nu + \delta\nu).
\end{equation}

Then, by applying (\ref{eq:proof:thm2a:pgrip_dspf_dspca:bnduvnorms}) to (\ref{eq:pgrip_spf_init_good_supp2:cond}) with some rearrangement,
we get a sufficient condition for (\ref{eq:pgrip_spf_init_good_supp2:cond}) given by
\begin{equation}
\label{eq:proof:thm2a:pgrip_dspf_dspca:suff1}
\begin{aligned}
\norm{\Pi_{\widehat{J}_1} u}_2^2
{} & < \left[ \sin \omega_{\sup} \norm{\Pi_{\widehat{J}_1} u}_2 - (1+\sin \omega_{\sup}) (\delta+\nu+\delta\nu) \right]^2 \\
{} & \quad + \left[\norm{u}_\infty \norm{v}_\infty - 2(\delta+ \nu + \delta\nu)\right]^2.
\end{aligned}
\end{equation}
By convexity of the scalar quadratic function, (\ref{eq:proof:thm2a:pgrip_dspf_dspca:suff1}) is implied by
\begin{equation}
\label{eq:proof:thm2a:pgrip_dspf_dspca:suff2}
\begin{aligned}
\sqrt{2} \norm{\Pi_{\widehat{J}_1} u}_2
< \sin \omega_{\sup} \norm{\Pi_{\widehat{J}_1} u}_2 + \norm{u}_\infty \norm{v}_\infty - (3+\sin \omega_{\sup}) (\delta+ \nu + \delta\nu).
\end{aligned}
\end{equation}
Here, we used the fact that $\sin \omega_{\sup} \norm{\Pi_{\widehat{J}_1} u}_2 - (1+\sin \omega_{\sup}) (\delta+\nu+\delta\nu) > 0$,
which is implied by (\ref{eq:thm:pgrip_init_ds:2a:cond_gen}).

Note that (\ref{eq:proof:thm2a:pgrip_dspf_dspca:suff2}) is equivalently rewritten as
\[
\delta+ \nu + \delta\nu < \frac{\norm{u}_\infty \norm{v}_\infty - (\sqrt{2} - \sin \omega_{\sup}) \norm{\Pi_{\widehat{J}_1} u}_2}{3+\sin \omega_{\sup}},
\]
which is implied by (\ref{eq:thm:pgrip_init_ds:2a:cond_gen}) since $\sin \omega_{\sup} < \sqrt{2}$.

\subsubsection{Proof of Part 2-(b)}

Unlike the previous parts of Theorem~\ref{thm:pgrip_init_ds},
Part 2-(b) assumes a stronger RIP of $\A$ that applies to all matrices with up to $9 s_1 s_2$ nonzero entries.

Let $\calS$ denote the set of $n_1$-by-$n_2$ matrices such that there are at most $s_1$ nonzero rows and each row has at most $s_2$ nonzero elements.
Let $\calP_\calS$ denote the orthogonal projection onto $\calS$.
Then, $\widehat{J}_1$ coincides with the row-support of $\calP_\calS[\A^*(b)]$.

The RIP assumption admits an upper bound on $\fnorm{\calP_\calS[\A^*(b)] - X}$,
which is direct implication of the analogous result in compressed sensing of a sparse vector.
We use the version by Foucart \cite[Theorem~3]{Fou2012sparse} given by
\begin{equation}
\fnorm{\calP_\calS[\A^*(b)] - X} \leq 2 \delta \fnorm{X} + 2 \sqrt{1+\delta} \norm{z}_2.
\label{eq:proxybnd}
\end{equation}

Define a two-dimensional coordinate projection $\widetilde{\Pi}_{\calJ}: \bbC^{n_1 \times n_2} \to \bbC^{n_1 \times n_2}$
associated with an index set $\calJ \subset [n_1] \times [n_2]$ by
\[
[\widetilde{\Pi}_{\calJ}(Z)]_{i,j}
= \begin{cases}
[Z]_{i,j} & (i,j) \in \calJ, \\
0 & \text{else},
\end{cases}
\]
for all $Z \in \bbC^{n_1 \times n_2}$.
Then, it follows that $\widetilde{\Pi}_{\widehat{J}_1 \times \widehat{J}_2}(Z) = \Pi_{\widehat{J}_1} Z \Pi_{\widehat{J}_1}$.
Furthermore, there exists a subset $\widehat{\calJ} \subset [n_1] \times [n_2]$ such that
\begin{equation}
\label{eq:2dprojid}
\calP_\calS[\A^*(b)] = \widetilde{\Pi}_{\widehat{\calJ}}[\A^*(b)].
\end{equation}

Let $\widehat{M} \in \bbC^{n_1 \times n_2}$ denote the best rank-1 approximation
of $\Pi_{\widehat{J}_1} [\A^*(b)] \Pi_{\widehat{J}_2}$ in the Frobenius norm.
Then, we have
\begin{align*}
\fnorm{\Pi_{\widehat{J}_1} [\A^*(b)] \Pi_{\widehat{J}_2} - \widehat{M}}
\leq \fnorm{\Pi_{\widehat{J}_1} [\A^*(b)] \Pi_{\widehat{J}_2} - X}.
\end{align*}
Therefore,
\begin{equation}
\label{eq:proxybnd2}
\begin{aligned}
\fnorm{\widehat{M} - X}
{} & \leq \fnorm{\widehat{M} - \Pi_{\widehat{J}_1} [\A^*(b)] \Pi_{\widehat{J}_2}} + \fnorm{\Pi_{\widehat{J}_1} [\A^*(b)] \Pi_{\widehat{J}_2} - X} \\
{} & \leq 2 \fnorm{\Pi_{\widehat{J}_1} [\A^*(b)] \Pi_{\widehat{J}_2} - X} \\
{} & \leq 2 \fnorm{\Pi_{\widehat{J}_1} [\A^*(b)] \Pi_{\widehat{J}_2} - \widetilde{\Pi}_{(\widehat{J}_1 \times \widehat{J}_2) \cup \widehat{\calJ}}[\A^*(b)]} + 2 \fnorm{\widetilde{\Pi}_{(\widehat{J}_1 \times \widehat{J}_2) \cup \widehat{\calJ}}[\A^*(b)] - X} \\
{} & \leq 4 \fnorm{\widetilde{\Pi}_{(\widehat{J}_1 \times \widehat{J}_2) \cup \widehat{\calJ}}[\A^*(b)] - X},
\end{aligned}
\end{equation}
where the last step follows since
\[
\Pi_{\widehat{J}_1} [\A^*(b)] \Pi_{\widehat{J}_2}
= \argmin_Z \{ \fnorm{\widetilde{\Pi}_{(\widehat{J}_1 \times \widehat{J}_2) \cup \widehat{\calJ}}[\A^*(b)] - Z} :~ \text{$Z$ is $(s_1,s_2)$-sparse} \}.
\]

The last term in (\ref{eq:proxybnd2}) is further upper-bounded by
\begin{equation}
\label{eq:proxybnd3}
\fnorm{\widetilde{\Pi}_{(\widehat{J}_1 \times \widehat{J}_2) \cup \widehat{\calJ}}[\A^*(b)] - X}
\leq \fnorm{\widetilde{\Pi}_{\widehat{\calJ}}[\A^*(b)] - X}
+ \fnorm{\widetilde{\Pi}_{(\widehat{J}_1 \times \widehat{J}_2) \setminus \widehat{\calJ}}[\A^*(b)]}.
\end{equation}

Since $\widetilde{\Pi}_{\widehat{\calJ}}[\A^*(b)] = \calP_\calS[\A^*(b)]$,
the first term in the right-hand-side of (\ref{eq:proxybnd3}) is upper-bounded by (\ref{eq:proxybnd}).

Next, by the triangle inequality, the second term in the right-hand-side of (\ref{eq:proxybnd3}) is upper-bounded by
\begin{align*}
{} & \fnorm{\widetilde{\Pi}_{(\widehat{J}_1 \times \widehat{J}_2) \setminus \widehat{\calJ}}[\A^*(b)]} \\
{} & = \fnorm{\widetilde{\Pi}_{(\widehat{J}_1 \times \widehat{J}_2) \setminus \widehat{\calJ}}[X + (\A^*\A-\id)(X)+\A^*(z)]} \\
{} & \leq \fnorm{\widetilde{\Pi}_{(\widehat{J}_1 \times \widehat{J}_2) \setminus \widehat{\calJ}}(X)}
+ \fnorm{\widetilde{\Pi}_{(\widehat{J}_1 \times \widehat{J}_2) \setminus \widehat{\calJ}}[(\A^*\A - \id)(X)]}
+ \fnorm{\widetilde{\Pi}_{(\widehat{J}_1 \times \widehat{J}_2) \setminus \widehat{\calJ}}[\A^*(z)]} \\
{} & \leq \underbrace{ \fnorm{\widetilde{\Pi}_{(\widehat{J}_1 \times \widehat{J}_2) \setminus \widehat{\calJ}}(X)} }_{=\text{(a)}}
+ \delta \fnorm{X} + \sqrt{1+\delta} \norm{z}_2,
\end{align*}
where the last step follows by the RIP of $\A$ with $|(J_1 \times J_2) \setminus \widehat{\calJ}| \leq s_1 s_2$.

Let $J_1$ and $J_2$ denote the support of $u$ and $v$, respectively.
In other words, $J_1$ and $J_2$ are the row-support and column-support of $X$, respectively.
Then,
\[
\widetilde{\Pi}_{(\widehat{J}_1 \times \widehat{J}_2) \setminus \widehat{\calJ}}(X)
= \widetilde{\Pi}_{[(\widehat{J}_1 \cap J_1) \times (\widehat{J}_2 \cap J_2)] \setminus \widehat{\calJ}}(X).
\]
Therefore, $\text{(a)}$ is upper-bounded by
\begin{align*}
{} & \fnorm{\widetilde{\Pi}_{(J_1 \times J_2) \setminus \widehat{\calJ}}(X)} \\
{} & = \fnorm{\widetilde{\Pi}_{\widehat{\calJ}}(X) - X} \\
{} & = \fnorm{\widetilde{\Pi}_{\widehat{\calJ}}[\A^*(b) - (\A^*\A-\id)(X) - \A^*(z)] - X} \\
{} & \leq \fnorm{\widetilde{\Pi}_{\widehat{\calJ}}[\A^*(b)] - X} + \fnorm{\widetilde{\Pi}_{\widehat{\calJ}}[(\A^*\A-\id)(X)]}
+ \fnorm{\widetilde{\Pi}_{\widehat{\calJ}}[\A^*(z)]} \\
{} & \leq 3 \delta \fnorm{X} + 3 \sqrt{1+\delta} \norm{z}_2,
\end{align*}
where the last step follows by (\ref{eq:2dprojid}), (\ref{eq:proxybnd}), and the RIP of $\A$ with $|\widehat{\calJ}| \leq s_1 s_2$.

By applying these upper bounds back to (\ref{eq:proxybnd2}), we get
\begin{equation}
\label{eq:proxybnd4}
\fnorm{\widehat{M} - X} \leq 24 \delta \fnorm{X} + 24 \sqrt{1+\delta} \norm{z}_2.
\end{equation}

Let $v_0^\text{\rm Th}$ be the leading right singular vector of $M$.
Then, by the non-Hermitian $\sin \theta$ theorem \cite{Wed1972perturbation} (Lemma~\ref{lemma:sintheta}),
\[
\norm{P_{\R(v)^\perp} P_{\R(v_0^\text{\rm Th})}} \leq \frac{\fnorm{M - X}}{\fnorm{X}} \leq 24\delta + 24(1+\delta)\nu,
\]
where the second step follows from (\ref{eq:proxybnd4}) and (\ref{eq:nuineq}).

Therefore, a sufficient for (\ref{eq:conv_spf_initcond}) is given by
\[
24\delta + 24(1+\delta)\nu < \sin \omega_{\sup},
\]
which holds, for example, if $\delta \leq 0.02$ and $\nu \leq 0.02$.

\subsection{Proof of Theorem~\ref{thm:pgrip_init_rs}}
\label{subsec:initrs}

In the row-sparse case ($s_2 = n_2$), the support estimate for $v$ is trivially given as $\widehat{J}_2 = [n_2]$.
The assumptions of Theorem~\ref{thm:pgrip_init_rs} imply
\begin{equation}
\delta+\nu+\delta\nu < \min\left( \frac{\sin \omega_{\sup} \norm{u}_\infty}{1 + \sin \omega_{\sup}}, \frac{\norm{u}_\infty}{2} \right)
\label{eq:proof:thm:pgrip_init_rs:cond}
\end{equation}

Let $j_0 \in [n_1]$ denote the index of the largest entry of $u$ in magnitude, i.e.,
\[
j_0 \triangleq \argmax_{j \in [n_1]} |[u]_j|.
\]
Then, $|[u]_{j_0}| = \norm{\Pi_{\{j_0\}} u}_2 = \norm{u}_\infty$.

Since (\ref{eq:proof:thm:pgrip_init_rs:cond}) implies
\begin{equation}
2\delta + 2(1+\delta)\nu < \norm{u}_\infty,
\end{equation}
by Lemma~\ref{lemma:est_supp_dthres} with $\widetilde{J}_1 = \{j_0\}$,
it follows that $j_0 \in \widehat{J}_1$; hence,
\begin{equation}
\norm{\Pi_{\widehat{J}_1} u}_2 \geq \norm{u}_\infty.
\label{eq:proof:thm:pgrip_init_rs:bnd}
\end{equation}

On the other hands, since $\Pi_{\widehat{J}_2}^\perp v = 0$, by Lemma~\ref{lemma:init_w_good_est_supp2},
we get a sufficient condition for (\ref{eq:conv_spf_initcond}) given by
\begin{equation}
\delta + (1+\delta) \nu < \frac{\sin \omega_{\sup} \norm{\Pi_{\widehat{J}_1} u}_2}{1 + \sin \omega_{\sup}}.
\label{eq:proof:thm:pgrip_init_rs:cond2}
\end{equation}

Finally, note that (\ref{eq:proof:thm:pgrip_init_rs:cond}) also implies (\ref{eq:proof:thm:pgrip_init_rs:cond2})
since (\ref{eq:proof:thm:pgrip_init_rs:bnd}) holds.

\subsection{Proof of Theorem~\ref{thm:conv_scspf}}
\label{subsec:proof_rankr_conv}
The proof will use the following lemma.
\begin{lemma}[Iterative update with subspace concatenation]
\label{lemma:iter_update_rankr}
Suppose the hypotheses of Theorem~\ref{thm:conv_scspf} hold.
Then, for all $t \geq 1$,
\[
\frac{\fnorm{P_{\R(U_t)^\perp P_{\R(U)}}}}{\sqrt{r}}
\leq \frac{2 \kappa C_\delta^{\mathrm{HTP}}}{\sqrt{1-\norm{P_{\R(V_0)^\perp} P_{\R(V)}}^2}}
\left[
\frac{\delta \fnorm{P_{\R(V_{t-1})^\perp} P_{\R(V)}}}{\sqrt{r}}
+ \frac{(1+\delta) \norm{z}_2}{\norm{\A(X)}_2}
\right]
\]
and
\[
\frac{\fnorm{P_{\R(V_t)^\perp P_{\R(V)}}}}{\sqrt{r}}
\leq \frac{2 \kappa C_\delta^{\mathrm{HTP}}}{\sqrt{1-\norm{P_{\R(U_0)^\perp} P_{\R(U)}}^2}}
\left[
\frac{\delta \fnorm{P_{\R(U_{t-1})^\perp} P_{\R(U)}}}{\sqrt{r}}
+ \frac{(1+\delta) \norm{z}_2}{\norm{\A(X)}_2}
\right].
\]
\end{lemma}

\begin{proof}[Proof of Lemma~\ref{lemma:iter_update_rankr}]
We provide the proof only for the first part.
The proof for the second part follows by symmetry.

Let $\widehat{V} \in \cz^{n_2 \times r}$ denote an orthonormal basis for the subspace spanned by $V_{t-1}$,
where $V_{t-1}$ is the estimate of $V$ obtained in the previous iteration.
Let $V_0 \in \cz^{n_2 \times r}$ denote the initial estimate of $V$.
Note that $V_0$ and $V$ were respectively obtained as singular vectors of a certain matrix.
Then we have $\widehat{V}^* \widehat{V} = V_0^* V_0 = I_r$.

The algorithm uses the subspace spanned by $\widetilde{V}$ and does not depend on a specific choice of an orthonormal basis.
Our proof will use $\widetilde{V}$ constructed as follow.
Let $Q$ be a matrix representing an orthonormal basis of $P_{\R(\widehat{V})^\perp} \R(V_0)$.
Let $\widetilde{V} = [\widehat{V}, Q]$.
Then, $\widetilde{V}$ satisfies
\[
\R(\widetilde{V}) = \R(\widehat{V}) + \R(V_0)
\]
and
\[
\widetilde{V} e_k = \widehat{V} e_k, \quad \forall k = 1,\ldots,r.
\]
Let $\tilde{r}$ denote the rank of $\widetilde{V}$.
Then, $r \leq \tilde{r} \leq 2r$.

Since $\widetilde{V}^* \widetilde{V}$ is an identity matrix,
the projection operator $P_{\R(\widetilde{V})}$ is expressed as
\[
P_{\R(\widetilde{V})} = \widetilde{V} (\widetilde{V}^* \widetilde{V})^{-1} \widetilde{V}^* = \widetilde{V} \widetilde{V}^*.
\]

The measurement vector $b$ is then written as
\begin{align*}
b {} & = \A(U \Lambda V^*) + z \\
{} & = \A[U \Lambda V^* (P_{\R(\widetilde{V})} + P_{\R(\widetilde{V})^\perp})] + z \\
{} & = \A(U \Lambda V^* P_{\R(\widetilde{V})}) + \A(U \Lambda V^* P_{\R(\widetilde{V})^\perp}) + z \\
{} & = \A( U \Lambda V^* \widetilde{V} \widetilde{V}^*) + \A(U \Lambda V^* P_{\R(\widetilde{V})^\perp}) + z \\
{} & = [\calF(\widetilde{V})] (U \Lambda V^* \widetilde{V}) + [\calF(P_{\R(\widetilde{V})^\perp} V)] (U \Lambda) + z.
\end{align*}

In the scenario $s_1 < n_1$,
$\widetilde{U}$ is obtained by B-HTP with the sensing linear operator $\calF(\widetilde{V}): \cz^{n_1 \times \tilde{r}} \to \cz^m$.
The performance guarantee for HTP in Lemma~\ref{lemma:htp} extends to the case of B-HTP.
The derivation is done in a straightforward way that involves replacing a few symbols.
By this RIP-guarantee for B-HTP, the error in the estimate $\widetilde{U}$ of $U \Lambda V^* \widetilde{V}$ is upper-bounded by
\begin{equation}
\label{eq:pf:lemma:iter_update_rankr:bnd1}
\fnorm{\widetilde{U} - U \Lambda V^* \widetilde{V}}
\leq C_\delta^{\mathrm{HTP}} |\langle W, [\calF(\widetilde{V})]^* [\A(U \Lambda V^* P_{\R(\widetilde{V})^\perp}) + z]\rangle|
\end{equation}
for some $W \in \cz^{n_1 \times \tilde{r}}$ such that $\fnorm{W} = 1$ and $W$ is row $s_1$-sparse.
By applying the triangle inequality to the right-hand-side of \eqref{eq:pf:lemma:iter_update_rankr:bnd1},
$\fnorm{\widetilde{U} - U \Lambda V^* \widetilde{V}}$ is further upper-bounded by
\begin{equation}
\label{eq:pf:lemma:iter_update_rankr:bnd2}
\fnorm{\widetilde{U} - U \Lambda V^* \widetilde{V}}
\leq C_\delta^{\mathrm{HTP}} |\langle W, [\calF(\widetilde{V})]^* \A(U \Lambda V^* P_{\R(\widetilde{V})^\perp}) \rangle|
+ C_\delta^{\mathrm{HTP}} |\langle W, [\calF(\widetilde{V})]^* z \rangle|.
\end{equation}
Then the right-hand-side of \eqref{eq:pf:lemma:iter_update_rankr:bnd2} is upper-bounded as follows.
Note that $\langle W, [\calF(\widetilde{V})]^* \A(U \Lambda V^* P_{\R(\widetilde{V})^\perp}) \rangle$ is rewritten as
\begin{align*}
\langle W, [\calF(\widetilde{V})]^* \A(U \Lambda V^* P_{\R(\widetilde{V})^\perp}) \rangle
{} & = \langle [\calF(\widetilde{V})] W, \A(U \Lambda V^* P_{\R(\widetilde{V})^\perp}) \rangle \\
{} & = \langle \A(W \widetilde{V}^*), \A(U \Lambda V^* P_{\R(\widetilde{V})^\perp}) \rangle \\
{} & = \langle W \widetilde{V}^*, \A^*\A(U \Lambda V^* P_{\R(\widetilde{V})^\perp}) \rangle \\
{} & = \langle W \widetilde{V}^*, (\A^*\A-\id)(U \Lambda V^* P_{\R(\widetilde{V})^\perp}) \rangle,
\end{align*}
where the last step follows since $\langle W \widetilde{V}^*, U \Lambda V^* P_{\R(\widetilde{V})^\perp} \rangle = 0$.
Thus, by the rank-$2r$ and doubly $(3s_1,3s_2)$-sparse RIP of $\A$,
the first term in the right-hand-side of \eqref{eq:pf:lemma:iter_update_rankr:bnd2} is upper-bounded by
\begin{align*}
C_\delta^{\mathrm{HTP}} |\langle W, [\calF(\widetilde{V})]^* \A(U \Lambda V^* P_{\R(\widetilde{V})^\perp}) \rangle|
{} & \leq C_\delta^{\mathrm{HTP}} \delta \fnorm{W \widetilde{V}^*} \fnorm{U \Lambda V^* P_{\R(\widetilde{V})^\perp}} \\
{} & \leq C_\delta^{\mathrm{HTP}} \delta \norm{\Lambda} \fnorm{P_{\R(V)} P_{\R(\widetilde{V})^\perp}} \\
{} & \leq C_\delta^{\mathrm{HTP}} \delta \norm{\Lambda} \fnorm{P_{\R(V)} P_{\R(\widehat{V})^\perp}},
\end{align*}
where the last step holds since $\R(\widehat{V}) \subset \R(\widetilde{V})$.
On the other hand, since $\langle W, [\calF(\widetilde{V})]^* z \rangle$ is rewritten as
\[
\langle W, [\calF(\widetilde{V})]^* z \rangle = \langle [\calF(\widetilde{V})] W, z \rangle = \langle \A(W \widetilde{V}^*), z \rangle,
\]
by the RIP of $\A$ together with the fact that $W$ is row-$s_1$-sparse and $\widetilde{V}$ is row-$2s_2$-sparse,
the second term in the right-hand-side of \eqref{eq:pf:lemma:iter_update_rankr:bnd2} is upper-bounded by
\begin{align*}
C_\delta^{\mathrm{HTP}} |\langle W, [\calF(\widetilde{V})]^* z \rangle|
\leq C_\delta^{\mathrm{HTP}} \norm{\A(W \widetilde{V}^*)}_2 \norm{z}_2
\leq C_\delta^{\mathrm{HTP}} \sqrt{1+\delta} \fnorm{W \widetilde{V}^*} \norm{z}_2
\leq C_\delta^{\mathrm{HTP}} \sqrt{1+\delta} \norm{z}_2.
\end{align*}
Therefore, \eqref{eq:pf:lemma:iter_update_rankr:bnd1} implies
\begin{equation}
\label{eq:pf:lemma:iter_update_rankr:bnd3}
\fnorm{\widetilde{U} - U \Lambda V^* \widetilde{V}}
\leq C_\delta^{\mathrm{HTP}} \left( \delta \norm{\Lambda} \fnorm{P_{\R(V)} P_{\R(\widehat{V})^\perp}} + \sqrt{1+\delta} \norm{z}_2 \right).
\end{equation}

On the other scenario $s_1 = n_1$, matrix $\widetilde{U}$ is obtained by
\begin{align*}
\widetilde{U} {} & = [\calF(\widetilde{V})]^\dagger b \\
{} & = [\calF(\widetilde{V})]^\dagger
\left(
[\calF(\widetilde{V})] (U \Lambda V^* \widetilde{V}) + [\calF(P_{\R(\widetilde{V})^\perp} V)] (U \Lambda) + z
\right) \\
{} & = U \Lambda V^* \widetilde{V} + [\calF(\widetilde{V})]^\dagger [\calF(P_{\R(\widetilde{V})^\perp} V)] (U \Lambda) +
[\calF(\widetilde{V})]^\dagger z
\end{align*}
Then, by Lemma~\ref{lemma:reduced_rip}, we have
\begin{equation}
\label{eq:pf:lemma:iter_update_rankr:bnd3LS}
\fnorm{\widetilde{U} - U \Lambda V^* \widetilde{V}}
\leq \frac{1}{1-\delta} \left( \delta \norm{\Lambda} \fnorm{P_{\R(V)} P_{\R(\widehat{V})^\perp}} + \sqrt{1+\delta} \norm{z}_2 \right).
\end{equation}
Since $C_\delta^{\mathrm{HTP}} \geq \frac{1}{1-\delta}$, \eqref{eq:pf:lemma:iter_update_rankr:bnd3LS} implies \eqref{eq:pf:lemma:iter_update_rankr:bnd3}. Therefore, we will use \eqref{eq:pf:lemma:iter_update_rankr:bnd3} regardless of whether $s_1 < n_1$ or not.

Next, we derive a lower bound on the left-hand-side of \eqref{eq:pf:lemma:iter_update_rankr:bnd3}.
Let $\widehat{U} \in \cz^{n_1 \times r}$ be a matrix with the first $r$ left singular vectors of $\widetilde{U}$.
Then, $\widehat{U} \widehat{U}^* \widetilde{U}$ is the best rank-$r$ approximation of $\widetilde{U}$
and by the optimality we have
\begin{equation}
\label{eq:pf:lemma:iter_update_rankr:bnd4}
\fnorm{\widehat{U} \widehat{U}^* \widetilde{U} - U \Lambda V^* \widetilde{V}}
\leq \fnorm{\widehat{U} \widehat{U}^* \widetilde{U} - \widetilde{U}}
+ \fnorm{\widetilde{U} - U \Lambda V^* \widetilde{V}}
\leq 2 \fnorm{\widetilde{U} - U \Lambda V^* \widetilde{V}}.
\end{equation}
On the other hand, since $\widehat{U} \widehat{U}^* \widetilde{U}$ is spanned by $\widehat{U}$, we have
\begin{equation}
\label{eq:pf:lemma:iter_update_rankr:bnd5}
\begin{aligned}
\fnorm{\widehat{U} \widehat{U}^* \widetilde{U} - U \Lambda V^* \widetilde{V}}
{} & \geq \fnorm{P_{\R(\widehat{U})^\perp} U \Lambda V^* \widetilde{V}} \\
{} & = \fnorm{P_{\R(\widehat{U})^\perp} U \Lambda V^* \widetilde{V} \widetilde{V}^*} \\
{} & = \fnorm{P_{\R(\widehat{U})^\perp} U \Lambda V^* P_{\R(\widetilde{V})}} \\
{} & \geq \fnorm{P_{\R(\widehat{U})^\perp} U \Lambda V^* P_{\R(V_0)}} \\
{} & \geq \fnorm{P_{\R(\widehat{U})^\perp} U} \sigma_r(\Lambda) \sigma_r(V^* P_{\R(V_0)}) \\
{} & = \fnorm{P_{\R(\widehat{U})^\perp} U U^*} \sigma_r(\Lambda) \sigma_r(V V^* P_{\R(V_0)}) \\
{} & = \fnorm{P_{\R(\widehat{U})^\perp} P_{\R(U)}} \sigma_r(\Lambda) \sigma_r(P_{\R(V)} P_{\R(V_0)}) \\
{} & = \fnorm{P_{\R(\widehat{U})^\perp} P_{\R(U)}} \sigma_r(\Lambda) \sqrt{1 - \norm{P_{\R(V_0)^\perp} P_{\R(V)}}^2},
\end{aligned}
\end{equation}
where the second inequality holds since $\R(V_0) \subset \R(\widetilde{V})$.

By combining \eqref{eq:pf:lemma:iter_update_rankr:bnd3}, \eqref{eq:pf:lemma:iter_update_rankr:bnd4} and \eqref{eq:pf:lemma:iter_update_rankr:bnd5},
we get
\begin{equation}
\label{eq:pf:lemma:iter_update_rankr:bnd6}
\fnorm{P_{\R(\widehat{U})^\perp} P_{\R(U)}}
\leq 2 C_\delta^{\mathrm{HTP}}
\left(
\frac{\kappa \delta \fnorm{P_{\R(\widehat{V})^\perp} P_{\R(V)}}}{\sqrt{1-\norm{P_{\R(V_0)^\perp} P_{\R(V)}}^2}}
+ \frac{\sqrt{1+\delta} \norm{z}_2}{\sigma_r(\Lambda) \sqrt{1-\norm{P_{\R(V_0)^\perp} P_{\R(V)}}^2}}
\right).
\end{equation}
Finally, note that
\begin{equation}
\label{eq:pf:lemma:iter_update_rankr:bnd7}
\sqrt{r} \sigma_r(\Lambda) = \frac{\sqrt{r} \norm{\Lambda}}{\kappa} \geq \frac{\fnorm{X}}{\kappa}.
\end{equation}
Applying \eqref{eq:pf:lemma:iter_update_rankr:bnd7} to \eqref{eq:pf:lemma:iter_update_rankr:bnd6} completes the proof.
\end{proof}

Define
\[
\rho \triangleq \widetilde{C}_\delta \cdot \delta \quad \text{and} \quad \tau \triangleq \widetilde{C}_\delta \cdot (1+\delta),
\]
where
\begin{align*}
\widetilde{C}_\delta {} & \triangleq \max\left\{
\frac{2 \kappa C_\delta^{\mathrm{HTP}}}{\sqrt{1-\norm{P_{\R(V_0)^\perp} P_{\R(V)}}^2}}
,
\frac{2 \kappa C_\delta^{\mathrm{HTP}}}{\sqrt{1-\norm{P_{\R(U_0)^\perp} P_{\R(U)}}^2}}
\right\}.
\end{align*}
Then, by Lemma~\ref{lemma:iter_update_rankr}, we have
\begin{equation}
\label{eq:pf:thm:iter_update_rankr:recursion1}
\frac{\fnorm{P_{\R(U_t)^\perp P_{\R(U)}}}}{\sqrt{r}}
\leq
\rho \frac{\fnorm{P_{\R(V_{t-1})^\perp} P_{\R(V)}}}{\sqrt{r}}
+ \tau \frac{\norm{z}_2}{\norm{\A(X)}_2}
\end{equation}
and
\begin{equation}
\label{eq:pf:thm:iter_update_rankr:recursion2}
\frac{\fnorm{P_{\R(V_t)^\perp P_{\R(V)}}}}{\sqrt{r}}
\leq
\rho \frac{\fnorm{P_{\R(U_{t-1})^\perp} P_{\R(U)}}}{\sqrt{r}}
+ \tau \frac{\norm{z}_2}{\norm{\A(X)}_2}
\end{equation}
for all $t \geq 1$.
Furthermore, by the choice of $\delta = \frac{0.04}{\kappa}$ and \eqref{eq:conv_scspf_initcond}, we have $0 < \rho < 1$.
Therefore, from the alternating recursion formula in \eqref{eq:pf:thm:iter_update_rankr:recursion1} and \eqref{eq:pf:thm:iter_update_rankr:recursion2}, the left-hand-side of \eqref{eq:pf:thm:iter_update_rankr:recursion1} converges linearly as
\begin{equation}
\label{eq:pf:thm:iter_update_rankr:limtheta}
\limsup_{t \to \infty} \frac{\fnorm{P_{\R(U_t)^\perp P_{\R(U)}}}}{\sqrt{r}} \leq \frac{\tau}{1-\rho} \cdot \frac{\norm{z}_2}{\norm{\A(X)}_2}.
\end{equation}

It remains to bound the error term $\fnorm{U \Lambda V^* - U_t V_t^*}$ in terms of $\fnorm{P_{\R(U_t)^\perp P_{\R(U)}}}$.
Without loss of generality, we may assume that $U_t^* U_t = I_r$.
(Indeed, the update of $V_t$ depends on $U_t$ only in terms of the subspace spanned by it.)
Note that the estimation error is decomposed as
\begin{align*}
U \Lambda V^* - U_t V_t^*
{} & = P_{\R(U_t)} U \Lambda V^* - U_t V_t^* + P_{\R(U_t)^\perp} U \Lambda V^* \\
{} & = U_t [(U_t^* U) \Lambda V^* - V_t^*] + P_{\R(U_t)^\perp} U \Lambda V^*.
\end{align*}
Furthermore, it follows from the RIP-guarantee of B-HTP that $V_t$ satisfies
\begin{align*}
\fnorm{V_t^* - U_t^* U \Lambda V^*}
{} & \leq 2 C_\delta^{\mathrm{HTP}} \left[ \delta \norm{\Lambda} \fnorm{P_{\R(U_t)^\perp} U} + \sqrt{1+\delta} \norm{z}_2 \right] \\
{} & \leq 2 C_\delta^{\mathrm{HTP}} \left[ \delta \norm{\Lambda} \fnorm{P_{\R(U_t)^\perp} U} + \fnorm{X} (1+\delta) \nu \right].
\end{align*}
Therefore,
\begin{align*}
\fnorm{U \Lambda V^* - U_t V_t^*}^2
{} & = \fnorm{P_{\R(U_t)} (U \Lambda V^* - U_t V_t^*)}^2 + \fnorm{P_{\R(U_t)^\perp} (U \Lambda V^* - U_t V_t^*)}^2 \\
{} & = \fnorm{U_t [(U_t^* U) \Lambda V^* - V_t^*]}^2 + \fnorm{P_{\R(U_t)^\perp} U \Lambda V^*}^2 \\
{} & \leq \fnorm{U_t^* U \Lambda V^* - V_t^*}^2 + \norm{\Lambda}^2 \fnorm{P_{\R(U_t)^\perp} U}^2 \\
{} & \leq \norm{\Lambda}^2 \left[1 + 2 (2 \delta C_\delta^{\mathrm{HTP}})^2\right] \fnorm{P_{\R(U_t)^\perp} P_{\R(U)}}^2
+ 2 \left[\fnorm{X} (1+\delta) C_\delta^{\mathrm{HTP}} \nu \right]^2,
\end{align*}
which implies
\begin{equation}
\label{eq:pf:thm:iter_update_rankr:errbnd}
\frac{\fnorm{X - U_t V_t^*}}{\fnorm{X}}
\leq \kappa \sqrt{1 + 8 (\delta C_\delta^{\mathrm{HTP}})^2} ~ \frac{\fnorm{P_{\R(U_t)^\perp} P_{\R(U)}}}{\sqrt{r}}
+ \sqrt{2} (1+\delta) C_\delta^{\mathrm{HTP}} \frac{\norm{z}_2}{\norm{\A(X)}_2}.
\end{equation}

Finally, by applying \eqref{eq:pf:thm:iter_update_rankr:limtheta} to \eqref{eq:pf:thm:iter_update_rankr:errbnd}, we obtain
\[
\limsup_{t \to \infty} \frac{\fnorm{X - U_t V_t^*}}{\fnorm{X}}
\leq
\underbrace{ \left( \frac{\tau \kappa}{1-\rho} \sqrt{1 + 8 (\delta C_\delta^{\mathrm{HTP}})^2} + \sqrt{2} (1+\delta) C_\delta^{\mathrm{HTP}} \right) }_{C'}
\frac{\norm{z}_2}{\norm{\A(X)}_2},
\]
where the factor $C'$ is no greater than $55 \kappa^2 + 3 \kappa + 3$ under the hypotheses of Theorem~\ref{thm:conv_scspf}.
This completes the proof.

\subsection{Proof of Theorem~\ref{thm:pgrip_init_ds_rankr}}
\label{subsec:proof_rankr_init}
It suffices to show that the condition in \eqref{eq:conv_scspf_initcond} is satisfied by the initial estimates
$U_0 = U_0^\text{\rm Th}$ and $V_0 = V_0^\text{\rm Th}$,
where $U_0$ and $V_0$ are generated respectively by Algorithms~\ref{alg:initU0} and \ref{alg:initV0}.
Then, the result follows from Theorem~\ref{thm:conv_scspf}.
Specifically, we need to show that $V_0^\text{\rm Th}$ satisfies
\begin{equation}
\label{eq:pf:thm:pgrip_init_ds_rankr:resV0}
\norm{P_{\R(V)^\perp} P_{\R(V_0^\text{\rm Th})}} < 0.95,
\end{equation}
and $U_0^\text{\rm Th}$ satisfies
\begin{equation}
\label{eq:pf:thm:pgrip_init_ds_rankr:resU0}
\norm{P_{\R(U)^\perp} P_{\R(U_0^\text{\rm Th})}} < 0.95.
\end{equation}
We present the proof only for \eqref{eq:pf:thm:pgrip_init_ds_rankr:resV0}.
The proof for \eqref{eq:pf:thm:pgrip_init_ds_rankr:resU0} follows by symmetry.

The first step of our proof is to show that the support-estimate $\widehat{J}_1$ (resp. $\widehat{J}_2$) in Algorithm~\ref{alg:initV0}
includes $\widetilde{J}_1$ (resp. $\widetilde{J}_2$), which are defined in Theorem~\ref{thm:pgrip_ds_iidG_rankr}.
The following lemmas provide a sufficient condition for the containment.

\begin{lemma}
\label{lemma:includeJtilde1rankr}
Suppose the hypotheses of Theorem~\ref{thm:pgrip_init_ds_rankr} hold.
Let $J_1 \subset [n_1]$ denote the set of the indices of the nonzero rows of $U$.
Let $\widetilde{J}_1 \subset J_1$ be defined in \eqref{eq:def_widetildeJs}.
If
\begin{equation}
\label{eq:lemma:includeJtilde1rankr:cond}
2 \kappa \left[\delta + (1+\delta) \nu \right] < \sigma_r(U^* \Pi_{\widetilde{J}_1}).
\end{equation}
then $\widetilde{J}_1 \subset \widehat{J}_1$.
\end{lemma}

\begin{proof}[Proof of Lemma~\ref{lemma:includeJtilde1rankr}]
See Appendix~\ref{sec:pf:lemma:includeJtilde1rankr}.
\end{proof}

\begin{lemma}
\label{lemma:includeJtilde2rankr}
Suppose the hypotheses of Theorem~\ref{thm:pgrip_init_ds_rankr} hold.
Let $J_2 \subset [n_2]$ denote the set of the indices of the nonzero rows of $V$.
Let $\widetilde{J}_2$ be a subset of $J_2$.
If
\begin{equation}
\label{eq:lemma:includeJtilde2rankr:cond}
2 \kappa \left[\delta + (1+\delta) \nu \right] < \sigma_r(U^* \Pi_{\widetilde{J}_1}) \sigma_r(V^* \Pi_{\widetilde{J}_2}).
\end{equation}
then $\widetilde{J}_2 \subset \widehat{J}_2$.
\end{lemma}

\begin{proof}[Proof of Lemma~\ref{lemma:includeJtilde2rankr}]
See Appendix~\ref{sec:pf:lemma:includeJtilde2rankr}.
\end{proof}

It is straightforward to verify that \eqref{eq:lemma:includeJtilde1rankr:cond} and \eqref{eq:lemma:includeJtilde2rankr:cond}
are satisfied under the hypotheses of Theorem~\ref{thm:pgrip_init_ds_rankr}.

Next, from the results by Lemmas~\ref{lemma:includeJtilde1rankr} and \ref{lemma:includeJtilde2rankr},
we derive an upper bound on $\norm{P_{\R(V_0)^\perp} P_{\R(V)}}$ using the $\sin\theta$ theorem,
which is summarized in the following lemma.

\begin{lemma}
\label{lemma:init_w_good_est_supp2_rankr}
Suppose the hypotheses of Theorem~\ref{thm:pgrip_init_ds_rankr} hold.
Let $\widehat{J}_1 \subset [n_1]$ and $\widehat{J}_2 \subset [n_2]$ satisfy $|\widehat{J}_1| \leq s_1$ and $|\widehat{J}_2| \leq s_2$, respectively.
Let $V_0$ be a matrix whose columns are the $r$ leading singular vectors of $\Pi_{\widehat{J}_1} X \Pi_{\widehat{J}_2}$.
Then,
\begin{equation}
\label{eq:init_ub_rankr}
\norm{P_{\R(V_0)^\perp} P_{\R(V)}}
\leq \frac{\delta + \sqrt{1 - \sigma_r^2(V^* \Pi_{\widehat{J}_2})} + (1+\delta) \nu}{\sigma_r(\Pi_{\widehat{J}_1} U)/\kappa - \delta - (1+\delta) \nu}.
\end{equation}
\end{lemma}

\begin{proof}[Proof of Lemma~\ref{lemma:init_w_good_est_supp2_rankr}]
See Appendix~\ref{sec:pf:lemma:init_w_good_est_supp2_rankr}.
\end{proof}

We verify that \eqref{eq:pf:thm:pgrip_init_ds_rankr:resV0}
is obtained from the upper bound in Lemma~\ref{lemma:init_w_good_est_supp2_rankr}
under the hypotheses of Theorem~\ref{thm:pgrip_init_ds_rankr}.

Finally, in the case when $s_1 = n_1$ and $s_2 = n_2$,
we have $\widehat{J}_1 = [n_1]$ and $\widehat{J}_2 = [n_2]$.
These support estimates are trivially obtained without requiring that
$\sigma_r(V^* \Pi_{\widehat{J}_2})$ and $\sigma_r(\Pi_{\widehat{J}_1} U)$ exceed a constant.
Furthermore, it follows that
$\sigma_r(V^* \Pi_{\widehat{J}_2}) = 1$ and $\sigma_r(\Pi_{\widehat{J}_1} U) = 1$.
Thus \eqref{eq:init_ub_rankr} reduces to
\begin{equation}
\label{eq:init_ub_rankr_nonsparse}
\norm{P_{\R(V_0)^\perp} P_{\R(V)}}
\leq \frac{\delta + (1+\delta) \nu}{1/\kappa - \delta - (1+\delta) \nu}.
\end{equation}
Recall that it was assumed that $\delta = \frac{0.04}{\kappa}$ and $\nu = \frac{0.04}{\kappa}$ in Theorem~\ref{thm:pgrip_init_ds_rankr}.
Then the upper bound in the right-hand-side of \eqref{eq:init_ub_rankr_nonsparse} is less than 0.95 for any $\kappa \geq 1$. Therefore, the condition $\kappa \leq 4$ is not necessary in this case.
This completes the proof.

\subsection{Proof of \prettyref{thm:lb}}
\label{sec:pf-lb}

The proof of the rate-distortion lower bound for rank-$r$ matrix in \prettyref{thm:lb} relies on two auxiliary results, which we present first:
\begin{enumerate}[(a)]
	\item \prettyref{thm:RDv}: a rate-distortion lower bound for a $r$-dimensional uniform random subspace in $\complex^n$, where the distortion measure is the squared distance between subspaces, more precisely, the Frobenius norm between the respective projection matrices;
	\item \prettyref{thm:RDT}: a rate-distortion lower bound for a complex orthogonal matrix uniformly distributed with respect to the quadratic distortion.
\end{enumerate}
Both results are tight asymptotically (see \prettyref{rmk:dof-vv}); however, the main point here is a non-asymptotic bound.

\begin{theorem}	\label{thm:RDv}
	Let $V$ be uniformly distributed on the $r$-dimensional complex Stiefel manifold $\sfV(\complex^n, r)$.
	There exists a universal constant $c>0$ such that
	for any $n \geq 2$, $r \in [n]$ and $D > 0$,
	\begin{align}
	\inf_{P_{\hV\hV^*|VV^*}} \{I(VV;\hV\hV^*) : \Expect \|VV^*-\hV\hV^*\|_{\mathrm{F}}^2 \leq r D\} \geq & ~ 	(n-r)r \log \frac{c}{D},
		\label{eq:RDvv}
\end{align}
where the infimum is taken over all probability transition kernels such that $\hV \in \sfV(\complex^n,r)$.
\end{theorem}

\begin{theorem}	\label{thm:RDT}
	Let $T$ be uniformly distributed on the orthogonal group $O(r)$ in $\complex^r$.	There exists a universal constant $c'>0$ such that
	for any $r \in \naturals$ and $D > 0$,
	\begin{align}
	\inf_{P_{\hT|T}} \{I(T;\hT) : \Expect \fnorm{T-\hT}^2 \leq r D\} \geq & ~ \frac{r^2}{2} \log \frac{c'}{D},
		\label{eq:RDT}
\end{align}
where the infimum is taken over all probability transition kernels from $O(r)$ to itself.
\end{theorem}

\begin{remark}
\label{rmk:dof-vv}
The lower bound in \prettyref{thm:RDv} is in fact sharp within constant factors and, in addition, asymptotically sharp in the low-distortion regime, and the rate-distortion function on the left-hand side of \prettyref{eq:RDvv} is in fact $(n-r)r \log \frac{1}{D}(1+o(1))$ as $D \to 0$, since a matching upper bound can be obtained by quantization and using the covering number bound of the Grassmannian manifold \cite{Szarek82}. The pre-log factor $(n-r)r$ in \prettyref{eq:RDvv} deserves a careful explanation. We first recall that the low-distortion asymptotics of the rate-distortion function obtained in \cite{KD94} for mean-square distortion:
\begin{equation}
R_X(D) = \frac{d(X)}{2} \log \frac{1}{D}(1+o(1)),	
	\label{eq:dk}
\end{equation}
 where $d(X)$ is the \emph{information dimension} of the random vector $X$ \cite{renyi}, which, in the absolute continuous case, coincides with the (real) topological dimension of the support of $X$.
As mentioned in \prettyref{rmk:dof}, the number of free (real) variables in $V$ is $2nr-r^2$.
	Furthermore, the loss function $\|VV^*-\hV\hV^*\|$ corresponds to the subspace distance which is rotationally invariant. This further reduces the number of free variables by the degrees of freedom of the orthogonal group $O(r)$ in $\complex^r$, which is $\sum_{i=1}^r (2r-2i+1) = r^2$. Therefore under the subspace distance distortion metric, the \emph{effective} number of degrees of freedom in $V$ is $2r(n-r)$, 	and the corresponding rate-distortion behavior \prettyref{eq:RDvv} is consistent with the information dimension characterization \prettyref{eq:dk}.

Similarly, \prettyref{thm:RDT} is tight when $D \to 0$, which is met by the covering number bound of the orthogonal group \cite{Szarek82}.
\end{remark}

To prove \prettyref{thm:RDv} first we need an auxiliary result from linear algebra.
\begin{lemma}
\label{lmm:distance}	
Let $O(r)$ be the set of $r\times r$ complex orthogonal matrices. Then
for any $V,\hat V\in \sfV(\complex^n,r)$, there exists $ A \in O(r)$, such that
\begin{equation}
\fnorm{V - \hat{V}A} \leq \fnorm{ VV^* - \hat{V}\hat{V}^* }.
\end{equation}	
\end{lemma}
\begin{proof}
\begin{align*}
\fnorm{ VV^* - \hat{V}\hat{V}^* }^2 & = 2\sum_{i=1}^r \sin^2\theta_i = 2r - 2\sum_{i=1}^r \cos^2\theta_i \\
&= 2r - 2\sum_{i=1}^r (\Real\langle u_i, \hat{u}_i \rangle)^2
\end{align*}
where $\theta_1,\ldots,\theta_r$ are the principal angles between subspaces $\text{span}(V), \text{span}(\hat{V})$, and $U = [u_1 ,\ldots, u_r], \hat{U} = [\hat u_1  ,\ldots, \hat{u}_r]$, where $u_i, \hat{u}_i$ are the corresponding principal vectors. Note that $U = VR, \hat{U} = \hat{V}\hat{R}$, for some $R, \hat{R} \in O(r)$. Then
\begin{equation}
\fnorm{ U - \hat{U} }^2 = 2r - 2 \Real\langle U, \hat{U} \rangle = 2r - 2\sum_{i=1}^r \Real\langle u_i, \hat{u}_i \rangle = 2r - 2\sum_{i=1}^r \cos\theta_i.
\end{equation}
Since $0 \leq \cos\theta_i \leq 1$,
$\fnorm{ VV^* - \hat{V}\hat{V}^* }^2 \geq \fnorm{ U - \hat{U} }^2$. Moreover,
\begin{equation}
\fnorm{ U - \hat{U} }^2 = \fnorm{ VR - \hat{V}\hat{R} }^2 = \fnorm{ V - \hat{V}\hat{R}R^* }^2,
\end{equation}
thus by choosing $A = \hat{R}R^*$, $\fnorm{ V - \hat{V}A } \leq \fnorm{ VV^* - \hat{V}\hat{V}^* }$.
\end{proof}

\begin{proof}[Proof of \prettyref{thm:RDv}]
When $r=n$, the lower bound is automatically true (and tight since the subspace has no randomness). Henceforth we assume that $r \leq n-1$.
Also, it is sufficient to consider $D<1$.
Under the mean-square distortion, the rate-distortion function for $\calCN(0,1)$ is given by
		\begin{equation}
		R(D) = \log^+ \frac{1}{D}, \quad D > 0,
	\label{eq:RDCN}
\end{equation}
where 	$\log^+ \triangleq \max\{\log, 0\}$.
		Let $W$ be an $n\times r$ random matrix with \iid entries drawn from $\calCN(0,1)$.
As a consequence of \prettyref{eq:RDCN}, for any $a > 0$,
		\begin{equation}
	\inf_{P_{\hW|W}} \{I(W;\hat{W}): \Expect \fnorm{\hW-W}^2 \leq n r a\} \geq nr \log^+ \frac{1}{a}.
	\label{eq:IWW}
\end{equation}
Let $W=VR$ be the QR decomposition of $W$, where $V\in \sfV(\complex^n,r)$ and $R$ is a $r\times r$ upper triangular matrix with real-valued diagonals and complex-valued off-diagonals. Since the law of $W$ is left rotationally invariant, $V$ and $R$ are independent. Moreover, $V$ is Haar distributed; the entries of $R$ are independent, where the off-diagonals $\{R_{ij}: i < j\}$ are standard complex Gaussian
 and $2R_{ii}^2$ are independent $\chi^2_{2(n-i+1)}$ for $i \in [r]$ (see, \eg, \cite[Lemma 2.1]{TV04}).

The main idea of obtaining the lower bound \prettyref{eq:RDvv} is to combine a given compressor of the column subspace of $W$ together with
that of the matrix $R$ to yield a lossy compressor of the Gaussian matrix $W$, and the overall performance must obey the rate-distortion function of $W$. To implement this program, fix $0 < D < 1$ and fix any probability transition kernel $P_{\hV\hV^*|VV^*}$ such that $\hV \in \sfV(\complex^n,r)$ and $\Expect{\fnorm{VV^*-\hV\hV^*}^2} \leq Dr$.
By \prettyref{lmm:distance}, there exists $A=A(V,\hV) \in O(r)$, such that $\fnorm{V-\hV A} \leq \fnorm{VV^*-\hV\hV^*}$.
Next we use the metric entropy bound for the orthogonal group to produce a quantized version of $A$.
By \cite{Szarek82} (see also \cite{Szarek98}), there exists a universal constant $c_0>1$, such that
the covering number of $O(r)$ with respect to $\fnorm{\cdot}$ is at most $(\frac{\sqrt{c_0 r}}{\epsilon})^{r^2}$ for any $r\in\naturals$ and any $\epsilon\in(0,\sqrt{r})$, where the $r^2$ is the (real) topological dimension of $O(r)$ and $\sqrt{r}$ is the diameter.
Therefore for $\epsilon = \sqrt{r D}$, there exists $T_1,\ldots,T_m \in O(r)$ with $m \leq (\frac{c_0}{D})^{r^2/2}$, that constitute an $\epsilon$-covering of $O(r)$, namely,
for any $S\in O(r)$, there exists $i=i(S) \in [m]$ such that $\fnorm{S-T_{i(S)}} \leq \epsilon$.
Let $A_m = T_{i(A)}$ denote the closest $T_i$ to $A$. Then $\fnorm{A_m-A} \leq \sqrt{rD}$.
Set $\tV = \hV A_m$. Then
\[
\fnorm{V-\tV}\leq \fnorm{V-\hV A} + \fnorm{\hV(A-A_m)} \leq \fnorm{V-\hV A} + \sqrt{rD}
\]
and hence
\begin{equation}
\Expect\fnorm{V-\tV}^2 \leq 2 \Expect\fnorm{V-\hV A}^2 + 2 rD \leq 2 \Expect\fnorm{VV^*-\hV\hV^*}^2  + 2 rD \leq 4rD.	
	\label{eq:Vcompress}
\end{equation}


Next we use $P_{\tV|V}$ to design a lossy compressor for the standard Gaussian matrix $W$. Fix a transition kernel $P_{\tR|R}$ to be specified later
and set $\tW = \tV\tR $.
The dependence diagram for all random variables is as follows:
\[
\begin{tikzcd}
R	\arrow{r}							& \tR	\arrow{r}						& \tW 							& \\		
V \arrow{rd} \arrow{r} 	& \hV \arrow{r} \arrow{d} & \tV	\arrow{u} 		& \\
												& A \arrow{r}					&  A_m \arrow{u} &
\end{tikzcd}
\]
which implies that
\begin{align}
I(W;\tW)
= & ~ I(V, R;\tV,\tR) = I(R;\tR)+I(V;\tV) \leq I(R;\tR)+I(V;\hV, A_m)	\nonumber \\ 
\leq & ~ I(R;\tR)  + I(V;\hV) + H(A_m) \nonumber \\
\leq & ~ I(R;\tR)  + I(V;\hV) + \frac{r^2}{2} \log  \frac{c_0}{D}.	\label{eq:Ivv}
\end{align}
where the last step follows from the fact that $A_m$ is a random variable which takes no more than $m$ values, hence $H(A_m) \leq \log m \leq \frac{r^2}{2} \log  \frac{c_0}{D}$.

Note that for each $i\in[r]$,
$\Expect R_{ii}^2=2m$ and $\Expect R_{ii} = \frac{\Gamma(m + \frac{1}{2})}{\Gamma(m)}$, where $m=n-i+1 \geq 2$ by assumption that $r < n$.
Since $\frac{\Gamma(m + \frac{1}{2})}{\Gamma(m) \sqrt{m-1}} \geq 1$ for any $m \geq 2$, we have $\var R_{ii} \leq 1$.
By \cite[Theorem 4.3.3]{berger}, the rate-distortion function of $R_{ii}$ is majorized by that of the standard normal distribution, \ie,
\[
\min_{P_{\tR_{ii}|R_{ii}}: \Expect (\tR_{ii}-R_{ii})^2 \leq D} I(R_{ii};\tR_{ii}) \leq \frac{1}{2} \log^+ \frac{1}{D}, \quad D > 0.
\]
For the off-diagonal $R_{ij}$ which is independent standard complex normal, by \prettyref{eq:RDCN}, there exists $P_{\tR_{ij}|R_{ij}}$ such that $\Expect |\tR_{ij}-R_{ij}|^2 \leq D$ and $I(R_{ij};\tR_{ij}) =  \log^+ \frac{1}{D}$.
Let $P_{\tR|R} = \prod_{i\leq j} P_{\tR_{ij}|R_{ij}}$. Then
\begin{equation}
	\Expect \fnorm{\tR-R}^2 \leq \frac{r(r+1)D}{2}
	\label{eq:ERR}
\end{equation}
and
\begin{equation}
\label{eq:IRR}
I(R;\tR) = \sum_{1 \leq i\leq j\leq r}I(R_{ij};\tR_{ij}) \leq \pth{\frac{r(r-1)}{2} + \frac{r}{2}} \log^+ \frac{1}{D} = \frac{r^2}{2}  \log^+ \frac{1}{D}.
\end{equation}
To bound the overall distortion for $W$, note that
\begin{align}
\fnorm{W-\tW}^2
\leq & ~ 2\fnorm{(V-\tV)R}^2 + 2\fnorm{V(R-\tR)}^2 \label{eq:b1}\\
\leq & ~ 2\fnorm{V-\tV}^2 \|R\|^2 + 2\fnorm{R-\tR}^2. \label{eq:b2}
\end{align}
Note that $\|R\|=\|W\|$ is the largest singular value of an $n\times r$ standard complex Gaussian matrix. It is well-known that\footnote{This follows from Gordon's inequality and Davidson-Szarek theorem 
cf.~\eg,\cite[Theorems 2.6 and 2.11]{Davidson01}}
$\Expect \|R\|^2 \leq c_1 n$ for some absolute constant $c_1$.
Since $R$ is independent of $\{V,\tV\}$, in view of \prettyref{eq:Vcompress} and \prettyref{eq:ERR} we have
\begin{equation}
\Expect \|W-\tW\|^2  \leq 2\Expect\fnorm{V-\tV}^2 \Expect\|R\|^2 + 2\Expect\fnorm{R-\tR}^2\leq 8c_1 nrD + r(r+1)D \leq c_2 nrD	
	\label{eq:b3}
\end{equation}
for $c_2=8c_1+1$.
In view of \prettyref{eq:IWW} and \prettyref{eq:Ivv}--\prettyref{eq:IRR},
we have
\[
I(V;\hV) \geq nr \log \frac{1}{c_2 D} - \frac{r^2}{2} \log  \frac{c_0}{D} -\frac{r^2}{2}  \log \frac{1}{D} \geq (n-r)r \log \frac{c}{ D}
\]
for some universal constant $c$.
\end{proof}

\begin{proof}[Proof of \prettyref{thm:RDT}]
	The proof follows the same course of proving \prettyref{thm:RDT} (with $n=r$) and is even simpler, since the loss function is the usual Frobenius norm and hence there is no need to introduce the rotation matrix $A$. Recall that $W$ is a $r\times r$ complex Gaussian matrix and $W=TR$ be its QR  decomposition. Given any feasible $P_{\tT|T}$ in \prettyref{eq:RDT}, let $\tW=\tT\tR$, where $\tR$ fulfills \prettyref{eq:IRR} and \prettyref{eq:ERR}. Then similar to \prettyref{eq:b1}-\prettyref{eq:b3}, we have $\Expect \fnorm{W-\hW}^2 \leq 2 \Expect \fnorm{R-\hR}^2 + 2 \Expect \fnorm{T-\hT}^2 \Expect \|R\|^2 \leq C r^2 D$ for some absolute constant $C$. Similar to \prettyref{eq:Ivv}, we have $I(W;\tW) \leq I(T;\tT) + I(R;\tR)$. Then the desired lower bound \prettyref{eq:RDT} on $I(T;\tT)$ follows from \prettyref{eq:IRR} and \prettyref{eq:IWW} with $c'=1/C$.
\end{proof}

Now we are ready to prove the main lower bound for robust reconstruction of rank-$r$ matrices:
\begin{proof}[Proof of \prettyref{thm:lb}]
As in \prettyref{fig:bayesian}, let $\hX = \hU\hV^*=g(f(UV^*)+Z)$ be the reconstruction, where $U,\hU\in \sfV(\complex^{n_1},r)$ and $V,\hV\in \sfV(\complex^{n_2},r)$.
Note that
\begin{align}
\fnorm{UV^* - \hU\hV^*}^2	
= & ~ 2 r^2 - 2 \mathrm{Re} \Iprod{\hU^*U}{\hV^* V}	\nonumber \\
\geq & ~ 2 r^2 - 2 \fnorm{\hU^*U} \fnorm{\hV^* V} \geq 2r^2 - 2 r \fnorm{\hU^* U}, \label{eq:XhX}
\end{align}
where we used Cauchy-Schwarz inequality and $\fnorm{\hV^* V}^2 = \Iprod{\hV\hV^*}{VV^*} \leq \fnorm{\hV\hV^*}\fnorm{VV^*} =r$.
On the other hand,
 $\fnorm{UU^* - \hU\hU^*}^2 = 2r^2 -2\fnorm{\hat U^* U}^2$. Substituting into \prettyref{eq:XhX} and using the fact that $\sqrt{1-x} \leq 1-x/2$ for all $0\leq x \leq 1$, we obtain
 $\fnorm{UU^* - \hU\hU^*}^2 \leq 2 \fnorm{UV^* - \hU\hV^*}^2	$.
 By the assumption that $\Expect \fnorm{UV^*-\hU\hV^*}^2 \leq r D$, we have
	\begin{equation}
	\Expect \fnorm{UU^*-\hU\hU^*}^2 \leq 2 r D, \quad ,\Expect \fnorm{VV^*-\hV\hV^*}^2 \leq 2 r D,
	\label{eq:uuvv}
\end{equation}
where the second inequality follows analogously.
Note that $UU^*$ is in one-to-one correspondence to the column span of $X$.
Furthermore, given $UU^*$, let $\tilde U \in \sfV(\complex^n,r)$ be an arbitrary basis of the column span, so that $\tU\tU^* = UU^*$. In other words, $\tU$ is a deterministic function of $UU^*$.\footnote{To be definitive, one can obtain $\tU=[\tu_1,\ldots,\tu_r]$ recursively as follows: let $\tu_1$ be the normalized first column of $UU^*$, and for $i=2,\ldots,r$ then let $\tu_i$ be the normalized first nonzero column of $P_{E_i^\perp} UU^* P_{E_i^\perp}$, where $E_i=\text{span}(u_1,\ldots,u_{i-1})$.}
Similarly, define $\tV$ as a function of $VV^*$.

	In view of the Markov chain $X \to Y \to \hY \to \hX$, we obtain the following inequalities in parallel to the joint-source-channel-coding converse in Shannon theory:
	\begin{align}
&~m \log \pth{1+ 1/\sigma^2} \nonumber \\
\geq & ~ I(Y; Y+Z)	\label{eq:ms1}\\
\geq & ~ I(X; \hX)
\label{eq:ms2}	\\
= & ~ I(X,UU^*,VV^*; \hX) = I(UU^*,VV^*; \hX) + I(X;\hX|UU^*,VV^*) \label{eq:ms3}	\\
= & ~ I(UU^*,VV^*; \hX) + I(T;\tT|UU^*,VV^*) \label{eq:ms4}\\
\geq & ~ I(UU^*; \hX) + I(VV^*;\hX) + I(T;\hT) \label{eq:ms5} \\
\geq & ~ I(UU^*; \hU\hU^*) + I(VV^*; \hV\hV^*) + I(T;\hT) \label{eq:ms6}	\\
\geq & ~ \inf_{\Expect \fnorm{UU^* - \hU\hU^*}^2 \leq  2rD} I(UU^*; \hU\hU^*|\supp{U}) + \inf_{\Expect \fnorm{VV^* - \hV\hV^*}^2 \leq 2rD} I(VV^*; \hV\hV^*|\supp{V}) \nonumber \\
 &~+ \inf_{\Expect \fnorm{T - \hT}^2 \leq  rD} I(T; \hT)   \label{eq:ms7}\\
\geq & ~ ((s_1-r)r+(s_2-r)r) \log \frac{c}{2D} + \frac{r^2}{2} \log \frac{c'}{D},  \label{eq:ms8}
\end{align}
	where
	\begin{itemize}
	\item \prettyref{eq:ms1}: by the complex Gaussian channel capacity formula with the average power constraint $\Expect \|Y\|_2^2 \leq m$, which is guaranteed by \prettyref{eq:avg-power};
	\item \prettyref{eq:ms2}: by the data processing inequality for mutual information;
	\item \prettyref{eq:ms3}: by the fact the pair $(UU^*,VV^*)$ is a deterministic function of $X=UV^*$;
	\item \prettyref{eq:ms4}: we defined  $T \triangleq \tU^* X \tV, \hT \triangleq \tU^* \hX \tV$;
	\item \prettyref{eq:ms5}: by the mutual independence of $T$, $UU^*$ and $VV^*$ and the property of mutual information $I(A,B;C) \geq I(A;C)+I(B;C)$ whenever $A$ and $B$ are independent;
	\item \prettyref{eq:ms6}: by the fact the pair $(\hU\hU^*,\hV\hV^*)$ is a deterministic function of $\hX$;
	\item	\prettyref{eq:ms7}: by \prettyref{eq:uuvv} and the fact that $\fnorm{T-\hT} = \fnorm{\tU^*(X-\hX)\tV} \leq \fnorm{X-\hX}$, where the infima are over $P_{\hU\hU^*|UU^*,\supp{U}}$,  $P_{\hV\hV^*|VV^*,\supp{V}}$  and $P_{\hT|T}$ respectively;
	\item	\prettyref{eq:ms8}: conditioned on $\supp{U}$ (resp.~$\supp{V}$), the non-zeros of $U$ (resp.~$V$) is uniform on the Stifled manifold of dimension $r$ in $\complex^{s_1}$ (resp.~$\complex^{s_2}$). Furthermore,
	$T$ is uniform over $O(r)$. Applying \prettyref{thm:RDv} and \prettyref{thm:RDT} yields the desired lower bound. \qedhere
\end{itemize}
\end{proof}

\section*{Acknowledgements}
K. Lee and Y. Bresler are supported in part by the National Science Foundation under Grants CCF 10-18789, CCF 10-18660, and IIS 14-47879. K. Lee would like to thank Marius Junge and Angelia Nedi\'c for discussions related to this paper. The authors thank J.A. Geppert and F. Krahmer for their comments on an earlier version \cite{LeeWB2013}.

\appendices

\section{Notation}

We will use the following notation in the appendix:
For $A_1 \in \bbC^{n_1 \times n_1}$ and $A_2 \in \bbC^{n_2 \times n_2}$,
$A_1 \otimes A_2: \bbC^{n_1 \times n_2} \to \bbC^{n_1 \times n_2}$ denotes the linear operator defined by $(A_1 \otimes A_2) X = A_1 X A_2^*$ for all $X \in \bbC^{n_1 \times n_2}$.

\section{RIP Lemmas}
In this section, we present a set of lemmas as consequences of the rank-2 and doubly $(3s_1,3s_2)$-sparse RIP of $\A$ with isometry constant $\delta$.

\begin{lemma}
\label{lemma:ds_RIP2FG}
Let $F: \bbC^{n_2} \to \bbC^{m \times n_1}$ and $G: \bbC^{n_1} \to \bbC^{m \times n_2}$ be defined from $\A$ by (\ref{eq:defFnG}).
Then, $F$ and $G$ satisfy
\begin{equation}
\norm{\Pi_{\widetilde{J}_1} ([F(y)]^* [F(\zeta)] - \langle \zeta, y \rangle \idvr) \Pi_{\widetilde{J}_1}} \leq \delta \norm{y}_2 \norm{\zeta}_2,
\quad \forall \widetilde{J}_1 \subset [n_1],~ |\widetilde{J}_1| \leq 3s_1
\label{eq:lemma:ds_RIP2FG:resF}
\end{equation}
for all $y, \zeta \in \bbC^{n_2}$ with $\norm{y}_0 + \norm{\zeta}_0 \leq 3s_2$, and
\begin{equation}
\norm{\Pi_{\widetilde{J}_2} ([G(x)]^* [G(\xi)] - \langle \xi, x \rangle \idvc) \Pi_{\widetilde{J}_2}} \leq \delta \norm{x}_2 \norm{\xi}_2,
\quad \forall \widetilde{J}_2 \subset [n_2],~ |\widetilde{J}_2| \leq 3s_2
\label{eq:lemma:ds_RIP2FG:resG}
\end{equation}
for all $x, \xi \in \bbC^n$ with $\norm{x}_0 + \norm{\xi}_0 \leq 3s_1$, respectively.
\end{lemma}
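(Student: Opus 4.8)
The plan is to deduce both estimates from the near-orthogonality of $\A$ implied by the rank-$2$ and $(3s_1,3s_2)$-sparse RIP. I focus on \prettyref{eq:lemma:ds_RIP2FG:resF}; the bound for $G$ is entirely parallel. Fix $\widetilde{J}_1 \subset [n_1]$ with $|\widetilde{J}_1| \le 3s_1$ and $y,\zeta \in \bbC^{n_2}$ with $\norm{y}_0 + \norm{\zeta}_0 \le 3s_2$. Since for any matrix $N \in \bbC^{n_1\times n_1}$ one has $\norm{\Pi_{\widetilde{J}_1} N \Pi_{\widetilde{J}_1}} = \sup\{ |c^* N d| : \norm{c}_2 = \norm{d}_2 = 1,~ \supp{c},\supp{d} \subseteq \widetilde{J}_1 \}$, it suffices to bound $|c^*([F(y)]^*[F(\zeta)] - \iprod{\zeta}{y}\idvr) d|$ for all such $c,d$. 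Using $[F(w)]a = \A(aw^*)$ from \prettyref{eq:defFnG}, the first term equals $\iprod{[F(y)]c}{[F(\zeta)]d} = \iprod{\A(cy^*)}{\A(d\zeta^*)}$, while a one-line trace computation gives $\iprod{cy^*}{d\zeta^*} = \iprod{\zeta}{y}(c^*d)$, which is precisely the second term. Hence the expression to be bounded equals $|\iprod{\A(cy^*)}{\A(d\zeta^*)} - \iprod{cy^*}{d\zeta^*}|$, and since $\fnorm{cy^*} = \norm{c}_2\norm{y}_2$ and $\fnorm{d\zeta^*} = \norm{d}_2\norm{\zeta}_2$, it remains to prove the near-orthogonality bound $|\iprod{\A(Z_1)}{\A(Z_2)} - \iprod{Z_1}{Z_2}| \le \delta \fnorm{Z_1}\fnorm{Z_2}$ for $Z_1 = cy^*$ and $Z_2 = d\zeta^*$.

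For this, observe that for every unit-modulus scalar $e^{i\psi}$ the matrix $Z_1 \pm e^{i\psi} Z_2$ has rank at most $2$, has nonzero rows indexed within $\supp{c} \cup \supp{d} \subseteq \widetilde{J}_1$ (hence at most $3s_1$ nonzero rows), and has nonzero columns indexed within $\supp{y} \cup \supp{\zeta}$ (hence at most $\norm{y}_0 + \norm{\zeta}_0 \le 3s_2$ nonzero columns); thus the rank-$2$ and $(3s_1,3s_2)$-sparse RIP of $\A$ applies to it. After normalizing $Z_1,Z_2$ to unit Frobenius norm, I choose $\psi$ so that $e^{-i\psi}(\iprod{\A(Z_1)}{\A(Z_2)} - \iprod{Z_1}{Z_2})$ is real and nonnegative (that is, I absorb $e^{i\psi}$ into $d$). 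Writing $Z_2' = e^{i\psi}Z_2$, that real number equals $\Real\iprod{\A(Z_1)}{\A(Z_2')} - \Real\iprod{Z_1}{Z_2'}$, and the parallelogram identity $\Real\iprod{\A(Z_1)}{\A(Z_2')} = \tfrac14(\norm{\A(Z_1+Z_2')}_2^2 - \norm{\A(Z_1-Z_2')}_2^2)$ together with the two-sided RIP bounds applied to $Z_1 \pm Z_2'$ yields that it is at most $\delta$; rescaling restores the factor $\fnorm{Z_1}\fnorm{Z_2}$. Taking the supremum over unit $c,d$ supported on $\widetilde{J}_1$ gives \prettyref{eq:lemma:ds_RIP2FG:resF}.

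For \prettyref{eq:lemma:ds_RIP2FG:resG} I repeat the argument with $[G(a)]w = \overline{\A(aw^*)}$ in place of $[F(w)]a = \A(aw^*)$: the additional complex conjugations change no modulus or Frobenius norm, and now the auxiliary rank-$2$ matrices $xc^* \pm e^{i\psi}\xi d^*$ have at most $\norm{x}_0 + \norm{\xi}_0 \le 3s_1$ nonzero rows and at most $3s_2$ nonzero columns (the roles of the two sparsity levels are interchanged), so the same RIP hypothesis applies. The only steps that need care are checking that these auxiliary matrices genuinely meet both sparsity budgets so that the \emph{doubly}-sparse RIP is usable, and extracting the sharp constant $\delta$ rather than $\sqrt{2}\,\delta$ in the complex case --- which is exactly what the phase-rotation before polarization accomplishes --- but neither presents a genuine obstacle.
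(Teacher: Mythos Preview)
Your proof is correct and follows essentially the same strategy as the paper's: both reduce the claimed bound to the near-orthogonality $|\langle \A(Z_1), \A(Z_2)\rangle - \langle Z_1, Z_2\rangle| \leq \delta\fnorm{Z_1}\fnorm{Z_2}$ for rank-one matrices $Z_1=cy^*$, $Z_2=d\zeta^*$ whose sum lies in the rank-$2$ $(3s_1,3s_2)$-sparse model. The paper packages this step via the operator norm of the self-adjoint restriction $(\Pi_{\widetilde{J}_1}\otimes P_{\R(y)+\R(\zeta)})(\A^*\A-\idm)(\Pi_{\widetilde{J}_1}\otimes P_{\R(y)+\R(\zeta)})$, whereas you spell out the phase-rotation and polarization explicitly; the underlying content is the same.
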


\begin{proof}
Since (\ref{eq:lemma:ds_RIP2FG:resF}) is homogeneous in $y$ and $\zeta$, without loss of generality, we may assume that $\norm{y}_2 = \norm{\zeta}_2 = 1$.

Let $u,\tilde{u} \in \bbC^n$ satisfy $\norm{u}_2 = \norm{\tilde{u}}_2 = 1$. Let $v,\tilde{v} \in \bbC^d$ satisfy $\norm{v}_2 = \norm{\tilde{v}}_2 = 1$. Then,
\begin{align*}
{} & \langle uv^*, (\Pi_{\widetilde{J}_1} \otimes P_{\R(y)}) (\A^* \A) (\Pi_{\widetilde{J}_1} \otimes P_{\R(\zeta)}) (\tilde{u} \tilde{v}^*) \rangle \\
{} & = \langle uv^*, (\Pi_{\widetilde{J}_1} \otimes y y^*) (\A^* \A) (\Pi_{\widetilde{J}_1} \otimes \zeta \zeta^*) (\tilde{u} \tilde{v}^*) \rangle \\
{} & = \langle (\Pi_{\widetilde{J}_1} \otimes y y^*) uv^*, (\A^* \A) (\Pi_{\widetilde{J}_1} \otimes \zeta \zeta^*) (\tilde{u} \tilde{v}^*) \rangle \\
{} & = \langle \A (\Pi_{\widetilde{J}_1} \otimes y y^*) (uv^*), \A (\Pi_{\widetilde{J}_1} \otimes \zeta \zeta^*) (\tilde{u} \tilde{v}^*) \rangle \\
{} & = \langle \A (\Pi_{\widetilde{J}_1} u v^* y y^*), \A (\Pi_{\widetilde{J}_1} \tilde{u} \tilde{v}^* \zeta \zeta^*) \rangle \\
{} & = (y^* v) (\tilde{v}^* \zeta) \langle [F(y)] \Pi_{\widetilde{J}_1} u, [F(\zeta)] \Pi_{\widetilde{J}_1} \tilde{u} \rangle \\
{} & = (y^* v) (\tilde{v}^* \zeta) \langle u, \Pi_{\widetilde{J}_1} [F(y)]^* [F(\zeta)] \Pi_{\widetilde{J}_1} \tilde{u} \rangle.
\end{align*}
Similarly,
\begin{align*}
{} & \langle uv^*, (\Pi_{\widetilde{J}_1} \otimes P_{\R(y)}) (\Pi_{\widetilde{J}_1} \otimes P_{\R(\zeta)}) (\tilde{u} \tilde{v}^*) \rangle \\
{} & = (y^* v) (\tilde{v}^* \zeta) \langle \zeta, y \rangle \langle u, \Pi_{\widetilde{J}_1} \tilde{u} \rangle.
\end{align*}
Therefore,
\begin{equation}
\label{eq:cor:RIP2F:identity1}
\begin{aligned}
{} & \langle uv^*, (\Pi_{\widetilde{J}_1} \otimes P_{\R(y)}) (\A^* \A - \idm) (\Pi_{\widetilde{J}_1} \otimes P_{\R(\zeta)}) (\tilde{u} \tilde{v}^*) \rangle \\
{} & = (y^* v) (\tilde{v}^* \zeta) \langle u, \Pi_{\widetilde{J}_1} ([F(y)]^* [F(\zeta)] - \langle \zeta, y \rangle \idvr) \Pi_{\widetilde{J}_1} \tilde{u} \rangle.
\end{aligned}
\end{equation}

In fact, the operator norm of $(\Pi_{\widetilde{J}_1} \otimes P_{\R(y)}) (\A^* \A - \idm) (\Pi_{\widetilde{J}_1} \otimes P_{\R(\zeta)})$ is achieved by maximizing the left hand side of (\ref{eq:cor:RIP2F:identity1}) over $u$ and $\tilde{u}$ supported on $\widetilde{J}_1$ for $v = y$ and $\tilde{v} = \zeta$.

Therefore,
\begin{align*}
{} & \norm{\Pi_{\widetilde{J}_1} ([F(y)]^* [F(\zeta)] - \langle \zeta, y \rangle \idvr) \Pi_{\widetilde{J}_1}} \\
{} & = \norm{(\Pi_{\widetilde{J}_1} \otimes P_{\R(y)}) (\A^* \A - \idm) (\Pi_{\widetilde{J}_1} \otimes P_{\R(\zeta)})} \\
{} & = \norm{(\Pi_{\widetilde{J}_1} \otimes P_{\R(y)+\R(\zeta)}) (\A^* \A - \idm) (\Pi_{\widetilde{J}_1} \otimes P_{\R(y)+\R(\zeta)})} \\
{} & \leq \delta
\end{align*}
where the last step follows since $(\A^* \A - \idm)$ is restricted to a set of rank-2 and doubly $(3s_1,3s_2)$-sparse matrices.

The inequality in (\ref{eq:lemma:ds_RIP2FG:resF}) is derived similarly using the following identity
\begin{equation}
\label{eq:cor:RIP2F:identity2}
\begin{aligned}
{} & \langle uv^*, (P_{\R(x)} \otimes \Pi_{\widetilde{J}_2}) (\A^* \A - \idm) (P_{\R(\xi)} \otimes \Pi_{\widetilde{J}_2}) (\tilde{u} \tilde{v}^*) \rangle \\
{} & = (u^* x) (\xi^* \tilde{u}) \langle \tilde{v}, \Pi_{\widetilde{J}_2} ([G(\xi)]^* [G(x)] - \langle x, \xi \rangle \idvc) \Pi_{\widetilde{J}_2} v \rangle.
\end{aligned}
\end{equation}
\end{proof}

\begin{lemma}
\label{lemma:ds_bndopnormrip}
Let $\widetilde{J}_1 \subset [n_1]$ and $\widetilde{J}_2 \subset [n_2]$ satisfy $|\widetilde{J}_1| \leq 2s_1$ and $|\widetilde{J}_2| \leq 2s_2$, respectively.
Suppose that $X \in \bbC^{n_1 \times n_2}$ is a doubly $(s_1,s_2)$-sparse rank-one matrix.
Then,
\begin{equation}
\norm{\Pi_{\widetilde{J}_1} [(\A^* \A - \idm) (X)] \Pi_{\widetilde{J}_2}} \leq \delta \fnorm{X}.
\label{eq:lemma:ds_bndopnormrip:res}
\end{equation}
\end{lemma}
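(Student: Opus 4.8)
The plan is to reduce the spectral norm to a bilinear form, linearize $\A^*\A-\idm$ against a rank-one test matrix, and then invoke the rank-$2$, $(3s_1,3s_2)$-sparse RIP through the polarization identity. By homogeneity in $X$ we may assume $\fnorm{X}=1$. First I would use that for any $N\in\bbC^{n_1\times n_2}$,
\[
\opnorm{\Pi_{\widetilde{J}_1} N \Pi_{\widetilde{J}_2}}
= \sup\big\{\,|\iprod{ab^*}{N}| :\ a\in\bbS^{n_1-1},\ \supp{a}\subseteq\widetilde{J}_1,\ b\in\bbS^{n_2-1},\ \supp{b}\subseteq\widetilde{J}_2\,\big\},
\]
since $a^*Nb=\iprod{ab^*}{N}$ and restricting $a,b$ to the coordinate sets $\widetilde{J}_1,\widetilde{J}_2$ attains the norm of $\Pi_{\widetilde{J}_1} N \Pi_{\widetilde{J}_2}$. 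Taking $N=(\A^*\A-\idm)(X)$ and using the adjoint identity $\iprod{ab^*}{\A^*\A(X)}=\iprod{\A(ab^*)}{\A(X)}$, the quantity to bound becomes $|\iprod{\A(ab^*)}{\A(X)}-\iprod{ab^*}{X}|$ over all such $a,b$.

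Next comes the support/rank bookkeeping. Since $\supp{a}\subseteq\widetilde{J}_1$ with $|\widetilde{J}_1|\le 2s_1$ and $\supp{b}\subseteq\widetilde{J}_2$ with $|\widetilde{J}_2|\le 2s_2$, the matrix $ab^*$ is rank one with at most $2s_1$ nonzero rows and at most $2s_2$ nonzero columns; since $X$ is rank one and doubly $(s_1,s_2)$-sparse, the matrices $ab^*\pm X$ are of rank at most $2$, with at most $3s_1$ nonzero rows and at most $3s_2$ nonzero columns, regardless of how the supports of $ab^*$ and $X$ overlap. Hence the rank-$2$, $(3s_1,3s_2)$-sparse RIP of $\A$ applies to both $ab^*+X$ and $ab^*-X$, giving $\big|\norm{\A(ab^*\pm X)}_2^2-\fnorm{ab^*\pm X}^2\big|\le\delta\,\fnorm{ab^*\pm X}^2$.

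Finally I would rotate the phase of $a$ (replacing $a$ by $e^{-i\vartheta}a$, which preserves $\norm{a}_2$, $\supp{a}$, and the rank of $ab^*$) so that $\iprod{\A(ab^*)}{\A(X)}-\iprod{ab^*}{X}$ is real and nonnegative, and then apply the real polarization identity,
\[
\Real\iprod{\A(ab^*)}{\A(X)}-\Real\iprod{ab^*}{X}
= \tfrac14\big[(\norm{\A(ab^*+X)}_2^2-\fnorm{ab^*+X}^2)-(\norm{\A(ab^*-X)}_2^2-\fnorm{ab^*-X}^2)\big].
\]
Bounding each bracket by the RIP inequality and using the parallelogram law $\fnorm{ab^*+X}^2+\fnorm{ab^*-X}^2=2(\fnorm{ab^*}^2+\fnorm{X}^2)=4$ gives $|\iprod{ab^*}{(\A^*\A-\idm)(X)}|\le\tfrac{\delta}{4}\cdot 4=\delta$; taking the supremum over $a,b$ and undoing the normalization $\fnorm{X}=1$ yields \prettyref{eq:lemma:ds_bndopnormrip:res}. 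There is no genuine obstacle here — the argument is elementary once the RIP is available; the only points requiring care are the phase rotation needed to pass to the real polarization identity over $\bbC$ (crucial because the RIP only controls $\norm{\A(\cdot)}_2^2$, not bilinear forms directly), and the verification that $ab^*\pm X$ never leaves the rank-$2$, $(3s_1,3s_2)$-sparse class, which is exactly why the hypothesis is stated at level $3s_j=2s_j+s_j$ rather than $2s_j$.
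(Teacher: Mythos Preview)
Your proof is correct and follows essentially the same approach as the paper's: both reduce the spectral norm to a bilinear form $|\langle \xi\zeta^*,(\A^*\A-\idm)(X)\rangle|$ with unit vectors $\xi,\zeta$ supported on $\widetilde{J}_1,\widetilde{J}_2$, and then invoke the rank-$2$, $(3s_1,3s_2)$-sparse RIP on the combined supports. The only cosmetic difference is that the paper bounds this bilinear form by the operator norm of the self-adjoint operator $\A^*\A-\idm$ restricted to the subspace $(P_{\R(\xi)+\R(u)}\otimes P_{\R(\zeta)+\R(v)})(\bbC^{n_1\times n_2})$, whereas you carry out the polarization identity explicitly with a phase rotation.
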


\begin{proof}
Suppose that $\xi \in \bbS^{n_1-1}$ and $\zeta \in \bbS^{n_2-1}$ are supported on $\widehat{J}_1$ and $\widehat{J}_2$, respectively.
Since (\ref{eq:lemma:ds_bndopnormrip:res}) is homogeneous, without loss of generality, we may assume that $\fnorm{X} = 1$.
Let $X = \lambda u v^*$ denote the singular value decomposition of $X$.
Then, $X = P_{\R(u)} X P_{\R(v)} = (P_{\R(u)} \otimes P_{\R(v)}) (X)$; hence,
\begin{align*}
|\langle \xi \zeta^*, \Pi_{\widetilde{J}_1} [(\A^* \A - \idm) (X)] \Pi_{\widetilde{J}_2} \rangle|
{} & = |\langle P_{\R(\xi)} \xi \zeta^* P_{\R(\zeta)}, (\A^* \A - \idm) (X) \rangle| \\
{} & = |\langle \xi \zeta^*, (P_{\R(\xi)} \otimes P_{\R(\zeta)}) (\A^* \A - \idm) (P_{\R(u)} \otimes P_{\R(v)}) X \rangle| \\
{} & \leq \norm{(P_{\R(\xi)} \otimes P_{\R(\zeta)}) (\A^* \A - \idm) (P_{\R(u)} \otimes P_{\R(X^*)})} \\
{} & \leq \norm{(P_{\R(\xi) + \R(u)} \otimes P_{\R(\zeta) + \R(v)}) (\A^* \A - \idm) (P_{\R(\xi) + \R(u)} \otimes P_{\R(\zeta) + \R(v)})} \\
{} & \leq \delta.
\end{align*}
Maximizing this inequality over $\xi$ and $\zeta$ yields the desired claim.
\end{proof}

\begin{lemma}
\label{lemma:ds_bndnoise}
Let $\widetilde{J}_1 \subset [n_1]$ and $\widetilde{J}_2 \subset [n_2]$ satisfy $|\widetilde{J}_1| \leq 3s_1$ and $|\widetilde{J}_2| \leq 3s_2$, respectively.
Then,
\begin{equation*}
\norm{\Pi_{\widetilde{J}_1} [\A^* (z)] \Pi_{\widetilde{J}_2}} \leq \sqrt{1+\delta} \norm{z}_2, \quad \forall z \in \bbC^m.
\end{equation*}
\end{lemma}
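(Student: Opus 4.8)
The plan is to reduce the operator-norm bound to a single application of the upper RIP inequality by writing the spectral norm variationally. Recall that for any matrix $M\in\bbC^{n_1\times n_2}$ one has $\norm{M} = \sup\{\,|\langle \xi\zeta^*, M\rangle| : \xi\in\bbS^{n_1-1},~\zeta\in\bbS^{n_2-1}\,\}$. Applying this with $M = \Pi_{\widetilde{J}_1}[\A^*(z)]\Pi_{\widetilde{J}_2}$, and noting that this matrix is supported on rows in $\widetilde{J}_1$ and columns in $\widetilde{J}_2$, it suffices to restrict the supremum to unit vectors $\xi$ supported on $\widetilde{J}_1$ and $\zeta$ supported on $\widetilde{J}_2$, so that $\Pi_{\widetilde{J}_1}\xi = \xi$ and $\Pi_{\widetilde{J}_2}\zeta = \zeta$.

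For such $\xi,\zeta$, I would move the coordinate projections onto the test matrix and then pass to the adjoint:
\[
\langle \xi\zeta^*, \Pi_{\widetilde{J}_1}[\A^*(z)]\Pi_{\widetilde{J}_2}\rangle
= \langle \Pi_{\widetilde{J}_1}\xi\zeta^*\Pi_{\widetilde{J}_2}, \A^*(z)\rangle
= \langle \xi\zeta^*, \A^*(z)\rangle
= \langle \A(\xi\zeta^*), z\rangle.
\]
By Cauchy--Schwarz this is at most $\norm{\A(\xi\zeta^*)}_2\,\norm{z}_2$. Now $\xi\zeta^*$ is rank-one, has at most $|\widetilde{J}_1|\le 3s_1$ nonzero rows and at most $|\widetilde{J}_2|\le 3s_2$ nonzero columns, and $\fnorm{\xi\zeta^*} = \norm{\xi}_2\norm{\zeta}_2 = 1$. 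Hence the rank-$2$ and $(3s_1,3s_2)$-sparse RIP of $\A$ (Definition~\ref{def:rip}, which in particular covers rank-one matrices) gives $\norm{\A(\xi\zeta^*)}_2^2 \le (1+\delta)\fnorm{\xi\zeta^*}^2 = 1+\delta$.

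Combining the two displays yields $|\langle \xi\zeta^*, \Pi_{\widetilde{J}_1}[\A^*(z)]\Pi_{\widetilde{J}_2}\rangle| \le \sqrt{1+\delta}\,\norm{z}_2$ for every admissible $\xi,\zeta$, and taking the supremum over them establishes the claimed bound. There is essentially no hard step here; the only points requiring mild care are justifying that the supremum may be restricted to vectors supported on $\widetilde{J}_1$ and $\widetilde{J}_2$ (immediate from the support structure of $\Pi_{\widetilde{J}_1}[\A^*(z)]\Pi_{\widetilde{J}_2}$) and invoking the RIP in its rank-one, sparsity-$(3s_1,3s_2)$ form, which is a special case of the stated rank-$2$ RIP.
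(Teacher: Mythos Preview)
Your proposal is correct and is essentially identical to the paper's own proof: both use the variational characterization of the spectral norm, move the projections onto the rank-one test matrix, pass to the adjoint, apply Cauchy--Schwarz, and invoke the upper RIP bound on the resulting rank-one $(3s_1,3s_2)$-sparse matrix. The only cosmetic difference is that the paper keeps $\Pi_{\widetilde{J}_1}\xi\zeta^*\Pi_{\widetilde{J}_2}$ throughout (bounding its Frobenius norm by $1$) rather than first restricting $\xi,\zeta$ to the supports as you do.
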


\begin{proof}
Let $\xi \in \bbS^{n_1-1}$ and $\zeta \in \bbS^{n_2-1}$.
\begin{align*}
|\langle \xi \zeta^*, \Pi_{\widetilde{J}_1} [\A^* (z)] \Pi_{\widetilde{J}_2} \rangle|
{} & = |\langle \A(\Pi_{\widetilde{J}_1} \xi \zeta^* \Pi_{\widetilde{J}_2}), z \rangle| \\
{} & \leq \norm{\A(\Pi_{\widetilde{J}_1} \xi \zeta^* \Pi_{\widetilde{J}_2})}_2 \norm{z}_2 \\
{} & \leq \sqrt{1+\delta} \fnorm{\Pi_{\widetilde{J}_1} \xi \zeta^* \Pi_{\widetilde{J}_2}} \norm{z}_2 \\
{} & \leq \sqrt{1+\delta} \norm{z}_2.
\end{align*}
Maximizing this inequality over $\xi$ and $\zeta$ yields the desired claim.
\end{proof}

The following lemma is a consequence of the rank-$2r$ and doubly $(3s_1,3s_2)$-sparse RIP of $\A$ with isometry constant $\delta$.
\begin{lemma}
\label{lemma:reduced_rip}
Suppose that $\calA: \cz^{n_1 \times n_2} \to \cz^m$ satisfies the rank-$2r$ and doubly $(s_1,s_2)$-sparse RIP with isometry constant $\delta$.
Let $\calF$ and $\calG$ be defined from $\calA$, respectively, by \eqref{eq:def_calF} and \eqref{eq:def_calG}.
Let $\calH_1$ (resp. $\calH_2$) denote the Hilbert space of $n_1$-by-$r$ (resp. $n_2$-by-$r$) matrices with the Frobenius norm.
Fix $r$ be an arbitrary positive integer satisfying $r \leq \min(n_1,n_2)$.
Then, the followings holds:
\begin{enumerate}
  \item For all $V \in \cz^{n_2 \times r}$ such that $V^* V = I_r$ and $V$ is row $s_2$-sparse, the linear operator $[\calF(V)]^* [\calF(V)]$ on $\calH_1$ satisfies
  \[
  \max_{J_1 \subset [n_1] : |J_1| \leq s_1} \norm{(\Pi_{J_1} \otimes I_r) ([\calF(V)]^*[\calF(V)] - \id_{\cz^{n_1 \times r}}) (\Pi_{J_1} \otimes I_r)}_{\calH_1 \to \calH_1} \leq \delta.
  \]
  \item For all $V, \widetilde{V} \in \cz^{n_2 \times r}$ such that $\widetilde{V}^* V = 0$ and $[V, \widetilde{V}]$ is row $s_2$-sparse, the linear operator $[\calF(\widetilde{V})]^* [\calF(V)]$ on $\calH_1$ satisfies
  \[
  \max_{J_1 \subset [n_1] : |J_1| \leq s_1} \norm{(\Pi_{J_1} \otimes I_r) [\calF(\widetilde{V})]^*[\calF(V)] (\Pi_{J_1} \otimes I_r)}_{\calH_1 \to \calH_1} \leq \delta \norm{V} \norm{\widetilde{V}}.
  \]
  \item For all $U \in \cz^{n_1 \times r}$ such that $U^* U = I_r$ and $U$ is row $s_1$-sparse, the linear operator $[\calG(U)]^* [\calG(U)]$ on $\calH_2$ satisfies
  \[
  \max_{J_2 \subset [n_2] : |J_2| \leq s_2} \norm{(\Pi_{J_2} \otimes I_r) ([\calG(U)]^*[\calG(U)] - \id_{\cz^{n_2 \times r}}) (\Pi_{J_1} \otimes I_r)}_{\calH_2 \to \calH_2} \leq \delta.
  \]
  \item For all $U, \widetilde{U} \in \cz^{n_1 \times r}$ such that $\widetilde{U}^* U = 0$ and $[U, \widetilde{U}]$ is row $s_1$-sparse, the linear operator $[\calG(\widetilde{U})]^* [\calG(U)]$ on $\calH_2$ satisfies
  \[
  \max_{J_2 \subset [n_2] : |J_2| \leq s_2} \norm{(\Pi_{J_2} \otimes I_r) [\calG(\widetilde{U})]^*[\calG(U)] (\Pi_{J_2} \otimes I_r)}_{\calH_2 \to \calH_2} \leq \delta \norm{U} \norm{\widetilde{U}}.
  \]
\end{enumerate}
\end{lemma}

\begin{proof}[Proof of Lemma~\ref{lemma:reduced_rip}]
We only prove the first two results since the third and fourth results are derived by symmetry.

Fix an arbitrary $V \in \cz^{n_2 \times r}$ so that $V^* V = I_r$ and $V$ is row $s_2$-sparse.
Fix an arbitrary $J_1 \subset [n_1]$ so that $|J_1| \leq s_1$.
Fix arbitrary $U, \widetilde{U} \in \cz^{n_1 \times r}$. Then, we have
\begin{align*}
{} & \langle \widetilde{U}, (\Pi_{J_1} \otimes I_r) ([\calF(V)]^*[\calF(V)] - \id_{\cz^{n_1 \times r}}) (\Pi_{J_1} \otimes I_r) U \rangle \\
{} & = \langle (\Pi_{J_1} \otimes I_r) \widetilde{U}, ([\calF(V)]^*[\calF(V)] - \id_{\cz^{n_1 \times r}}) (\Pi_{J_1} \otimes I_r) U \rangle \\
{} & = \langle \Pi_{J_1} \widetilde{U}, ([\calF(V)]^*[\calF(V)] - \id_{\cz^{n_1 \times r}}) (\Pi_{J_1} U) \rangle \\
{} & = \langle \Pi_{J_1} \widetilde{U}, [\calF(V)]^*[\calF(V)] (\Pi_{J_1} U) \rangle - \langle \Pi_{J_1} \widetilde{U}, \Pi_{J_1} U \rangle \\
{} & = \langle [\calF(V)] (\Pi_{J_1} \widetilde{U}), [\calF(V)] (\Pi_{J_1} U) \rangle - \langle \Pi_{J_1} \widetilde{U}, \Pi_{J_1} U V^* V \rangle \\
{} & = \langle \calA(\Pi_{J_1} \widetilde{U} V^*), \calA(\Pi_{J_1} U V^*) \rangle - \langle \Pi_{J_1} \widetilde{U} V^*, \Pi_{J_1} U V^* \rangle \\
{} & = \langle \Pi_{J_1} \widetilde{U} V^*, \calA^* \calA(\Pi_{J_1} U V^*) \rangle - \langle \Pi_{J_1} \widetilde{U} V^*, \Pi_{J_1} U V^* \rangle \\
{} & = \langle \Pi_{J_1} \widetilde{U} V^*, (\calA^* \calA - \id) (\Pi_{J_1} U V^*) \rangle,
\end{align*}
where the fourth step holds since $V^* V = I_r$ and the fifth step holds by the definition of $\calF$.
Therefore, since $[\Pi_{J_1} U, \Pi_{J_1} \widetilde{U}] = \Pi_{J_1} [U, \widetilde{U}]$ is row $s_1$-sparse and $V$ is row $s_2$-sparse,
by the rank-$2r$ and doubly $(s_1,s_2)$-sparse RIP of $\calA$, it follows that
\begin{align*}
{} & |\langle \widetilde{U}, (\Pi_{J_1} \otimes I_r) ([\calF(V)]^*[\calF(V)] - \id_{\cz^{n_1 \times r}}) (\Pi_{J_1} \otimes I_r) U \rangle| \\
{} & \leq \delta \fnorm{\Pi_{J_1} U V^*} \fnorm{\Pi_{J_1} \widetilde{U} V^*} \\
{} & = \delta \fnorm{\Pi_{J_1} U} \fnorm{\Pi_{J_1} \widetilde{U}} \\
{} & \leq \delta \fnorm{U} \fnorm{\widetilde{U}}.
\end{align*}
By maximizing over $U$ and $\widetilde{U}$ within the unit ball in $\calH_1$, we have
\[
\norm{(\Pi_{J_1} \otimes I_r) ([\calF(V)]^*[\calF(V)] - \id_{\cz^{n_1 \times r}}) (\Pi_{J_1} \otimes I_r)}_{\calH_1 \to \calH_1} \leq \delta.
\]
By maximizing over $J_1$, we get the first result.

The second result is proved in a similar way.
Fix arbitrary $V, \widetilde{V} \in \cz^{n_2 \times r}$ so that $\langle \widetilde{V}, V \rangle = 0$ and $[V, \widetilde{V}]$ is row $s_2$-sparse.
Fix an arbitrary $J_1 \subset [n_1]$ so that $|J_1| \leq s_1$.
Fix arbitrary $U, \widetilde{U} \in \cz^{n_1 \times r}$.
Then, similar to the previous case, we have
\begin{align*}
{} & \langle \widetilde{U}, (\Pi_{J_1} \otimes I_r) [\calF(\widetilde{V})]^*[\calF(V)] (\Pi_{J_1} \otimes I_r) U \rangle \\
{} & = \langle \Pi_{J_1} \widetilde{U} \widetilde{V}^*, \calA^* \calA (\Pi_{J_1} U V^*) \rangle, \\
{} & = \langle \Pi_{J_1} \widetilde{U} \widetilde{V}^*, (\calA^* \calA - \id) (\Pi_{J_1} U V^*) \rangle,
\end{align*}
where the last step holds since $\langle \Pi_{J_1} \widetilde{U} \widetilde{V}^*, \Pi_{J_1} U V^* \rangle = 0$,
which follows from $\widetilde{V}^* V = 0$.
Therefore, since $[\Pi_{J_1} U, \Pi_{J_1} \widetilde{U}] = \Pi_{J_1} [U, \widetilde{U}]$ is row $s_1$-sparse and $[V, \widetilde{V}]$ is row $s_2$-sparse, by the rank-$2r$ and doubly $(s_1,s_2)$-sparse RIP of $\calA$, it follows that
\begin{align*}
{} & |\langle \widetilde{U}, (\Pi_{J_1} \otimes I_r) [\calF(\widetilde{V})]^*[\calF(V)] (\Pi_{J_1} \otimes I_r) U \rangle| \\
{} & \leq \delta \fnorm{\Pi_{J_1} U V^*} \fnorm{\Pi_{J_1} \widetilde{U} \widetilde{V}^*} \\
{} & = \delta \fnorm{\Pi_{J_1} U} \norm{V} \fnorm{\Pi_{J_1} \widetilde{U}} \norm{\widetilde{V}} \\
{} & \leq \delta \fnorm{U} \fnorm{\widetilde{U}} \norm{V} \norm{\widetilde{V}}.
\end{align*}
By maximizing over $U$ and $\widetilde{U}$ within the unit ball in $\calH_1$, we have
\[
\norm{(\Pi_{J_1} \otimes I_r) [\calF(\widetilde{V})]^*[\calF(V)] (\Pi_{J_1} \otimes I_r)}_{\calH_1 \to \calH_1} \leq \delta \norm{V} \norm{\widetilde{V}}.
\]
By maximizing over $J_1$, we get the second result.
\end{proof}

\section{Proof of Lemma~\ref{lemma:init_w_good_est_supp2}}
\label{sec:proof:lemma:init_w_good_est_supp2}

To prove Lemma~\ref{lemma:init_w_good_est_supp2}, we use the non-Hermitian $\sin\theta$ theorem \cite[pp. 102--103]{Wed1972perturbation}.

\begin{lemma}[{Non-Hermitian $\sin\theta$ theorem \cite{Wed1972perturbation}}]
\label{lemma:sintheta}
Let $M \in \bbC^{n \times d}$ be a rank-$r$ matrix.
Let $\widehat{M} \in \bbC^{n \times d}$ be the best rank-$r$ approximation of $M + \Delta$ in the Frobenius norm.
Then,
\[
\max\left(
\sin \theta(\R(M),\R(\widehat{M})), \sin \theta(\R(M^*),\R(\widehat{M}^*))
\right)
\leq \frac{\max(\norm{P_{\R(M)} \Delta},\norm{\Delta P_{\R(M^*)}})}{\sigma_r(M) - \norm{M + \Delta - \widehat{M}}}.
\]
\end{lemma}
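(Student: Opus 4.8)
The plan is to give a short self-contained argument tailored to the rank-one case, essentially a specialization of Wedin's general non-Hermitian $\sin\theta$ theorem \cite{Wed1972perturbation}. First I would fix notation: write $M=\sigma u_0 v_0^*$ with $\sigma=\norm{M}$ and $\norm{u_0}_2=\norm{v_0}_2=1$, so that $\R(M)=\mathrm{span}(u_0)$, $\R(M^*)=\mathrm{span}(v_0)$, $P_{\R(M)}=u_0u_0^*$, $P_{\R(M^*)}=v_0v_0^*$. Then the two numerator quantities collapse to norms of rank-one matrices: $\norm{P_{\R(M)}\Delta}=\norm{\Delta^*u_0}_2$ and $\norm{\Delta P_{\R(M^*)}}=\norm{\Delta v_0}_2$. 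I would let $(\hat\sigma_1,\hat u_1,\hat v_1)$ be a leading singular triple of $M+\Delta$, so $\widehat M=\hat\sigma_1\hat u_1\hat v_1^*$, and introduce the SVD tail $E\triangleq M+\Delta-\widehat M$, which satisfies $\norm{E}=\norm{M+\Delta-\widehat M}$ (the second singular value of $M+\Delta$), $E\hat v_1=0$, and $E^*\hat u_1=0$. I would also note $\sin\theta(\R(M),\R(\widehat M))=\norm{(I-\hat u_1\hat u_1^*)u_0}_2$ and $\sin\theta(\R(M^*),\R(\widehat M^*))=\norm{(I-\hat v_1\hat v_1^*)v_0}_2$, abbreviated $s_u,s_v$; and dispose of the degenerate case $\sigma\leq\norm{E}$ (the asserted bound is then vacuous), so assume $\sigma>\norm{E}$.

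Next I would derive two coupled scalar inequalities. Using only $\norm{v_0}_2=1$, one has $(M+\Delta)v_0=\sigma u_0+\Delta v_0$, and also $(M+\Delta)v_0=\widehat M v_0+Ev_0=\hat\sigma_1(\hat v_1^*v_0)\hat u_1+Ev_0$. Subtracting and applying $I-\hat u_1\hat u_1^*$ (which kills $\hat u_1$) gives $\sigma\,(I-\hat u_1\hat u_1^*)u_0=(I-\hat u_1\hat u_1^*)(E-\Delta)v_0$; decomposing $v_0$ along $\hat v_1$ and its complement and using $E\hat v_1=0$ yields $\norm{Ev_0}_2\leq\norm{E}\,s_v$, hence $\sigma s_u\leq\norm{E}s_v+\norm{\Delta v_0}_2$. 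The mirror computation with $(M+\Delta)^*u_0=\sigma v_0+\Delta^*u_0$ and $E^*\hat u_1=0$ gives the symmetric inequality $\sigma s_v\leq\norm{E}s_u+\norm{\Delta^*u_0}_2$.

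Finally I would combine them: with $s=\max(s_u,s_v)$ and $r=\max(\norm{\Delta v_0}_2,\norm{\Delta^*u_0}_2)$, both inequalities give $\sigma s_u,\sigma s_v\leq\norm{E}s+r$, hence $\sigma s\leq\norm{E}s+r$, i.e.\ $s\leq r/(\sigma-\norm{E})$; substituting back the identifications from the first step produces exactly the claimed bound. The one point needing care — and the only place a naive argument goes astray — is to keep the SVD tail $E$ (not the full perturbation $\Delta$) inside the residual: if one instead projects $(M+\Delta)\hat v_1=\hat\sigma_1\hat u_1$ one ends up with $\hat\sigma_1$ in the denominator rather than $\norm{M}-\norm{M+\Delta-\widehat M}$, which is a genuinely different (and not termwise comparable) quantity. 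Using $E$, whose action annihilates $\hat v_1$ and $\hat u_1$, is precisely the device that creates the gap $\norm{M}-\norm{M+\Delta-\widehat M}$; this mirrors the gap hypothesis in Wedin's general theorem, where $M$ being rank one forces all its other singular values to vanish.
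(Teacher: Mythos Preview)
Your argument is correct. The two coupled inequalities $\sigma s_u\leq\norm{E}s_v+\norm{\Delta v_0}_2$ and $\sigma s_v\leq\norm{E}s_u+\norm{\Delta^*u_0}_2$ follow exactly as you describe, and taking the maximum yields the stated bound with the gap $\norm{M}-\norm{M+\Delta-\widehat M}$ in the denominator. Your remark about why one must work with the SVD tail $E$ rather than with $\Delta$ directly is the essential point and is handled cleanly.

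As for comparison with the paper: the paper does not actually supply a proof of this lemma. It is stated as a citation of Wedin's general non-Hermitian $\sin\theta$ theorem specialized to rank one, and is then invoked as a black box in the proofs of Lemma~\ref{lemma:init_w_good_est_supp2} and in Section~\ref{subsec:initds}. Your write-up therefore goes beyond what the paper provides by giving a short self-contained derivation; it is essentially the standard Wedin argument collapsed to the rank-one case, and nothing in it conflicts with how the paper uses the result.
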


\begin{proof}[Proof of Lemma~\ref{lemma:init_w_good_est_supp2}]
Let $M \triangleq \Pi_{\widehat{J}_1} X$ where $X = \lambda u v^*$ with $s_1$-sparse $u \in \bbS^{n_1-1}$ and $s_2$-sparse $v \in \bbS^{n_2-1}$.
Let $\widehat{M}$ denote the best rank-one approximation of $\Pi_{\widehat{J}_1} [\A^*(b)] \Pi_{\widehat{J}_2}$ in the spectral norm.
Let $\Delta \triangleq \Pi_{\widehat{J}_1} [\A^*(b)] \Pi_{\widehat{J}_2} - M$.
Then,
\begin{align*}
\norm{M + \Delta - \widehat{M}}
{} & = \norm{\Pi_{\widehat{J}_1} [\A^*(b)] \Pi_{\widehat{J}_2} - \widehat{M}} \\
{} & \leq \norm{ \Pi_{\widehat{J}_1} [\A^*(b)] \Pi_{\widehat{J}_2} - \Pi_{\widehat{J}_1} X \Pi_{\widehat{J}_2}} \\
{} & = \norm{\Pi_{\widehat{J}_1} [(\A^* \A - \idm)(X)] \Pi_{\widehat{J}_2} + \Pi_{\widehat{J}_1} [\A^*(z)] \Pi_{\widehat{J}_2}} \\
{} & \leq \norm{\Pi_{\widehat{J}_1} [(\A^* \A - \idm)(X)] \Pi_{\widehat{J}_2}} + \norm{\Pi_{\widehat{J}_1} [\A^*(z)] \Pi_{\widehat{J}_2}} \\
{} & \leq \delta \fnorm{X} + \sqrt{1+\delta} \norm{z}_2
\end{align*}
where the second inequality follows from Lemma~\ref{lemma:ds_bndopnormrip} and Lemma~\ref{lemma:ds_bndnoise}.

Similarly, the difference $\Delta$ is bounded in the spectral norm by
\begin{align*}
\norm{\Delta}
{} & = \norm{\Pi_{\widehat{J}_1} [\A^*(b)] \Pi_{\widehat{J}_2} - \Pi_{\widehat{J}_1} X \Pi_{\widehat{J}_2} - \Pi_{\widehat{J}_1} X \Pi_{\widehat{J}_2}^\perp} \\
{} & = \norm{\Pi_{\widehat{J}_1} [(\A^* \A - \idm)(X)] \Pi_{\widehat{J}_2} - \Pi_{\widehat{J}_1} X \Pi_{\widehat{J}_2}^\perp + \Pi_{\widehat{J}_1} [\A^*(z)] \Pi_{\widehat{J}_2}} \\
{} & \leq \norm{\Pi_{\widehat{J}_1} [(\A^* \A - \idm)(X)] \Pi_{\widehat{J}_2}} + \norm{\Pi_{\widehat{J}_1} X \Pi_{\widehat{J}_2}^\perp} + \norm{\Pi_{\widehat{J}_1} [\A^*(z)] \Pi_{\widehat{J}_2}} \\
{} & \leq \delta \fnorm{X} + \fnorm{X} \norm{\Pi_{\widehat{J}_1} u}_2 \norm{\Pi_{\widehat{J}_2}^\perp v}_2 + \sqrt{1+\delta} \norm{z}_2
\end{align*}
where the last step follows from Lemma~\ref{lemma:ds_bndopnormrip} and Lemma~\ref{lemma:ds_bndnoise}.

Note that $\norm{M}$ is rewritten as
\begin{align*}
\norm{M} = \norm{\Pi_{\widehat{J}_1} X}
= \norm{\Pi_{\widehat{J}_1} (\lambda u v^*)}
= \lambda \norm{\Pi_{\widehat{J}_1} u}_2 \
= \fnorm{X} \norm{\Pi_{\widehat{J}_1} u}_2.
\end{align*}

Since $v_0$ is the right singular vector of $\widehat{M}$ and $\R(M^*) = \R(X^*)$, by Lemma~\ref{lemma:sintheta}, we have
\begin{align*}
\sin (\R(X^*),\R(v_0))
{} & \leq \frac{\norm{\Delta}}{\norm{M} - \norm{M + \Delta - \widehat{M}}} \\
{} & \leq \frac{\fnorm{X} \norm{\Pi_{\widehat{J}_1} u}_2 \norm{\Pi_{\widehat{J}_2}^\perp v}_2 + \delta \fnorm{X} + \sqrt{1+\delta} \norm{z}_2}{\fnorm{X} \norm{\Pi_{\widehat{J}_1} u}_2 - \delta \fnorm{X} - \sqrt{1+\delta} \norm{z}_2}.
\end{align*}
Applying (\ref{eq:nuineq}) to this result, we get the assertion in Lemma~\ref{lemma:init_w_good_est_supp2}.
\end{proof}

\section{Proof of Lemma~\ref{lemma:est_supp_dspca}}
\label{sec:proof:lemma:est_supp_dspca}

Let $J_1$ and $J_2$ denote the support of $u$ and $v$, respectively.
By the definition of $(\widehat{J}_1,\widehat{J}_2)$ in (\ref{eq:initbydspca}), we have
\begin{equation}
\norm{\Pi_{\widehat{J}_1} [\A^*(b)] \Pi_{\widehat{J}_2}} \geq \norm{\Pi_{J_1} [\A^*(b)] \Pi_{J_2}}.
\label{eq:proof:lemma:est_supp_dspca:ineq}
\end{equation}

First, the right-hand-side of (\ref{eq:proof:lemma:est_supp_dspca:ineq}) is bounded from below by
\begin{equation}
\begin{aligned}
\norm{\Pi_{J_1} [\A^*(b)] \Pi_{J_2}}
{} & = \norm{\Pi_{J_1} [X + (\A^* \A - \idm) (X) + \A^*(z)] \Pi_{J_2}} \\
{} & \geq \norm{X} - \norm{\Pi_{J_1} [(\A^* \A - \idm) (X)] \Pi_{J_2}} - \norm{\Pi_{J_1} [\A^*(z)] \Pi_{J_2}} \\
{} & \geq (1-\delta) \lambda - \sqrt{1+\delta} \norm{z}_2
\end{aligned}
\label{eq:proof:lemma:est_supp_dspca:rhs}
\end{equation}
where the first inequality follows from Lemma~\ref{lemma:ds_bndopnormrip} and Lemma~\ref{lemma:ds_bndnoise}.

Next, the left-hand side of (\ref{eq:proof:lemma:est_supp_dspca:ineq}) is bounded from above by
\begin{align*}
\norm{\Pi_{\widehat{J}_1} [\A^*(b)] \Pi_{\widehat{J}_2}}
{} & \leq \underbrace{ \norm{\Pi_{\widehat{J}_1} [(\A^* \A) (\Pi_{\widehat{J}_1} X \Pi_{\widehat{J}_2})] \Pi_{\widehat{J}_2}} }_{ \text{(a)} } \\
{} & + \underbrace{ \norm{\Pi_{\widehat{J}_1} [(\A^* \A) (X - \Pi_{\widehat{J}_1} X \Pi_{\widehat{J}_2})] \Pi_{\widehat{J}_2}} }_{ \text{(b)} }
+ \underbrace{ \norm{\Pi_{\widehat{J}_1} [\A^*(z)] \Pi_{\widehat{J}_2}} }_{ \text{(c)} }.
\end{align*}
where (a) and (b) are further bounded using Lemma~\ref{lemma:ds_bndopnormrip} by
\begin{align*}
{} & \norm{\Pi_{\widehat{J}_1} [(\A^* \A) (\Pi_{\widehat{J}_1} X \Pi_{\widehat{J}_2})] \Pi_{\widehat{J}_2}} \\
{} & = \norm{(\Pi_{\widehat{J}_1} \otimes \Pi_{\widehat{J}_2}) (\A^* \A) (\Pi_{\widehat{J}_1} \otimes \Pi_{\widehat{J}_2}) (X)} \\
{} & \leq (1+\delta) \fnorm{\Pi_{\widehat{J}_1} X \Pi_{\widehat{J}_2}}
\end{align*}
and
\begin{align*}
{} & \norm{\Pi_{\widehat{J}_1} [(\A^* \A) (X - \Pi_{\widehat{J}_1} X \Pi_{\widehat{J}_2})] \Pi_{\widehat{J}_2}} \\
{} & = \norm{(\Pi_{\widehat{J}_1} \otimes \Pi_{\widehat{J}_2}) (\A^* \A) (X - \Pi_{\widehat{J}_1} X \Pi_{\widehat{J}_2})} \\
{} & = \norm{(\Pi_{\widehat{J}_1} \otimes \Pi_{\widehat{J}_2}) (\A^* \A - \idm) (X - \Pi_{\widehat{J}_1} X \Pi_{\widehat{J}_2})} \\
{} & \leq \delta \fnorm{X - \Pi_{\widehat{J}_1} X \Pi_{\widehat{J}_2}} \\
{} & = \delta (\fnorm{X}^2 - \fnorm{\Pi_{\widehat{J}_1} X \Pi_{\widehat{J}_2}}^2)^{1/2}
\end{align*}
respectively, and the noise term (c) is bounded using Lemma~\ref{lemma:ds_bndnoise} by
\[
\norm{\Pi_{\widehat{J}_1} [\A^*(z)] \Pi_{\widehat{J}_2}} \leq \sqrt{1+\delta} \norm{z}_2.
\]
Combining all, the left-hand side of (\ref{eq:proof:lemma:est_supp_dspca:ineq}) is bounded from above by
\begin{equation}
\norm{\Pi_{\widehat{J}_1} [\A^*(b)] \Pi_{\widehat{J}_2}} \leq (1+\delta) \fnorm{\Pi_{\widehat{J}_1} X \Pi_{\widehat{J}_2}} + \delta (\fnorm{X}^2 - \fnorm{\Pi_{\widehat{J}_1} X \Pi_{\widehat{J}_2}}^2)^{1/2} + \sqrt{1+\delta} \norm{z}_2.
\label{eq:proof:lemma:est_supp_dspca:lhs}
\end{equation}

Then, (\ref{eq:proof:lemma:est_supp_dspca:ineq}) is implies
\begin{equation}
(1+\delta) \fnorm{\Pi_{\widehat{J}_1} X \Pi_{\widehat{J}_2}} + \delta (\fnorm{X}^2 - \fnorm{\Pi_{\widehat{J}_1} X \Pi_{\widehat{J}_2}}^2)^{1/2}
\geq (1-\delta) \lambda - 2 \sqrt{1+\delta} \norm{z}_2.
\label{eq:proof:lemma:est_supp_dspca:suff1}
\end{equation}

Applying (\ref{eq:nuineq}) to (\ref{eq:proof:lemma:est_supp_dspca:suff1}), we get another necessary condition for (\ref{eq:proof:lemma:est_supp_dspca:ineq}) given by
\begin{align*}
{} & \frac{1+\delta}{\sqrt{\delta^2 + (1+\delta)^2}} \cdot \frac{\fnorm{\Pi_{\widehat{J}_1} X \Pi_{\widehat{J}_2}}}{\fnorm{X}} \\
{} & + \frac{\delta}{\sqrt{\delta^2 + (1+\delta)^2}} \cdot \frac{(\fnorm{X}^2 - \fnorm{\Pi_{\widehat{J}_1} X \Pi_{\widehat{J}_2}}^2)^{1/2}}{\fnorm{X}} \\
{} & \geq \frac{(1-\delta) - 2 (1+\delta) \nu}{\sqrt{\delta^2 + (1+\delta)^2}}.
\end{align*}

Define $\alpha,\beta \in [0, \pi/2]$ by
\begin{align*}
\alpha \triangleq \sin^{-1} \left( \frac{\fnorm{\Pi_{\widehat{J}_1} X \Pi_{\widehat{J}_2}}}{\fnorm{X}} \right)
\quad \text{and} \quad
\beta \triangleq \sin^{-1} \left( \frac{\delta}{\sqrt{\delta^2 + (1+\delta)^2}} \right).
\end{align*}

Then, using a trigonometric identity, we get
\[
\sin \alpha \cos \beta + \cos \alpha \sin \beta = \sin(\alpha + \beta) \geq \frac{(1-\delta) - 2 (1+\delta) \nu}{\sqrt{\delta^2 + (1+\delta)^2}},
\]
which is rewritten as
\[
\alpha \geq
\sin^{-1} \left( \frac{(1-\delta) - 2 (1+\delta) \nu}{\sqrt{\delta^2 + (1+\delta)^2}} \right)
- \sin^{-1} \left( \frac{\delta}{\sqrt{\delta^2 + (1+\delta)^2}} \right).
\]

Therefore, (\ref{eq:proof:lemma:est_supp_dspca:ineq}) implies
\[
\frac{\fnorm{\Pi_{\widehat{J}_1} X \Pi_{\widehat{J}_2}}}{\fnorm{X}} \geq
\sin \left[ \sin^{-1} \left( \frac{(1-\delta) - 2 (1+\delta) \nu}{\sqrt{\delta^2 + (1+\delta)^2}} \right)
- \sin^{-1} \left( \frac{\delta}{\sqrt{\delta^2 + (1+\delta)^2}} \right) \right].
\]

\section{Proof of Lemma~\ref{lemma:est_supp_dthres}}
\label{sec:proof:lemma:est_supp_dthres}

Let $J_1$ and $J_2$ denote the support of $u$ and of $v$, respectively.
For $\widetilde{J}_1 \subset \widehat{J}_1$, it suffices to show
\begin{equation}
\min_{j \in \widetilde{J}_1} \tnorm{e_j^* [\A^*(b)]}_{s_2} > \max_{j \in [n_1] \setminus J_1} \tnorm{e_j^* [\A^*(b)]}_{s_2}
\label{eq:proof:lemma:est_supp_thres:cond}
\end{equation}
where $e_j \in \bbC^{n_1}$ denotes the $j$th column of the $n_1 \times n_1$ identity matrix.
Let
\[
u_{\min} \triangleq \min_{j \in \widetilde{J}} |[u]_j|.
\]

Then, for $j \in \widetilde{J}_1$, we have
\begin{align*}
\tnorm{e_j^* [\A^*(b)]}_{s_2}
{} & \geq \norm{e_j^* [\A^*(b)] \Pi_{J_2}}_2 \\
{} & = \norm{e_j^* [(\A^*\A - \idm) (X) + X + \A^*(z)] \Pi_{J_2}}_2 \\
{} & \geq \norm{e_j^* X}_2 - \norm{e_j^* [(\A^*\A - \idm) (X)] \Pi_{J_2}}_2 - \norm{e_j^* [\A^*(z)] \Pi_{J_2}}_2 \\
{} & \geq \lambda |u[j]| - \delta \lambda - \sqrt{1+\delta} \norm{z}_2 \\
{} & \geq \lambda \left[ u_{\min} - \delta - (1+\delta) \nu \right]
\end{align*}
where the third inequality follows from Lemma~\ref{lemma:ds_bndopnormrip} and Lemma~\ref{lemma:ds_bndnoise}.

Next, for $j \in [n_1] \setminus J_1$, there exists $J' \subset [n_2]$ with $|J'| \leq s_2$ such that
\begin{align*}
\tnorm{e_j^* [\A^*(b)]}_{s_2}
{} & = \norm{e_j^* [\A^*(b)] \Pi_{J'}}_2 \\
{} & = \norm{e_j^* [(\A^*\A - \idm) (X) + X + \A^*(z)] \Pi_{J'}}_2 \\
{} & \leq \lambda \left[ \delta + (1+\delta) \nu \right]
\end{align*}
where the last step follows by Lemma~\ref{lemma:ds_bndopnormrip}, Lemma~\ref{lemma:ds_bndnoise}, and $e_j^* X = 0$.

Therefore, a sufficient condition for (\ref{eq:proof:lemma:est_supp_thres:cond}) is given by
\[
u_{\min} > 2 \delta + 2 (1+\delta) \nu.
\]

\section{Proof of Lemma~\ref{lemma:includeJtilde1rankr}}
\label{sec:pf:lemma:includeJtilde1rankr}
A sufficient condition for $\widetilde{J}_1 \subset \widehat{J}_1$ is given by
\begin{equation}
\label{eq:lemma:includeJtilde1rankr:res}
\min_{j \in \widetilde{J}_1} \tnorm{e_j^*[\A^*(b)]}_{s_2} > \min_{j \in [n_1] \setminus J_1} \tnorm{e_j^*[\A^*(b)]}_{s_2}.
\end{equation}
Therefore, it suffices to show \eqref{eq:lemma:includeJtilde1rankr:res} holds.

We first derive a lower bound on the left-hand-side of \eqref{eq:lemma:includeJtilde1rankr:res}.
Let $J_2 \subset [n_2]$ denote the set of the indices of the nonzero rows of $V$.
For any $j \in \widetilde{J}_1$, we have
\begin{align*}
\tnorm{e_j^* [\A^*(b)]}_{s_2}
{} & \geq \norm{e_j^* [\A^*(b)] \Pi_{J_2}}_2 \\
{} & = \norm{e_j^* [(\A^*\A - \id)(X) + X + \A^*(z)] \Pi_{J_2}}_2 \\
{} & \geq \norm{e_j^* X \Pi_{J_2}}_2 - \norm{e_j^* [(\A^*\A - \id)(X)] \Pi_{J_2}}_2 - \norm{e_j^* [\A^*(z)] \Pi_{J_2}}_2 \\
{} & \geq \norm{e_j^* X \Pi_{J_2}}_2 - \delta \norm{X} - \sqrt{1+\delta} \norm{z}_2 \\
{} & = \norm{e_j^* X}_2 - \delta \norm{X} - \sqrt{1+\delta} \norm{z}_2,
\end{align*}
where the fourth step follows from Lemmas~\ref{lemma:rip_spectral}, \ref{lemma:ds_bndopnormrip}, and \ref{lemma:ds_bndnoise}.

Next, we drive an upper bound on the right-hand-side of \eqref{eq:lemma:includeJtilde1rankr:res}.
For any $j \in [n_1] \setminus J_1$, there exists $J' \subset [n_2]$ with $|J'| \leq s_2$ such that
\begin{align*}
\tnorm{e_j^* [\A^*(b)]}_{s_2}
{} & = \norm{e_j^* [\A^*(b)] \Pi_{J'}}_2 \\
{} & = \norm{e_j^* [(\A^*\A - \id)(X) + X + \A^*(z)] \Pi_{J'}}_2 \\
{} & \leq \norm{e_j^* X \Pi_{J'}}_2 + \norm{e_j^* [(\A^*\A - \id)(X)] \Pi_{J'}}_2 + \norm{e_j^* [\A^*(z)] \Pi_{J'}}_2 \\
{} & \leq \delta \norm{X} + \sqrt{1+\delta} \norm{z}_2,
\end{align*}
where the last step follows from Lemmas~\ref{lemma:rip_spectral}, \ref{lemma:ds_bndopnormrip}, \ref{lemma:ds_bndnoise}, and $e_j^* X = 0$.

By these bounds and \eqref{eq:lemma:includeJtilde1rankr:res},
we get a sufficient condition for \eqref{eq:lemma:includeJtilde1rankr:res} given by
\begin{equation}
\label{eq:lemma:includeJtilde1rankr:cond2}
2 \left(\delta \norm{X} + \sqrt{1+\delta} \norm{z}_2 \right) < \min_{j \in \widetilde{J}_1} \norm{e_j^* X}_2.
\end{equation}
The right-hand-side of \eqref{eq:lemma:includeJtilde1rankr:cond2} is lower-bounded by
\begin{equation}
\label{eq:lemma:includeJtilde1rankr:cond2:rhs}
\min_{j \in \widetilde{J}_1} \norm{e_j^* X}_2
\geq \sigma_r(X) \min_{j \in \widetilde{J}_1} \norm{e_j^* U}_2
\geq \sigma_r(X) \sigma_r(U^* \Pi_{\widetilde{J}_1}).
\end{equation}
By \eqref{eq:lemma:includeJtilde1rankr:cond2} and \eqref{eq:lemma:includeJtilde1rankr:cond2:rhs} with $\norm{X} \leq \kappa \sigma_r(X)$,
we show that \eqref{eq:lemma:includeJtilde1rankr:cond2} is implied by
\begin{equation}
\label{eq:lemma:includeJtilde1rankr:cond3}
2 \kappa \left[\delta + \sqrt{1+\delta} \frac{\norm{z}_2}{\norm{X}} \right] < \sigma_r(U^* \Pi_{\widetilde{J}_1}).
\end{equation}
Furthermore, by the rank-$2r$ and doubly $(3s_1,3s_2)$-RIP of $\A$,
\begin{equation}
\label{eq:lemma:includeJtilde1rankr:cond3:noise}
\frac{\norm{z}_2}{\norm{X}}
\leq \frac{\fnorm{X}}{\norm{X}} \cdot \frac{\norm{z}_2}{\fnorm{X}}
\leq \frac{\fnorm{X}}{\norm{X}} \cdot \frac{\sqrt{1+\delta} \norm{z}_2}{\norm{A(X)}_2} \leq \sqrt{1+\delta} \nu,
\end{equation}
where the last step follows from \eqref{eq:snrcond_rankr}.
By \eqref{eq:lemma:includeJtilde1rankr:cond3:noise} and \eqref{eq:lemma:includeJtilde1rankr:cond3},
we show that \eqref{eq:lemma:includeJtilde1rankr:cond3} implies \eqref{eq:lemma:includeJtilde1rankr:cond}.
This completes the proof.

\section{Proof of Lemma~\ref{lemma:includeJtilde2rankr}}
\label{sec:pf:lemma:includeJtilde2rankr}
A sufficient condition for $\widetilde{J}_2 \subset \widehat{J}_2$ is given by
\begin{equation}
\label{eq:lemma:includeJtilde2rankr:res}
\min_{j \in \widetilde{J}_2} \norm{\Pi_{\widehat{J}_1} [\A^*(b)] e_j}_2
> \min_{j \in [n_2] \setminus J_2} \norm{\Pi_{\widehat{J}_1} [\A^*(b)] e_j}_2.
\end{equation}
Therefore, it suffices to show \eqref{eq:lemma:includeJtilde2rankr:res} holds.

We first derive a lower bound on the left-hand-side of \eqref{eq:lemma:includeJtilde2rankr:res}.
For any $j \in \widetilde{J}_2$, we have
\begin{align*}
\norm{\Pi_{\widehat{J}_1} [\A^*(b)] e_j}_2
{} & = \norm{\Pi_{\widehat{J}_1} [(\A^*\A - \id)(X) + X + \A^*(z)] e_j}_2 \\
{} & \geq \norm{\Pi_{\widehat{J}_1} X e_j}_2 - \norm{\Pi_{\widehat{J}_1} [(\A^*\A - \id)(X)] e_j}_2 - \norm{\Pi_{\widehat{J}_1} [\A^*(z)] e_j}_2 \\
{} & \geq \norm{\Pi_{\widehat{J}_1} X e_j}_2 - \delta \norm{X} - \sqrt{1+\delta} \norm{z}_2,
\end{align*}
where the last step follows from Lemmas~\ref{lemma:rip_spectral}, \ref{lemma:ds_bndopnormrip}, and \ref{lemma:ds_bndnoise}.

Next, we drive an upper bound on the right-hand-side of \eqref{eq:lemma:includeJtilde2rankr:res}.
For any $j \in [n_2] \setminus J_2$, there exists $J' \subset [n_1]$ with $|J'| \leq s_1$ such that
\begin{align*}
\norm{\Pi_{\widehat{J}_1} [\A^*(b)] e_j}_2
{} & = \norm{\Pi_{\widehat{J}_1} [(\A^*\A - \id)(X) + X + \A^*(z)] e_j}_2 \\
{} & \leq \norm{\Pi_{\widehat{J}_1} X e_j}_2 + \norm{\Pi_{\widehat{J}_1} [(\A^*\A - \id)(X)] e_j}_2 + \norm{\Pi_{\widehat{J}_1} [\A^*(z)] e_j}_2 \\
{} & \leq \delta \norm{X} + \sqrt{1+\delta} \norm{z}_2,
\end{align*}
where the last step follows from Lemmas~\ref{lemma:rip_spectral}, \ref{lemma:ds_bndopnormrip}, \ref{lemma:ds_bndnoise}, and $X e_j = 0$.

By these bounds and \eqref{eq:lemma:includeJtilde2rankr:res},
we get a sufficient condition for \eqref{eq:lemma:includeJtilde2rankr:res} given by
\begin{equation}
\label{eq:lemma:includeJtilde2rankr:cond2}
2 \left(\delta \norm{X} + \sqrt{1+\delta} \norm{z}_2 \right) \leq \min_{j \in \widetilde{J}_2} \norm{\Pi_{\widehat{J}_1} X e_j}_2.
\end{equation}
The right-hand-side of \eqref{eq:lemma:includeJtilde2rankr:cond2} is lower-bounded by
\begin{equation}
\label{eq:lemma:includeJtilde2rankr:cond2:rhs}
\min_{j \in \widetilde{J}_2} \norm{\Pi_{\widehat{J}_1} X e_j}_2
\geq \sigma_r(X) \sigma_r(U^* \Pi_{\widehat{J}_1}) \min_{j \in \widetilde{J}_2} \norm{e_j^* V}_2
\geq \sigma_r(X) \sigma_r(U^* \Pi_{\widehat{J}_1}) \sigma_r(V^* \Pi_{\widetilde{J}_2}).
\end{equation}

Similarly to the proof of Lemma~\ref{lemma:includeJtilde2rankr},
we show that \eqref{eq:lemma:includeJtilde1rankr:cond} implies \eqref{eq:lemma:includeJtilde2rankr:res}.
This completes the proof.

\section{Proof of Lemma~\ref{lemma:init_w_good_est_supp2_rankr}}
\label{sec:pf:lemma:init_w_good_est_supp2_rankr}
To prove Lemma~\ref{lemma:init_w_good_est_supp2_rankr}, we use the non-Hermitian $\sin\theta$ theorem \cite[pp. 102--103]{Wed1972perturbation}.

Let $X = U \Lambda V^*$ denote the singular value decomposition of $X$.
Then, $U$ is row $s_1$-sparse and $V$ is row $s_2$-sparse.
Let $M = \Pi_{\widehat{J}_1} X$.
Let $\widehat{M}$ be the best rank-$r$ approximation of $\Pi_{\widehat{J}_1} [\A^*(b)] \Pi_{\widehat{J}_2}$ in the spectral norm.
Let $\Delta \triangleq \Pi_{\widehat{J}_1} [\A^*(b)] \Pi_{\widehat{J}_2} - M$.
Then,
\begin{align*}
\norm{M + \Delta - \widehat{M}}
{} & = \norm{\Pi_{\widehat{J}_1} [\A^*(b)] \Pi_{\widehat{J}_2} - \widehat{M}} \\
{} & \leq \norm{ \Pi_{\widehat{J}_1} [\A^*(b)] \Pi_{\widehat{J}_2} - \Pi_{\widehat{J}_1} X \Pi_{\widehat{J}_2}} \\
{} & = \norm{\Pi_{\widehat{J}_1} [(\A^* \A - \idm)(X)] \Pi_{\widehat{J}_2} + \Pi_{\widehat{J}_1} [\A^*(z)] \Pi_{\widehat{J}_2}} \\
{} & \leq \norm{\Pi_{\widehat{J}_1} [(\A^* \A - \idm)(X)] \Pi_{\widehat{J}_2}} + \norm{\Pi_{\widehat{J}_1} [\A^*(z)] \Pi_{\widehat{J}_2}} \\
{} & \leq \delta \norm{X} + \sqrt{1+\delta} \norm{z}_2,
\end{align*}
where the last step follows from Lemmas~\ref{lemma:rip_spectral} and \ref{lemma:ds_bndnoise}.

Similarly, the difference $\Delta$ is upper-bounded in the spectral norm by
\begin{align*}
\norm{\Delta}
{} & = \norm{\Pi_{\widehat{J}_1} [\A^*(b)] \Pi_{\widehat{J}_2} - \Pi_{\widehat{J}_1} X \Pi_{\widehat{J}_2} - \Pi_{\widehat{J}_1} X \Pi_{\widehat{J}_2}^\perp} \\
{} & = \norm{\Pi_{\widehat{J}_1} [(\A^* \A - \idm)(X)] \Pi_{\widehat{J}_2} - \Pi_{\widehat{J}_1} X \Pi_{\widehat{J}_2}^\perp + \Pi_{\widehat{J}_1} [\A^*(z)] \Pi_{\widehat{J}_2}} \\
{} & \leq \norm{\Pi_{\widehat{J}_1} [(\A^* \A - \idm)(X)] \Pi_{\widehat{J}_2}} + \norm{\Pi_{\widehat{J}_1} X \Pi_{\widehat{J}_2}^\perp} + \norm{\Pi_{\widehat{J}_1} [\A^*(z)] \Pi_{\widehat{J}_2}} \\
{} & \leq \delta \norm{X} + \norm{\Pi_{\widehat{J}_1} X \Pi_{\widehat{J}_2}^\perp} + \sqrt{1+\delta} \norm{z}_2,
\end{align*}
where the last step follows from Lemma~\ref{lemma:rip_spectral} and Lemma~\ref{lemma:ds_bndnoise}.

The minimum singular value of $M$ is lower-bounded by
\[
\sigma_r(M) \geq \sigma_r(\Lambda) \sigma_r(\Pi_{\widehat{J}_1} U).
\]

Since $V_0$ consists of the $r$ singular vectors of $\widehat{M}$ and $\R(M^*) = \R(X^*)$, by Lemma~\ref{lemma:sintheta}, we have
\begin{equation}
\label{pf:lemma:init_w_good_est_supp2_rankr:res}
\begin{aligned}
\sin (\R(X^*),\R(V_0))
{} & \leq \frac{\norm{\Delta}}{\sigma_r(M) - \norm{M + \Delta - \widehat{M}}} \\
{} & \leq \frac{\delta \norm{X} + \norm{\Pi_{\widehat{J}_1}  X \Pi_{\widehat{J}_2}^\perp} + \sqrt{1+\delta} \norm{z}_2}{\sigma_r(\Lambda) \sigma_r(\Pi_{\widehat{J}_1} U) - \delta \norm{X} - \sqrt{1+\delta} \norm{z}_2} \\
{} & \leq \frac{\delta \norm{X} + \norm{\Pi_{\widehat{J}_1} U} \norm{X} \norm{\Pi_{\widehat{J}_2}^\perp V} + \norm{X} (1+\delta) \nu}{\sigma_r(\Lambda) \sigma_r(\Pi_{\widehat{J}_1} U) - \delta \norm{X} - \norm{X} (1+\delta) \nu} \\
{} & = \frac{\delta + \norm{\Pi_{\widehat{J}_1} U} \norm{\Pi_{\widehat{J}_2}^\perp V} + (1+\delta) \nu}{\sigma_r(\Pi_{\widehat{J}_1} U)/\kappa - \delta - (1+\delta) \nu},
\end{aligned}
\end{equation}
where the third inequality follows from \eqref{eq:snrcond_rankr} and the rank-$2r$ and doubly $(3s_1,3s_2)$-sparse RIP of $\A$.

Finally, we derive an upper bound on $\norm{\Pi_{\widehat{J}_2}^\perp V}$.
Since $\widetilde{J}_2 \subset \widehat{J}_2$, we have
\begin{equation}
\label{pf:lemma:init_w_good_est_supp2_rankr:ub}
\norm{\Pi_{\widehat{J}_2}^\perp V} \leq \norm{\Pi_{\widetilde{J}_2}^\perp V}.
\end{equation}
Thus we will derive an upper bound on $\norm{\Pi_{\widetilde{J}_2}^\perp V}$.
By the relations between the principal angles between the two $r$-dimensional subspaces $\R(V)$ and $\R(\Pi_{\widetilde{J}_2})$, we have
\begin{equation}
\label{pf:lemma:init_w_good_est_supp2_rankr:id}
\sigma_r^2(P_{\R(V)} P_{\R(\Pi_{\widetilde{J}_2})}) + \norm{P_{\R(V)} P_{\R(\Pi_{\widetilde{J}_2})^\perp}}^2 = 1.
\end{equation}
Since the projections are expressed as
\[
P_{\R(V)} = V V^* \quad \text{and} \quad P_{\R(\Pi_{\widetilde{J}_2})} = \Pi_{\widetilde{J}_2},
\]
\eqref{pf:lemma:init_w_good_est_supp2_rankr:id} is equivalently rewritten to
\[
\sigma_r^2(VV^* \Pi_{\widetilde{J}_2}) + \norm{VV^* \Pi_{\widetilde{J}_2}^\perp}^2 = 1.
\]
Then it follows that
\begin{equation}
\label{pf:lemma:init_w_good_est_supp2_rankr:id2}
\norm{\Pi_{\widetilde{J}_2}^\perp V}
= \norm{VV^* \Pi_{\widetilde{J}_2}^\perp}
= \sqrt{1 - \sigma_r^2(VV^* \Pi_{\widetilde{J}_2})}
= \sqrt{1 - \sigma_r^2(V^* \Pi_{\widetilde{J}_2})}.
\end{equation}
Applying \eqref{pf:lemma:init_w_good_est_supp2_rankr:ub} and \eqref{pf:lemma:init_w_good_est_supp2_rankr:id2}
to \eqref{pf:lemma:init_w_good_est_supp2_rankr:res} completes the proof.




\bibliographystyle{IEEEtran}


\end{document}